\newcommand{\N}{\mathbb{N}}
\newcommand{\R}{\mathbb{R}}
\newcommand{\F}{\mathcal{F}}
\newcommand{\Omikron}{\mathcal{O}}
\newcommand{\eps}{\varepsilon}
\newcommand{\beq}{\begin{equation}}
\newcommand{\eeq}{\end{equation}}
\newcommand{\beqs}{\begin{equation*}}
\newcommand{\eeqs}{\end{equation*}}
\newcommand{\bal}{\begin{align}}
\newcommand{\eal}{\end{align}}
\newcommand{\bals}{\begin{align*}}
\newcommand{\eals}{\end{align*}}
\newcommand{\ld}{,\ldots,}
\newcommand{\bfx}{\mathbf{x}}
\newcommand{\bfb}{\mathbf{b}}
\newcommand{\bfM}{\mathbf{M}}
\newcommand{\bfN}{\mathbf{N}}
\newcommand{\bfa}{\mathbf{a}}
\newcommand{\bfe}{\mathbf{e}}
\newcommand{\bfs}{\mathbf{s}}
\newcommand{\bfR}{\mathbf{R}}
\newcommand{\bfA}{\mathbf{A}}
\newcommand{\bfr}{\mathbf{r}}
\newcommand{\bfd}{\mathbf{d}}
\newcommand{\bfv}{\mathbf{v}}
\newcommand{\bfu}{\mathbf{u}}
\newcommand{\bfw}{\mathbf{w}}
\newcommand{\bfI}{\mathbf{I}}
\newcommand{\bfL}{\mathbf{\Lambda}}
\newcommand{\bfG}{\mathbf{\Gamma}}
\newcommand{\Imax}{I^{\max}}
\newcommand{\Rgreat}{\mathbf{R}_{\rm great}}
\newcommand{\Rsmall}{\mathbf{R}_{\rm small}}
\newcommand{\rsmall}{\mathbf{r}_{\rm small}}
\newcommand{\ssmall}{\mathbf{s}_{\rm small}}
\newcommand{\sgreat}{\mathbf{s}_{\rm great}}
\newcommand{\rgreat}{\mathbf{r}_{\rm great}}
\newcommand{\bzero}{\mathbf{0}}
\newcommand{\bfMd}{\mathbf{M}^{\bfd}}
\newcommand{\bfMs}{\mathbf{M}^{\bfs}}
\newcommand{\diag}{{\rm diag}}
\newcommand{\nus}{\nu^{\bfs}}
\newcommand{\nud}{\nu^{\bfd}}
\def\bi{\begin{itemize}}
\def\ei{\end{itemize}}
\newtheorem{theorem}{THEOREM}
\newtheorem{proposition}{PROPOSITION}
\newtheorem{lemma}{LEMMA}
\newtheorem{definition}{DEFINITION}
\newtheorem{assumption}{ASSUMPTION}
\newtheorem{example}{EXAMPLE}
\newenvironment{algorithm}[1]
  {\inneralgorithm}
  {\endinneralgorithm}
\newcounter{remark}
\begin{document}

\title{Valuation Algorithms for Structural Models of Financial Interconnectedness}
\author{Johannes Hain and Tom Fischer 
\thanks{Institute of Mathematics, University of Wuerzburg, Am Hubland, 97074 Wuerzburg, Germany.
Tel: +49 931 31 84869. E-mail address of corresponding author: {\tt johannes.hain@uni-wuerzburg.de}}
\thanks{The authors thank Roger Nussbaum for background information regarding a fixed point problem.
}\\
{University of Wuerzburg}} 
\date{\today}
\maketitle

\begin{abstract}
Much research in systemic risk is focused on default contagion. While this demands
an understanding of valuation, fewer articles specifically deal with the existence, the uniqueness, and 
the computation of equilibrium prices in structural models of interconnected
financial systems. However, beyond contagion research, these topics are also essential for risk-neutral pricing. 
In this article, we therefore study and compare valuation algorithms in the standard model of
debt and equity cross-ownership which has crystallized in the work of several authors over the past one
and a half decades. Since known algorithms have potentially infinite runtime,
we develop a class of new algorithms, which find exact solutions in finitely
many calculation steps. A simulation study for a range of financial system designs allows us to derive
conclusions about the efficiency of different numerical methods under different system parameters.
\end{abstract}

\noindent{\bf Key words:} 
Counterparty risk, financial interconnectedness, financial networks, 
numerical asset valuation, structural model, systemic risk.\\

\noindent{\bf JEL Classification:} G12, G13, G32, G33\\

\noindent{\bf MSC2010:} 91B24, 91B25, 91G20, 91G40, 91G50, 91G60

\pagebreak
\tableofcontents

\section{Introduction}

Since the turn of the millennium, research interest in systemic financial risk has
steadily grown, with a noticeable pick-up in the number of publications over the past five
years. One main field of interest is default contagion, see for instance the works
of \citet{acemoglu13}, \citet{elliott13}, \citet{gai11} and \citet{nier07},
or \citet{staum12} for a survey. Many publications regarding systemic risk refer to the seminal
work of \citet{eisenberg01}, who were the first to structurally model financial systems in which
firms can hold each other's financial obligations as assets under the assumption of limited liability. 
However, the core idea behind
the model of \citeauthor{eisenberg01}, which can be interpreted as a multi-firm extension of the
\citet{merton74} model where cross-holdings of zero-coupon debt between members of the
system is allowed, received somewhat less attention by the research community. The main difference
between Eisenberg and Noe and the standard multi-firm Merton model is that prices at maturity are
not trivially determined since the value of one firm's equity
or debt may depend on the value of the debt of any other firm in the system.
\citet{eisenberg01} gave conditions under which only one equilibrium solution
exists at maturity. Together with a finite numerical algorithm that they provided, the
model could not only be used for default contagion research, but also for risk-neutral
no-arbitrage valuation under financial interconnectedness \citep[see also][]{fischer14}.
\citet{elsinger09} generalized the Eisenberg and Noe setup by also including cross-holdings in
equity, and by allowing a seniority structure of the liabilities. A numerical algorithm was provided,
however, a finite number of steps to the equilibrium price vector could not be guaranteed anymore. 

Already in \citeyear{suzuki02}, Teruyoshi Suzuki had -- unbeknown to him -- generalized the \citeauthor{eisenberg01} 
setup to the situation where debt of one single seniority and equity could be cross-owned within a
financial system (Suzuki, 2002). Unlike \citet{eisenberg01} and \citet{elsinger09}, 
\citeauthor{suzuki02}, who had the clear intention of generalizing \citet{merton74}, provided a Picard Iteration
as the numerical means of calculating price equilibria. In a further generalization of Suzuki's and
Elsinger's work, \citet{fischer14} extended the structural model of interconnectedness
to the case where liabilities could be derivatives in the sense of a dependence on other
liabilities or equities in the system. As in \citet{suzuki02}, the numerical procedure provided
to solve the liquidation value equations at maturity was the Picard Iteration. However, unfortunately, a
Picard Iteration cannot warrant an exact solution in finitely many steps.

So, while there exists a small but growing amount of research on the existence and the uniqueness of price 
equilibria in systems with financial interconnectedness, the provided algorithms mainly reflect the individual
authors' particular approach to the problem. For instance, there also exists a publication by 
\citet{gourieroux13} which mentions that, in the Elsinger model with two debt seniorities but
no equity cross-holdings, a simplex method can be applied. However, comparative studies of the different
methods seems to be absent from the existing literature. Furthermore, at present, no numerical
algorithm for the
setup with cross-holdings of equity and one seniority class of debt \citep{suzuki02,elsinger09, gourieroux12} is
known that reaches the exact solution in a finite number of calculation steps. 

For these reasons, the article at hand has three main objectives. First,
we want to provide an overview of the already existing valuation algorithms by unifying
notation and by embedding them in one general model framework. Second, we provide a
new type of algorithm which is a hybrid of Eisenberg and Noe's and Elsinger's approach that
has improved convergence properties.
Third, we introduce a whole range of algorithm versions which are based on the three
different types -- namely Picard, Elsinger, and Hybrid -- which will reach the exact solution
(if existing and unique) in a finite number of calculation steps.
Introducing these new algorithms, we show that for the three types of algorithms there always
exists an increasing and a decreasing version -- depending on a properly chosen (and explicitly
given) starting point. Furthermore, we use default set techniques and linearization techniques
to achieve finiteness. A simulation study finally allows to draw some conclusions about the
efficiency of the presented numerical algorithms with respect to model parameters such as system
size.

The structure of this paper is as follows. In the next section, we will establish the model and
necessary assumptions for a unique solution of the financial system. The existing valuation methods
of \citet{suzuki02} and \citet{elsinger09} are presented in Section \ref{sec:iterative_algo}, 
where a hybrid version of the algorithms of \citet{eisenberg01} and \citet{elsinger09} is developed as well. 
The algorithms of this section are all non-finite. 
In the fourth section, we introduce a new class of valuation algorithms based on default set techniques that find solutions in finite time.
A simulation study in Section \ref{sec:simulation} compares the runtimes of the different 
algorithms for different classes of financial systems. 
In Section \ref{sec:summary}, we conclude. A technical appendix follows.

\section{Notation and Model Assumptions}\label{sec:notation}

For two matrices $\bfM = (M_{ij})_{i,j=1\ld n}\in\R^{n\times n}$ and $\bfN = (N_{ij})_{i,j=1\ld n}\in\R^{n\times n}$ we write 
$\bfM\ge\bfN$ if $M_{ij}\ge N_{ij}$ for all $i,j\in\{1\ld n\}$ and $\bfM>\bfN$ if $M_{ij}>N_{ij}$ for at least one pair $(i,j)$. 
For two vectors $\bfu = (u_1\ld u_n)^t\in\R^n$ and $\bfv = (v_1\ld v_n)^t\in\R^n$ the definition of $\bfu\ge\bfv$ and $\bfu > \bfv$ is analogous 
to the conventions for matrices above. 
A matrix $\bfM\in\R^{n\times n}$ is said to be \emph{left substochastic} if $M_{ij}\ge 0$ for all $i,j\in\{1\ld n\}$ 
and if $\sum_{i=1}^nM_{ij}\le 1$ for all $j\in\{1\ld n\}$. 
The symbol $\bfI_n$ is used for the $n\times n$-identity matrix and $\bzero_n$ is used for a (column) vector of length $n$ 
that contains only zeros. 
Additionally, $\bzero_{n\times n}$ stands for an $(n\times n)$-matrix with only zero entries. 
For a vector $\bfu\in\R^n$, the expression $\diag(\bfu\le\bzero_n)$ stands for an $(n\times n)$-diagonal matrix 
where the $i$-th entry is 1 if $u_i\le0$ and 0 otherwise, i.e.
\begin{equation}
\diag(\bfu\le\bzero_n)  = \begin{cases} 1 ,& \text{for $i=j$ and $u_i\le0$}, \\
                                        0 ,& \text{else}. \end{cases}
\end{equation}
All operations such as the minimum, $\min\{\cdot\}$, the maximum, $\max\{\cdot\}$, or the positive part $(\cdot)^+$ 
are applied element-wise to vectors and matrices. 
The norm in this paper is the $\ell^1$-norm on $\R^n$ defined as
\begin{equation}
\|\bfx\| := \|\bfx\|_1 = \sum_{i=1}^n |x_i| \quad \text{for $\bfx\in\R^n$}.
\end{equation}
The corresponding norm for a left substochastic matrix $\bfM\in\R^{n\times n}$ is given by
\begin{equation}
\|\bfM\| := \|\bfM\|_1 = \max_{\|\bfx\|=1}\|\bfM\bfx\|_1 = \max_j \sum_{i=1}^nM_{ij} 
\le 1,
\end{equation}
meaning that $\|\bfM\|$ is the maximum of the column sums. One easily can show that $\|\bfM\bfx\|\le\|\bfM\|\|\bfx\|$.

We consider a system of $n$ financial entities, and denote $\mathcal N=\{1\ld n\}$. 
In the following these entities are simply called ``firms''. 
Each firm owns exogenous assets, that are defined in the next step.

\begin{definition}\label{def:ex_assets}
Let $a_i\ge0$ denote the market value of the \emph{exogenous assets} held by firm $i$. 
As the name implies, these assets are priced outside the considered system in the sense that 
the capital structure of the $n$ firms has no influence on the pricing mechanism of such an asset. 
By $\bfa=(a_1\ld a_n)^t\in(\R_0^+)^n$ we denote the (column) vector of the exogenous assets. 
\end{definition}

Moreover, we assume that the firms have outstanding liabilities with a nominal value at maturity of $d_i$ for each firm $i$. 
These liabilities are summarized in the vector $\bfd\in(\R_0^+)^n$. 
In our framework we assume that the entries of $\bfd$ are constant. 
Since it is assumed that the exogenous assets' prices are given by the constant vector $\bfa$, 
the results of this paper presented in the Sections \ref{sec:notation} to \ref{sec:def-set-algo} also hold 
if $\bfd$ depends on $\bfa$, i.e. if $\bfd=\bfd(\bfa)$.
However, for the remainder, we will write $\bfd$ for convenience. 
This definition of the liability vector allows the interpretation that the $d_i$  are simple loans or zero coupon bonds 
since they are not derivatives that can depend on the other assets within the system.
The case of constant liabilities is used in most existing publications in this field, 
see for example \citet{suzuki02}, \citet{gourieroux12} and \citet{elsinger09}. 
The more general case in which $\bfd$ depends for example on the endogenous assets, 
is also treated in the literature \citep[see][]{fischer14}.

To take the interconnectedness of the firms into account, we allow that each firm can own a fraction of the liabilities of the other firms.
To formalize these possible cross-holdings, we use ownership matrices. 

\begin{definition}\label{def:xos_matrix}
The left substochastic matrix $\bfMd\in\R^{n\times n}$ in which the entry $0\le M^{\bfd}_{ij}\le 1$ denotes the fraction 
that firm $i$ owns of the liability of firm $j$ is called \emph{debt ownership matrix}. 
Since no firm is allowed to hold liabilities against themselves, we assume $M^{\bfd}_{ii}=0$ for all $i\in\mathcal N$. 
The entries $M^{\bfs}_{ij}$ of the (left substochastic) \emph{equity ownership matrix} $\bfMs\in\R^{n\times n}$ 
are defined as the fraction that firm $i$ owns of firm $j$'s equity. 
\end{definition} 

Note that the diagonal entries of $\bfMs$ must not be zero; that means
it is allowed that firm $i$ holds its own shares in which case $M^{\bfs}_{ii}>0$. 
For the debt ownership matrix it is a common convention (cf. \citet{eisenberg01} or \citet{elsinger09}) 
that $M^{\bfd}_{ii}=0$ since a firm cannot have debt obligations to itself. 
The tuple $\F=(\bfa,\bfd,\bfMd,\bfMs)$ is in the following sometimes referred to as the \emph{financial system}. 

Associated with the liability vector, we consider the recovery claim vector $\bfr\in(\R_0^+)^n$. 
The recovery claim vector represents the actual payments of the firms at maturity, 
i.e. in general we have $\bfr^k\le\bfd^k$ since default risk is present. 
The value of the debt claim that firm $i$ has against firm $j$ is hence given by $M^{\bfd}_{ij} r_j$ 
and the total value of firm $i$'s debt claim against the other members of the system is $\sum_{j=1}^nM_{ij}^{\bfd} r_j$.  
Furthermore, denote by $\bfs\in(\R_0^+)^n$ the equity values of the $n$ firms which means 
that the total recovery value of all system-endogenous assets that firm $i$ owns is given by the $i$-th entry of
\begin{equation}
\bfMd\bfr + \bfMs\bfs.
\end{equation}

The basic assumption for the model is that equity is considered to be the residual claim which means that any 
outstanding liability has to be paid off completely before the shareholders receive a positive payment. 
Hence, the equity value $s_i$ of firm $i$ is positive if and only if firm $i$ can fully satisfy all their obligees. 
The Absolute Priority Rule immediately leads to the following \emph{liquidation value equations} 
for the recovery claims and the equities \citep[cf.][]{fischer14}:
\begin{align}
\label{eq:liq_eq_debt}\bfr &= \min\{\bfd, \bfa + \bfMd\bfr + \bfMs\bfs\} \\
\label{eq:liq_eq_equity}\bfs &= (\bfa + \bfMd\bfr + \bfMs\bfs - \bfd)^+,
\end{align}
where the sum in \eqref{eq:liq_eq_debt} contains no $(\cdot)^+$ since Theorem \ref{theo:unique_fp} will show 
that all solutions of this system are non-negative.
A solution for the liquidation value equations \eqref{eq:liq_eq_debt} and \eqref{eq:liq_eq_equity} is hence a fixed point of the mapping 
$\Phi:(\R_0^+)^{2n}\to (\R_0^+)^{2n}$, where $\bfR=(\bfr^t,\bfs^t)^t\in(\R_0^+)^{2n}$ and
\begin{equation}\label{eq:Phi}
\Phi(\bfR) = \Phi\begin{pmatrix}\bfr \\ \bfs \end{pmatrix} = 
\begin{pmatrix} \min\{\bfd, \bfa + \bfMd\bfr + \bfMs\bfs\} \\ (\bfa + \bfMd\bfr + \bfMs\bfs - \bfd)^+ \end{pmatrix}.
\end{equation}
We will sometimes refer to the debt component of $\bfR$ and mean in such cases the first $n$ components of $\bfR$ 
that represent the debt payments of the systems. 
The components $n+1$ to $2n$ of $\bfR$ we also call equity component for the same reasons. 
We are interested in finding the fixed points of $\Phi$, which we will also call solutions of the financial system $\F=(\bfa,\bfd,\bfMd,\bfMs)$. 
Without further constraints it is possible that there exist several fixed points. 
To ensure that the solution is unique, we have to make an additional assumption in which we need another property of an ownership matrix. 

\begin{definition}
An ownership matrix $\bfM\in\R^{n\times n}$ possesses the \emph{Elsinger Property} if 
there exists no subset $\mathcal J \subset \mathcal N$ such that
\begin{equation}
\sum_{i\in\mathcal J} M_{ij} = 1 \quad \text{for all $j\in\mathcal J$}.
\end{equation}
\end{definition}
The name of this property is chosen because \citet{elsinger09} is, by the best knowledge of the authors, 
the first one to use this assumption in the context of ownership matrices and the valuation of systemic risk. 
For our model, we demand that the considered ownership matrices fulfill this property. 

\begin{assumption}\label{assu:holding_mat}
The Elsinger Property holds for the debt and the equity ownership matrices $\bfMd$ and $\bfMs$ .
\end{assumption}

Note that the fact that $\bfMd$ and $\bfMs$ are holding matrices is equivalent with the existence of
$(\bfI_n-\bfMd)^{-1}$ and $(\bfI_n-\bfMs)^{-1}$, as shown by \citet{elsinger09}.
Moreover, Theorem \ref{theo:unique_fp} below will show that Assumption \ref{assu:holding_mat} ensures that there is only one fixed point of $\Phi$. 
To show this, we introduce the two vectors 
\begin{equation}\label{eq:def_Rgreat}
\Rgreat = \begin{pmatrix} \rgreat \\ \sgreat \end{pmatrix} = \begin{pmatrix} \bfd \\ (\bfI_n-\bfMs)^{-1}(\bfa+\bfMd\bfd-\bfd)^+ \end{pmatrix}
\end{equation}
and 
\begin{equation}
\Rsmall = \begin{pmatrix} \rsmall \\ \ssmall \end{pmatrix} = \begin{pmatrix} \min\{\bfd, \bfa\}  \\ (\bfa-\bfd)^+ \end{pmatrix}
\end{equation}
The vector $\Rgreat$ assumes that the debt payments are fully recovered so that in the debt component $\bfr=\bfd$. 
Note that even if for a fixed point $\bfR^*=\left(\begin{smallmatrix}\bfr^* \\ \bfs^*\end{smallmatrix}\right)$ of $\Phi$ it holds that 
$\bfr^*=\bfd$, it must not necessarily hold that $\Rgreat=\bfR^*$. 
The second vector $\Rsmall$ emerges when the liquidation equations \eqref{eq:liq_eq_debt} and \eqref{eq:liq_eq_equity} are applied 
and the ownership structure of liabilities and equities is completely ignored. 
In this case the term $\bfMd\bfr + \bfMs\bfs$ that represents the income of each firm stemming from debt 
and equity cross-ownership is set to zero. 
Hence, the firms only have the exogenous assets $\bfa$ as an income. 
The vector $\Rsmall$ results from applying the mapping $\Phi$ in equation \eqref{eq:Phi} to the vector $\bzero_{2n}$, i.e.
\begin{equation}\label{eq:Rsmall_zero}
\Phi(\bzero_{2n}) = \Phi\begin{pmatrix}\bzero_n\\\bzero_n\end{pmatrix} 
= \begin{pmatrix} \min\{\bfd, \bfa\}  \\ (\bfa-\bfd)^+ \end{pmatrix} = \Rsmall.
\end{equation}

\begin{lemma}\label{lem:Phi_self-mapping}
With the definitions above, it holds that $\Phi([\Rsmall, \Rgreat])\subset[\Rsmall, \Rgreat]$.
\end{lemma}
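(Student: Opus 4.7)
The plan is to exploit the componentwise monotonicity of $\Phi$ on the non-negative orthant. Because $\bfMd$ and $\bfMs$ have non-negative entries, the argument $\bfa + \bfMd\bfr + \bfMs\bfs$ is coordinatewise non-decreasing in $\bfR$, and both scalar operations $x\mapsto\min\{d,x\}$ and $x\mapsto(x-d)^+$ are non-decreasing. Hence $\bfR\le\bfR'$ on $(\R_0^+)^{2n}$ implies $\Phi(\bfR)\le\Phi(\bfR')$. I would record this monotonicity lemma first, since it drives both halves of the proof.

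For the lower inequality $\Phi(\bfR)\ge\Rsmall$, I would observe that $\Rsmall\ge\bzero_{2n}$ by inspection of its defining formula, so every $\bfR\in[\Rsmall,\Rgreat]$ satisfies $\bfR\ge\bzero_{2n}$. Monotonicity of $\Phi$ and equation \eqref{eq:Rsmall_zero} then give $\Phi(\bfR)\ge\Phi(\bzero_{2n})=\Rsmall$ directly.

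For the upper inequality, monotonicity reduces matters to showing $\Phi(\Rgreat)\le\Rgreat$. The debt component is immediate since $\min\{\bfd,\cdot\}\le\bfd=\rgreat$. The equity component requires verifying $(\bfa+\bfMd\bfd+\bfMs\sgreat-\bfd)^+\le\sgreat$. Here I would invoke two facts: first, under Assumption \ref{assu:holding_mat} the matrix $(\bfI_n-\bfMs)^{-1}$ coincides with the Neumann series $\sum_{k\ge 0}\bfMs^k$ and thus has non-negative entries, so in particular $\sgreat\ge\bzero_n$ and $\bfMs\sgreat\ge\bzero_n$; second, the definition of $\sgreat$ rearranges to $\sgreat=(\bfa+\bfMd\bfd-\bfd)^++\bfMs\sgreat$. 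Combining these with the componentwise subadditivity $(u+v)^+\le u^++v^+$ (applied with $v=\bfMs\sgreat\ge\bzero_n$, so that $v^+=v$) yields
\begin{equation*}
(\bfa+\bfMd\bfd+\bfMs\sgreat-\bfd)^+\le(\bfa+\bfMd\bfd-\bfd)^++\bfMs\sgreat=\sgreat,
\end{equation*}
which is exactly the required bound.

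The main obstacle is the equity coordinate of the upper bound; the rest is essentially bookkeeping. This is also the step where the Elsinger Property earns its keep, since it is responsible for the existence and, via the Neumann series representation, the non-negativity of $(\bfI_n-\bfMs)^{-1}$—without which the decomposition of $\sgreat$ used above would not be available.
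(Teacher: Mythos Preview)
Your proof is correct and follows essentially the same route as the paper: both use monotonicity of $\Phi$ together with $\Phi(\bzero_{2n})=\Rsmall$ for the lower bound, and for the upper bound both reduce to the equity component and exploit the rearranged identity $\sgreat=(\bfa+\bfMd\bfd-\bfd)^++\bfMs\sgreat$ (your use of subadditivity of $(\cdot)^+$ simply makes explicit the paper's terse ``therefore'').
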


\begin{proof}
Assume $\bfR\in[\Rsmall, \Rgreat]$. 
Because of \eqref{eq:Rsmall_zero} and the monotony of $\Phi$, we have that $\Phi(\bfR)\ge\Rsmall$. 
By definition of $\Phi$ and $\Rgreat$, only the lower $n$ lines of the vector inequality $\Phi(\bfR)\le\Rgreat$ need to be shown. 
By Lemma \ref{lem:inv_xos-mat} and \eqref{eq:def_Rgreat}, it holds that
\begin{equation}
\sgreat - \bfMs\sgreat = (\bfa + \bfMd\bfd - \bfd)^+
\end{equation}
and therefore
\begin{equation}
\sgreat \ge (\bfa + \bfMd\bfd + \bfMs\sgreat - \bfd)^+.
\end{equation}
\end{proof}

Before showing the importance of $\Rgreat$ and $\Rsmall$ as upper and lower bounds of the solution $\bfR^*$, 
we need to introduce the terms \emph{default set} and \emph{default matrix}. 
For $\bfr\ge\bzero_n$ and $\bfs\ge\bzero_n$ the set 
\begin{equation}\label{eq:defi_def_set}
D(\bfr,\bfs) = \left\{i\in\mathcal N:a_i + \sum_{j=1}^n M^{\bfd}_{ij} r_j + \sum_{j=1}^n M^{\bfs}_{ij} s_j < d_i\right\}
\end{equation}
is called \emph{default set under $\bfr$ and $\bfs$} because -- given $\bfr$ and $\bfs$ -- the firms in $D(\bfr,\bfs)$ 
are not able to fully satisfy their obligations and hence are in default. 
We say that firm $i$ is in default under $\bfr$ and $\bfs$ if $i\in D(\bfr,\bfs)$. 
For $\bfR=(\bfr^t,\bfs^t)^t$ we will sometimes abbreviate the default set as $D(\bfR)$.
The \emph{default matrix corresponding to $\bfr$ and  $\bfs$}, $\bfL(\bfr, \bfs)\in\R^{n\times n}$, is defined as
\begin{equation}\label{eq:defi_def_mat}
\bfL(\bfr,\bfs) = \diag(\bfa+\bfM^{\bfd}\bfr + \bfM^{\bfs}\bfs-\bfd < \bzero_n)
\end{equation}
and is the diagonal matrix with entry 1 for firms in default under $\bfr$ and $\bfs$ at the corresponding position 
and with the value 0 for firms not in default. 
With the new notation, we can show the crucial limiting property of $\Rgreat$ and $\Rsmall$. 

\begin{proposition}\label{prop:lim_R}
Let $\bfR^*$ be a non-negative solution of the fixed point problem defined in \eqref{eq:Phi}. 
Then $\bfR^*\in[\Rsmall,\Rgreat]$.
\end{proposition}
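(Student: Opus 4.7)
The plan is to prove the two inequalities $\bfR^* \ge \Rsmall$ and $\bfR^* \le \Rgreat$ separately, exploiting the fact that $\Phi$ is monotone (increasing entrywise in $\bfR$) and that fixed points satisfy $\bfR^* = \Phi(\bfR^*)$.

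For the lower bound, I would start from $\bfR^* \ge \bzero_{2n}$, which holds by assumption. Since $\Phi$ is componentwise monotone increasing in its argument (the right-hand side of \eqref{eq:Phi} only contains the matrix products $\bfMd\bfr$ and $\bfMs\bfs$ with non-negative matrices, plus $\min$ and $(\cdot)^+$ which preserve monotonicity), applying $\Phi$ and using \eqref{eq:Rsmall_zero} gives
\begin{equation*}
\bfR^* \;=\; \Phi(\bfR^*) \;\ge\; \Phi(\bzero_{2n}) \;=\; \Rsmall.
\end{equation*}

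For the upper bound, the debt component is immediate: by definition of $\Phi$, $\bfr^* = \min\{\bfd, \bfa + \bfMd\bfr^* + \bfMs\bfs^*\} \le \bfd = \rgreat$. The nontrivial step is the equity component. Substituting $\bfr^* \le \bfd$ into the equity fixed point equation yields
\begin{equation*}
\bfs^* \;=\; (\bfa + \bfMd\bfr^* + \bfMs\bfs^* - \bfd)^+ \;\le\; (\bfa + \bfMd\bfd - \bfd + \bfMs\bfs^*)^+ \;\le\; (\bfa + \bfMd\bfd - \bfd)^+ + \bfMs\bfs^*,
\end{equation*}
where the last step uses $(x+y)^+ \le x^+ + y^+$ together with $\bfMs\bfs^* \ge \bzero_n$. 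Rearranging gives $(\bfI_n - \bfMs)\bfs^* \le (\bfa + \bfMd\bfd - \bfd)^+$.

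The main obstacle is to turn this into $\bfs^* \le \sgreat$ by multiplying on the left with $(\bfI_n - \bfMs)^{-1}$. This requires two facts that the paper has already established or can cite: first, the existence of $(\bfI_n - \bfMs)^{-1}$, which holds under Assumption \ref{assu:holding_mat} (and is referenced as Lemma \ref{lem:inv_xos-mat} in the excerpt); and second, the entrywise non-negativity of this inverse, which follows from the Neumann series representation $(\bfI_n - \bfMs)^{-1} = \sum_{k\ge 0}(\bfMs)^k$, valid because the Elsinger Property implies that the spectral radius of $\bfMs$ is strictly less than one. Non-negativity of $(\bfI_n - \bfMs)^{-1}$ means it preserves vector inequalities, so
\begin{equation*}
\bfs^* \;\le\; (\bfI_n - \bfMs)^{-1}(\bfa + \bfMd\bfd - \bfd)^+ \;=\; \sgreat,
\end{equation*}
which combined with $\bfr^* \le \rgreat$ finishes the proof that $\bfR^* \le \Rgreat$.
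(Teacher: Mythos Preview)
Your proof is correct, and for the equity upper bound it is genuinely simpler than the paper's argument. The lower bound and the debt component of the upper bound are handled identically in both proofs.

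For the bound $\bfs^*\le\sgreat$, the paper takes a more roundabout route: it introduces the default matrix $\bfL^*=\bfL(\bfr^*,\bfs^*)$ to linearize the positive part, obtains the closed-form expression
\[
\bfs^* = \bigl(\bfI_n - (\bfI_n-\bfL^*)\bfMs(\bfI_n-\bfL^*)\bigr)^{-1}(\bfI_n-\bfL^*)(\bfa + \bfMd\bfr^* - \bfd),
\]
and then invokes a separate technical lemma (Lemma~\ref{lem:inv_aux}) comparing $(\bfI_n-\bfL\bfM\bfL)^{-1}\bfL$ with $\bfL(\bfI_n-\bfM)^{-1}\bfL$ via Neumann series in order to pass to $(\bfI_n-\bfMs)^{-1}$. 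Your argument bypasses all of this: the subadditivity inequality $(x+y)^+\le x^+ + y^+$ together with $\bfMs\bfs^*\ge\bzero_n$ immediately yields $(\bfI_n-\bfMs)\bfs^*\le(\bfa+\bfMd\bfd-\bfd)^+$, and then entrywise non-negativity of $(\bfI_n-\bfMs)^{-1}$ (Lemma~\ref{lem:inv_xos-mat}) finishes the job. What the paper's approach buys is an exact representation of $\bfs^*$ in terms of $\bfL^*$, which is of independent interest and reappears later (e.g.\ in the pseudo-solution machinery of Section~\ref{sec:def-set-algo}); your approach buys a shorter, self-contained proof that does not need Lemma~\ref{lem:inv_aux} at all.
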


\begin{proof}
Because of \eqref{eq:Rsmall_zero} and the monotony of $\Phi$, $\bfR^*\ge \Rsmall$, so we only show the validity of the upper bound $\Rgreat$. 
Since $\bfR^*$ is a fixed point of $\Phi$, we can write
\begin{equation}
\Phi\begin{pmatrix}\bfr^* \\ \bfs^*\end{pmatrix} = 
\begin{pmatrix} \min\{\bfd, \bfa + \bfMd\bfr^* + \bfMs\bfs^*\} \\ (\bfa + \bfMd\bfr^* + \bfMs\bfs^* - \bfd)^+ \end{pmatrix} 
= \begin{pmatrix}\bfr^* \\ \bfs^*\end{pmatrix} = \bfR^*.
\end{equation}
Obviously, $\bfr^*\le\bfd=\rgreat$, hence we reduce our considerations to the equity component of $\bfR^*$ which, 
together with $\bfL(\bfr^*,\bfs^*)=\bfL^*$, can be presented as
\begin{equation}\label{eq:fp_eq_comp}
\bfs^* = (\bfa + \bfMd\bfr^* + \bfMs\bfs^* - \bfd)^+ = (\bfI_n-\bfL^*) (\bfa + \bfMd\bfr^* + \bfMs\bfs^* - \bfd). 
\end{equation}
Because of $(\bfI_n-\bfL^*)\bfs^* = \bfs^*$ we can reformulate \eqref{eq:fp_eq_comp} into
\begin{equation}
\begin{split}
\bfs^*  &= (\bfI_n-\bfL^*) \bfMs\bfs^* + (\bfI_n-\bfL^*) (\bfa + \bfMd\bfr^* - \bfd) \\
        &= (\bfI_n-\bfL^*) \bfMs(\bfI_n-\bfL^*)\bfs^* + (\bfI_n-\bfL^*) (\bfa + \bfMd\bfr^* - \bfd).
\end{split}		
\end{equation}
Rearranging yields to
\begin{equation}\label{eq:bfs_inv}
\bfs^* = (\bfI_n - (\bfI_n-\bfL^*)\bfMs(\bfI_n-\bfL^*))^{-1}(\bfI_n-\bfL^*)(\bfa + \bfMd\bfr^* - \bfd).
\end{equation}
Together with Lemma \ref{lem:inv_aux} in the Appendix, this leads to 
\begin{equation}
\begin{split}
\bfs^* &=   (\bfI_n - (\bfI_n-\bfL^*)\bfMs(\bfI_n-\bfL^*))^{-1}(\bfI_n-\bfL^*)(\bfa + \bfMd\bfr^* - \bfd) \\
       &\le (\bfI_n - (\bfI_n-\bfL^*)\bfMs(\bfI_n-\bfL^*))^{-1}(\bfI_n-\bfL^*)(\bfa + \bfMd\bfd - \bfd) \\
       &\le (\bfI_n - (\bfI_n-\bfL^*)\bfMs(\bfI_n-\bfL^*))^{-1}(\bfI_n-\bfL^*)(\bfa + \bfMd\bfd - \bfd)^+ \\
       &\le (\bfI_n-\bfL^*)(\bfI_n-\bfMs)^{-1}(\bfI_n-\bfL^*)(\bfa + \bfMd\bfd - \bfd)^+ \\
       &\le (\bfI_n-\bfMs)^{-1}(\bfa + \bfMd\bfd - \bfd)^+ \\
       &= \sgreat,
\end{split}		
\end{equation}
from which the assertion follows.
\end{proof}

Using the results from above, we can now show that there is only one fixed point which is in the interval $[\Rsmall,\Rgreat]$. 
The proof of the following theorem is given in the Appendix.

\begin{theorem}\label{theo:unique_fp}
Under Assumption \ref{assu:holding_mat} and for an arbitrary financial system $\F=(\bfa,\bfd,\bfMd,\bfMs)$, 
there exists a unique fixed point of the mapping $\Phi$. 
The fixed point $\bfR^*$ is non-negative and $\bfR^*\in[\Rsmall,\Rgreat]$.
\end{theorem}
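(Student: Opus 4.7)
The plan is to derive existence and box containment from the monotonicity of $\Phi$, and then to reduce uniqueness to the Elsinger property via a careful analysis of default sets.

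\textbf{Existence and box containment.} The map $\Phi$ is componentwise monotone and Lipschitz continuous, since $\min$, $(\cdot)^+$, and multiplication by the non-negative matrices $\bfMd$, $\bfMs$ are each monotone and $1$-Lipschitz. Lemma \ref{lem:Phi_self-mapping} supplies the invariant interval $[\Rsmall,\Rgreat]$. Iterating $\Phi$ from $\Rsmall$ produces a non-decreasing sequence bounded above by $\Rgreat$; by continuity its limit $\bfR^*$ is a fixed point, and $\bfR^* \ge \Rsmall \ge \bzero_{2n}$ gives non-negativity. The upper bound $\bfR^* \le \Rgreat$ for every non-negative fixed point is already supplied by Proposition \ref{prop:lim_R}.

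\textbf{Setup for uniqueness.} Suppose $\bfR_1$ and $\bfR_2$ are two fixed points. By monotonicity of $\Phi$ (Knaster--Tarski), or equivalently by iterating $\Phi$ from $\Rsmall$ upward and from $\Rgreat$ downward, I may assume $\bfR_1 \le \bfR_2$ after replacing the pair by its componentwise minimum and maximum. Write $\Delta_d := \bfr_2 - \bfr_1 \ge \bzero_n$ and $\Delta_s := \bfs_2 - \bfs_1 \ge \bzero_n$. A short case split on whether a firm is in default shows that every fixed point of $\Phi$ satisfies the pointwise budget identity
\begin{equation*}
r_i + s_i \;=\; a_i + (\bfMd\bfr)_i + (\bfMs\bfs)_i, \qquad i\in\mathcal{N}.
\end{equation*}
Subtracting the identity at the two fixed points gives the homogeneous linear relation $\Delta_d + \Delta_s = \bfMd\Delta_d + \bfMs\Delta_s$, and summing over $i$ while invoking the left-substochasticity of $\bfMd$ and $\bfMs$ forces each column of $\bfMd$ indexed by $j\in J_d := \{j:\Delta_{d,j}>0\}$ to have column sum exactly $1$, and analogously each column of $\bfMs$ indexed by $j\in J_s := \{j:\Delta_{s,j}>0\}$.

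\textbf{The main obstacle.} To close the argument I want to invoke the Elsinger property on $\bfMd$ with respect to $J_d$ and on $\bfMs$ with respect to $J_s$. Evaluating the component identity at $i\notin J_d\cup J_s$ already forces the supports of these columns to lie in $J_d\cup J_s$. The remaining and, I expect, principal technical step is to refine these supports to $J_d$ for $\bfMd$ and to $J_s$ for $\bfMs$. Here I would exploit the structural dichotomy $J_d\subset D(\bfR_1)$ (any firm with $r_i<d_i$ must be in default under $\bfR_1$) together with $J_s\subset D(\bfR_2)^c$ (any firm with $s_i>0$ must be solvent under $\bfR_2$), and combine these with the explicit linear form of \eqref{eq:liq_eq_debt}--\eqref{eq:liq_eq_equity} on the default and solvent subsets to peel off the cross contributions between the debt and equity components on $J_d \cap J_s$. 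Once the support refinement is in place, Assumption \ref{assu:holding_mat} rules out both $J_d\neq\emptyset$ and $J_s\neq\emptyset$, so $\Delta_d = \Delta_s = \bzero_n$, yielding $\bfR_1=\bfR_2$, and the unique fixed point satisfies the announced bounds.
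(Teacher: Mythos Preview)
Your existence argument via monotone iteration from $\Rsmall$ is fine, and essentially equivalent to what the paper does (the paper invokes Brouwer--Schauder on the invariant box, but either route works).

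There are two issues on the uniqueness side.

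First, a minor one: you cannot ``replace the pair by its componentwise minimum and maximum'' and still have fixed points. For a monotone $\Phi$, $\Phi(\min(\bfR_1,\bfR_2))\le\min(\bfR_1,\bfR_2)$ only gives a supersolution, not a fixed point. The clean fix is to take $\bfR_1$ and $\bfR_2$ to be the least and greatest fixed points obtained by iterating from $\Rsmall$ and $\Rgreat$; these are automatically ordered.

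Second, and more seriously, the ``support refinement'' you flag as the principal remaining step is a genuine gap, and your sketch does not close it. From $\Delta_d+\Delta_s=\bfMd\Delta_d+\bfMs\Delta_s$ you correctly extract that columns $j\in J_d$ of $\bfMd$ and $j\in J_s$ of $\bfMs$ have full column sum and are supported on $J_d\cup J_s$. But the Elsinger property for $\bfMd$ requires a set $\mathcal J$ with $\sum_{i\in\mathcal J}M^{\bfd}_{ij}=1$ for all $j\in\mathcal J$; you only have the support inside $J_d\cup J_s$, not inside $J_d$. On the overlap $J_d\cap J_s$ (firms defaulting under $\bfR_1$ but solvent with positive equity under $\bfR_2$) the debt and equity increments are \emph{both} positive, so the decoupling you need does not follow from the default dichotomy alone. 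In fact one can write down $2\times 2$ matrices $\bfMd,\bfMs$ each satisfying the Elsinger property such that the column-mixed matrix $\bfMd\bfL+\bfMs(\bfI_n-\bfL)$ fails it for some diagonal $0$--$1$ matrix $\bfL$; this shows that a purely column-sum argument on the mixed system cannot by itself reach a contradiction, and that any completion along your lines would have to exploit additional structure of the pair $(\bfR_1,\bfR_2)$ that you have not yet identified.

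The paper sidesteps this entirely. It observes that $\Phi$ is \emph{non-expansive} in $\|\cdot\|_1$ under Assumption~\ref{assu:holding_mat}, and then uses a general fact (Lemma~\ref{non-expansive_infinite}): a non-expansive self-map of a compact convex set has a fixed-point set that is either a singleton or uncountable. Independently, the linearisation behind Proposition~\ref{prop:def_set_sol} shows that fixed points are indexed by default sets, of which there are at most $2^n$. Finitely many versus uncountably many forces uniqueness. This argument never needs to compare two ordered fixed points or to separate $J_d$ from $J_s$.
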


In the sequel, we assume that Assumption \ref{assu:holding_mat} holds 
so that $\F$ has only one solution $\bfR^*\in(\R_0^+)^{2n}$, i.e.
\begin{equation}
\bfR^*=\begin{pmatrix}\bfr^*\\\bfs^*\end{pmatrix} = \Phi \begin{pmatrix}\bfr^*\\\bfs^*\end{pmatrix} = \Phi(\bfR^*).
\end{equation}
The requirements of Assumption \ref{assu:holding_mat} are less strict than the assumption that both $\|\bfMd\|<1$ and  $\|\bfMs\|<1$ 
that is used for example in \citet{fischer14}, \citet{suzuki02} or \citet{gourieroux12}. 
This follows by the fact that in case of $\bfMd$ having the Elsinger Property, it must not necessarily hold that $\|\bfMd\|<1$ 
and hence $\Phi$ is no strict contraction anymore. 
However, the assumption still guarantees that the solution of the system is unique.

\section{Non-Finite Algorithms}\label{sec:iterative_algo}

In this section, two existing solution algorithms that can be found in the literature are presented. 
One algorithm consists of the iterative use of the mapping $\Phi$ on a chosen starting vector and is given in the first subsection. 
A modification of this Picard Iteration is used in the work of \citet{elsinger09}, 
where for the determination of the equity component, a more sophisticated subalgorithm is used (Section \ref{subsec:elsinger}). 
In the last subsection, a new algorithm is developed that combines the ideas of \citet{elsinger09} and \citet{eisenberg01} 
which results in a faster convergence of the procedure. 

\subsection{The Picard Algorithm}

The most intuitive way to calculate $\bfR^*$ for the system $\F$ consists of the iterative use of $\Phi$. 
It will be shown in this section that with an arbitrary starting vector $\bfR^0\in(\R_0^+)^{2n}$, 
\begin{equation}\label{eq:picard-iteration}
\bfR^* = \lim_{l\to\infty}\Phi^l(\bfR^0) = \lim_{l\to\infty} \underbrace{\Phi\circ\ldots\circ\Phi}_{l}(\bfR^0),
\end{equation}
which is commonly known as the \emph{Picard Iteration}. 
Since $\bfR^*\ge\bzero$, the range for the starting vector $\bfR^0$ can be reduced to only non-negative vectors. 
Beyond that, the search for an optimal starting point can be limited to the interval $[\Rsmall, \Rgreat]$, as shown in Theorem \ref{theo:unique_fp}.
A direct consequence is that any iteration procedure that aims to calculate $\bfR^*$ 
should make sure that 
\begin{inparaitem}
  \item[(i)] no starting point of the iteration is chosen outside the interval $[\Rsmall,\Rgreat]$ and that
  \item[(ii)] every interim result of the procedure also needs to be in that interval. 
\end{inparaitem}
Otherwise, the procedure is inefficient. 
For these reasons, we present an algorithm that can start with both, $\Rgreat$ and $\Rsmall$. 

\begin{algorithm}{1}[Picard Algorithm]\label{alg:picard}\hfill
\begin{enumerate}
  \item For $k=0$, choose $\bfR^0\in[\Rsmall, \Rgreat]$ and $\eps > 0$.
  \item\label{alg:fp_picard}
  For $k\ge 1$, determine $\bfR^k = \Phi(\bfR^{k-1})$. 
  \item If $\|\bfR^{k-1}-\bfR^k\|<\eps$, stop the algorithm. 
  Else, set $k = k + 1$ and proceed with step \ref{alg:fp_picard}.
\end{enumerate}
\end{algorithm}

We will use the two expressions Picard Iteration and Picard Algorithm synonymously for Algorithm \ref{alg:picard}. 
In \citet{suzuki02} and \citet{fischer14} the Picard Iteration is the algorithm of choice to determine solutions of 
\eqref{eq:liq_eq_debt} and \eqref{eq:liq_eq_equity}. 

\begin{proposition}\label{prop:conv_picard}
In case of $\bfR^0=\Rsmall$, Algorithm \ref{alg:picard} generates a sequence of increasing vectors $\bfR^k$, 
and for $\bfR^0=\Rgreat$ a sequence of decreasing vectors. 
For all starting points, the algorithm converges to the solution $\bfR^*$. 
\end{proposition}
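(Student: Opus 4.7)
The proof rests on three facts already available: (i) the map $\Phi$ is componentwise monotone non-decreasing in its argument, since $\bfMd, \bfMs$ have non-negative entries and the operations $\min\{\bfd,\cdot\}$ and $(\cdot - \bfd)^+$ are monotone; (ii) by Lemma~\ref{lem:Phi_self-mapping}, $\Phi$ maps the order interval $[\Rsmall,\Rgreat]$ into itself; and (iii) by Theorem~\ref{theo:unique_fp}, $\Phi$ has exactly one fixed point $\bfR^*$ in $[\Rsmall,\Rgreat]$. Together these form the standard ingredients for a Tarski-style monotone-iteration argument.

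First, I would establish the monotonicity of $\Phi$ explicitly from \eqref{eq:Phi}: if $\bfR \le \bfR'$ componentwise, then $\bfMd \bfr + \bfMs\bfs \le \bfMd\bfr' + \bfMs\bfs'$, and applying $\min\{\bfd,\bfa+\cdot\}$ to the first half and $(\bfa+\cdot-\bfd)^+$ to the second preserves the inequality. Next, I would handle the two extreme starting points. For $\bfR^0 = \Rsmall$, use that $\Rsmall = \Phi(\bzero_{2n})$ by \eqref{eq:Rsmall_zero} and $\bzero_{2n} \le \Rsmall$, so monotonicity gives $\bfR^1 = \Phi(\Rsmall) \ge \Phi(\bzero_{2n}) = \Rsmall = \bfR^0$; a straightforward induction then yields $\bfR^k \le \bfR^{k+1}$ for all $k$. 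For $\bfR^0 = \Rgreat$, Lemma~\ref{lem:Phi_self-mapping} gives $\bfR^1 = \Phi(\Rgreat) \le \Rgreat = \bfR^0$, and the same inductive step shows $\bfR^{k+1} \le \bfR^k$. In both cases, Lemma~\ref{lem:Phi_self-mapping} also keeps the entire sequence trapped in the compact interval $[\Rsmall,\Rgreat]$, so each sequence is monotone and bounded and therefore converges to some limit $\bfR^\infty \in [\Rsmall,\Rgreat]$.

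To identify the limit with $\bfR^*$, I would pass to the limit in $\bfR^k = \Phi(\bfR^{k-1})$ using the continuity of $\Phi$ (it is a composition of linear maps with the continuous functions $\min\{\bfd,\cdot\}$ and $(\cdot)^+$). This produces $\bfR^\infty = \Phi(\bfR^\infty)$, so $\bfR^\infty$ is a fixed point in $[\Rsmall,\Rgreat]$, and by Theorem~\ref{theo:unique_fp} we must have $\bfR^\infty = \bfR^*$. Finally, for an arbitrary $\bfR^0 \in [\Rsmall, \Rgreat]$, monotonicity gives the sandwich
\begin{equation*}
\Phi^k(\Rsmall) \le \Phi^k(\bfR^0) \le \Phi^k(\Rgreat),
\end{equation*}
and since both outer sequences converge to $\bfR^*$, the middle one does as well.

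No step looks like a serious obstacle; the only point requiring a little care is the verification of the two initial inequalities $\Phi(\Rsmall) \ge \Rsmall$ and $\Phi(\Rgreat) \le \Rgreat$, which are precisely what Lemma~\ref{lem:Phi_self-mapping} (together with \eqref{eq:Rsmall_zero}) delivers. Everything else is an application of monotonicity, the monotone convergence of bounded real sequences, and continuity of $\Phi$ combined with uniqueness of the fixed point.
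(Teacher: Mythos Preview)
Your proof is correct and follows essentially the same monotone-iteration strategy as the paper: show the initial inequalities $\Phi(\Rsmall)\ge\Rsmall$ and $\Phi(\Rgreat)\le\Rgreat$, use monotonicity of $\Phi$ to get monotone bounded sequences, and invoke continuity plus uniqueness of the fixed point. The only notable difference is your handling of an arbitrary starting point $\bfR^0\in[\Rsmall,\Rgreat]$: the paper argues somewhat loosely that boundedness of the iterates in $[\Rsmall,\Rgreat]$ together with continuity forces convergence, whereas you give the explicit squeeze $\Phi^k(\Rsmall)\le\Phi^k(\bfR^0)\le\Phi^k(\Rgreat)$, which is cleaner and fully rigorous since both outer sequences are already shown to converge to $\bfR^*$.
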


\begin{proof}
Let $\bfR^0=\Rsmall$, then
\begin{equation}
\Phi(\Rsmall)=\begin{pmatrix}\min\{\bfd, \bfa + \bfMd\rsmall + \bfMs\ssmall\} \\ (\bfa + \bfMd\rsmall + \bfMs\ssmall - \bfd)^+\end{pmatrix} 
\ge \begin{pmatrix} \min\{\bfd, \bfa\} \\ (\bfa - \bfd)^+ \end{pmatrix} = \Rsmall. 
\end{equation}
From the monotonicity of $\Phi$, it follows that for all iterates we have $\bfR^{k+1}\ge\bfR^k, k\ge 1$. 
For $\bfR^0=\Rgreat$, first check that because of $\sgreat=(\bfI_n-\bfMs)^{-1}(\bfa + \bfMd\bfd - \bfd)^+$ and $\rgreat = \bfd$,		
\begin{equation}
\begin{split}
& (\bfa + \bfMd\rgreat + \bfMs\sgreat - \bfd)^+ \\
& \quad = \left(\bfa + \bfMd\bfd - \bfd + \bfMs\sgreat - \sgreat + \sgreat\right)^+ \\
& \quad = \left(\bfa + \bfMd\bfd - \bfd - (\bfI_n-\bfMs)\sgreat + \sgreat\right)^+ \\
& \quad = \left(\underbrace{\bfa + \bfMd\bfd - \bfd - (\bfa + \bfMd\bfd - \bfd)^+}_{\le \bzero_n} +\; \sgreat\right)^+ \\
& \quad \le (\sgreat)^+ = \sgreat
\end{split}		
\end{equation}
and thus
\begin{equation}
\Phi(\Rgreat)= \begin{pmatrix}\min\{\bfd, \bfa + \bfMd\rgreat + \bfMs\rgreat\} \\ (\bfa + \bfMd\rgreat + \bfMs\sgreat - \bfd)^+\end{pmatrix} 
\le \begin{pmatrix} \bfd \\ \sgreat \end{pmatrix} = \Rgreat.
\end{equation}
Again it holds, due to the monotonicity of $\Phi$, that $\bfR^{k+1}\le\bfR^k, k\ge 1$.
Hence for any $\bfR\in[\Rsmall, \Rgreat]$ it follows because of $\bfR\le\Rgreat$ that $\Phi(\bfR)\le\Phi(\Rgreat)\le\Rgreat$ 
and with the same argumentation it follows that $\Phi(\bfR)\ge\Phi(\Rsmall)\ge\Rsmall$. 
This means that any series from the Picard Iteration with a starting point in the interval $[\Rsmall, \Rgreat]$ is bounded from above and from below. 
Since $\Phi$ is continuous, it follows that the series must converge to some $\widetilde \bfR$ 
such that $\Phi(\widetilde \bfR)=\widetilde \bfR$. 
According to Theorem \ref{theo:unique_fp}, there is only one fixed point, so it must hold that $\widetilde \bfR=\bfR^*$. 
\end{proof}

Regarding the Picard Iteration with the starting points $\Rsmall$ or $\Rgreat$, it should be mentioned that 
besides \citet{suzuki02} and \citet{fischer14}, also \citet{shin06} considers a Picard iteration in a
system valuation context. Shin's model is one with debt cross-ownership and multiple seniorities, while 
equity cross-ownership is not considered. As such, the model is situated somewhere between
the \citet{eisenberg01} and the \citet{elsinger09} framework. Instead of maturity values (as done here),
Shin directly considers risk-neutral values at time 0, and takes for the start of the iteration procedure
either a ``conservative'' viewpoint, where debt is assumed to have the value zero, or an ``optimistic''
viewpoint, where the value of debt is assumed to be the face value. Shin's starting
points therefore seem
to be risk-neutral time zero equivalents to the here considered $\Rsmall$ and $\Rgreat$.

The Picard Iteration -- or any other iterative algorithm in this section -- might not reach the solution $\bfR^*$ in finitely many iteration steps. 
Examples of financial systems with this property can easily be constructed. 
From a computational or practical point of view this means that iterative algorithms like the Picard Iteration have the disadvantage 
that under some circumstances many iterations are needed 
to approach to $\bfR^*$ sufficiently close, which makes these algorithms somewhat inefficient. 
The Trial-and-Error Algorithms presented in Section \ref{sec:def-set-algo} do not have this drawback 
since for these procedures it is assured that they will reach the solution in a finite number of steps.

\subsection{The \citeauthor{elsinger09} Algorithm}\label{subsec:elsinger}

In \citet{elsinger09}, an algorithm for $\bfR^*$ is presented which differs from the Picard Iteration. 
This procedure consists of splitting the two components of $\bfR$, the equity and the debt component, 
and apply different computation methods on both components in each iteration step. 
For the equity component, a sub-algorithm is applied where the equity payments of the system are determined 
assuming a fixed amount of debt payments. 
Denote this vector of debt payments in the following by $\bar\bfr$, hence $\bzero_n\le\bar\bfr\le\bfd$. 
Aim of the sub-algorithm is to find a fixed point of the mapping $\Phi^{\bfs}:(\R_0^+)^n\to(\R_0^+)^n$ with 
\begin{equation}\label{eq:phi_aux_eq}
\Phi^{\bfs}(\bfs; \bar\bfr) = (\bfa+\bfMd\bar\bfr + \bfM^{\bfs}\bfs - \bfd)^+.
\end{equation}
This mapping represents the equity component of $\Phi$ for a given debt payment of $\bar\bfr$. 
The fixed point of $\Phi^{\bfs}(\cdot;\bar\bfr)$ is denoted by $\bfs(\bar\bfr)$, i.e.
\begin{equation}\label{eq:phi_aux_eq_fp}
\Phi^{\bfs}(\bfs(\bar\bfr); \bar\bfr) = (\bfa+\bfMd\bar\bfr + \bfM^{\bfs}\bfs(\bar\bfr) - \bfd)^+ = \bfs(\bar\bfr).
\end{equation}
As shown in \citet{elsinger09}, this fixed point exists and is unique since $\bfMs$ has the Elsinger Property. 

The following algorithm delivers for given $\bar\bfr$ a series of vectors $\bfw^k\in\R^n$ that converge to a vector 
whose positive part is the fixed point of \eqref{eq:phi_aux_eq}. 
To explain this in more detail, first define for a given vector $\bfw\in\R^n$ the set
\begin{equation}
P(\bfw) = \left\{i\in\mathcal N: w_i \ge 0\right\}
\end{equation} 
and the matrix
\begin{equation}
\bfG(\bfw) = \diag(\bfw \ge \bzero_n)
\end{equation}
as the corresponding diagonal matrix. 
Note that these definitions of $P(\bfw)$ and $\bfG(\bfw)$ slightly differ from the original ones in \citet{elsinger09}, 
where a strictly larger sign was used. 
By our definition of default in \eqref{eq:defi_def_set}, a firm with zero equity value can still be not in default 
in the sense that all obligations can fully served. 
This situation is referred to as \emph{borderline firms} (cf. Section \ref{subsec:trial_error_alg_incr}).
However, this modification does not change the forthcoming theoretical results. 

\begin{algorithm}{2A}\label{alg:equity}\hfill
\begin{enumerate}
  \item For $k=0$, set $\bfw^0=\bfa+\bfMd\bar\bfr-\bfd$ and determine $P(\bfw^0)$ and $\bfG(\bfw^0)$.
  \item\label{alg:fp_equity} 
  For $k\ge 1$, solve $\Psi_{\bfw^k}(\bfw)=\bfw$ where
  \begin{equation}\label{eq:psi_s}
  \Psi_{\bfw^k}(\bfw) = \bfw^0 + \bfMs\bfG(\bfw^k)\bfw
  \end{equation}
  and denote the solution by $\bfw^{k+1}$, i.e. $\Psi_{\bfw^k}\left(\bfw^{k+1}\right)= \bfw^{k+1}$. 
  Determine $P(\bfw^{k+1})$ and $\bfG(\bfw^{k+1})$.
  \item If $P(\bfw^k)=P(\bfw^{k+1})$, stop the algorithm. 
  Else, set $k = k + 1$ and proceed with step \ref{alg:fp_equity}.
\end{enumerate}
\end{algorithm}

Before the properties of Algorithm \ref{alg:equity} are shown, we give some explanations for a better understanding of its functioning. 
The starting point is $\bfw^0$, which is the difference between $\bfa+\bfMd\bar\bfr$ and $\bfd$. 
The sum represents the firms incomes on their balance sheet that consists of the external assets and the payments due to cross-ownership of debt. 
Note that in this step the potential income from equity cross-ownership is ignored since $\bfMs$ does not appear. 
The idea is now as follows: The firms not in $P(\bfw^0)$ are not able to fully satisfy their liabilities (assuming debt payments of $\bar\bfr$) 
and will be in default. 
On the other hand, the firms that are in $P(\bfw^0)$ will be able to satisfy their obligees 
and can be regarded as solvent (again assuming debt payments of $\bar\bfr$), even though no intersystem payments due to equity cross-ownership 
are taken into account. 
As a consequence, the equity payments of the non-defaulting firms are added into the system via the product $\bfMs\bfG(\bfw^0)\bfw$. 
We can interpret the vector $\bfw^0$, as well as the other iterates $\bfw^k$, as pseudo equity vectors
that give us information about solvent and defaulting firms under the current debt and equity payments. 
The fact that the entries of $\bfw^k$ can be negative prevents that they can be naturally interpreted as equity vectors 
which is why we use the term ``pseudo''. 

The difference compared to the Picard Algorithm is that a linear equation system is solved to achieve a new equity payment vector 
instead of applying $\Phi$ to $(\bar\bfr^t, (\bfw^k)^t)^t$. 
This is because for the fixed point of $\Psi_{\bfw^k}$ it holds together with \eqref{eq:psi_s} that
\begin{equation}\label{eq:eq_alg_inv}
\bfw^{k+1} = (\bfI_n - \bfMs\bfG(\bfw^k))^{-1} \bfw^0
\end{equation}
Note that the inverse matrix exists since $\bfMs$ and hence $\bfMs\bfG(\bfs^k)$ have the Elsinger Property. 

The vector $\bfw^1$ can be interpreted as an ``updated'' version of $\bfw^0$ since the equity of the non-defaulting firms 
that are in $P(\bfw^0)$ is included in $\bfw^1$.
Based on the updated vector $\bfw^1$ it might appear that some firms that are not in $P(\bfw^0)$ have now non-negative entries in $\bfw^1$. 
This can be concluded from $\bfw^1\ge\bfw^0$ that we will show later. 
But these firms are now also able to contribute equity payments to the system. 
Consequently, the system has to be updated again by determining $\bfw^2$. 
The procedure continues until the set of defaulting firms stays the same from one iteration step to the next one. 

\begin{proposition}\label{prop:equ_comp_conv}
Given a fixed vector of debt payments $\bar\bfr\ge\bzero_n$:
\begin{enumerate}
  \item[(i)] Algorithm \ref{alg:equity} generates an increasing sequence of vectors $\bfw^k$. 
  \item[(ii)] Let $1\le l\le n$ such that
  \begin{equation}\label{eq:iter_eq_final}
  l:=\min\{j\in\{0, 1 \ld n\}:P(\bfw^j)=P(\bfw^{j+1})\}.
  \end{equation} 
  Then $\bfs(\bar\bfr)=(\bfw^{l+1})^+$ is the fixed point of the mapping $\Phi^{\bfs}(\cdot; \bar\bfr)$.  
  \item[(iii)] Let $d_0 = |P(\bfw^0)|\in\{0, 1 \ld n\}$ be the number of firms with a positive entry in $\bfw^0$.
  The fixed point $\bfs(\bar\bfr)$ is reached after no more than $n-d_0$ iteration steps.
\end{enumerate}
\end{proposition}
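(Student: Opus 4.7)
The proof rests on two structural facts about matrices of the form $\bfMs\bfG$, where $\bfG$ is a $0$--$1$ diagonal. First, such a product inherits the Elsinger property from $\bfMs$: any subset $\mathcal{J}$ that violates Elsinger for $\bfMs\bfG$ must lie inside $\{j:G_{jj}=1\}$, where the two matrices coincide, so $\mathcal{J}$ would violate Elsinger for $\bfMs$. A short Perron--Frobenius argument (a unit eigenvalue of a non-negative left-substochastic matrix would force $\sum_{i\in J}M^{\bfs}_{ij}=1$ on the support $J$ of a left eigenvector, contradicting Elsinger) then gives $\rho(\bfMs\bfG(\bfw^k))<1$, so $(\bfI_n-\bfMs\bfG(\bfw^k))^{-1}$ exists as a Neumann series and is entry-wise non-negative. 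Second, along an increasing sequence of iterates the diagonal $\bfG(\bfw^k)-\bfG(\bfw^{k-1})$ is a non-negative $0$--$1$ diagonal that picks out exactly the newly solvent firms. These two observations are the workhorses of everything that follows.

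For part (i) I would prove $\bfw^{k+1}\ge\bfw^k$ by induction on $k$. The base case needs care because $\bfw^0=\bfa+\bfMd\bar\bfr-\bfd$ may have negative entries, so one cannot simply invoke non-negativity of $(\bfI_n-\bfMs\bfG(\bfw^0))^{-1}$ applied to $\bfw^0$. I would partition $\mathcal{N}=P\cup N$ with $P:=P(\bfw^0)$; restricted to $P$, the defining equation reads $\bfw^1_P=\bfw^0_P+\bfMs_{PP}\bfw^1_P$, hence $\bfw^1_P=(\bfI-\bfMs_{PP})^{-1}\bfw^0_P\ge\bfw^0_P\ge\bzero$, because the principal submatrix $\bfMs_{PP}$ inherits the Elsinger property and $\bfw^0_P$ is non-negative by construction. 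On $N$ the defining equation then reads $\bfw^1_N=\bfw^0_N+\bfMs_{NP}\bfw^1_P\ge\bfw^0_N$. For the inductive step, subtracting the defining equations for $\bfw^{k+1}$ and $\bfw^k$ gives $(\bfI_n-\bfMs\bfG(\bfw^k))(\bfw^{k+1}-\bfw^k)=\bfMs(\bfG(\bfw^k)-\bfG(\bfw^{k-1}))\bfw^k$. By the inductive hypothesis $\bfw^k\ge\bfw^{k-1}$, so $P(\bfw^{k-1})\subseteq P(\bfw^k)$, and the diagonal $\bfG(\bfw^k)-\bfG(\bfw^{k-1})$ is non-negative with $1$'s only on indices $i\in P(\bfw^k)\setminus P(\bfw^{k-1})$, where $w^k_i\ge0$. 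The right-hand side is therefore non-negative; applying the non-negative inverse yields $\bfw^{k+1}\ge\bfw^k$, and in particular $P(\bfw^k)\subseteq P(\bfw^{k+1})$.

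Part (ii) then falls out by substitution: when $P(\bfw^l)=P(\bfw^{l+1})$ we have $\bfG(\bfw^l)=\bfG(\bfw^{l+1})$, so $\bfG(\bfw^l)\bfw^{l+1}=\bfG(\bfw^{l+1})\bfw^{l+1}=(\bfw^{l+1})^+$; the defining equation reduces to $\bfw^{l+1}=\bfw^0+\bfMs(\bfw^{l+1})^+$, and taking $(\cdot)^+$ on both sides yields $(\bfw^{l+1})^+=\Phi^{\bfs}((\bfw^{l+1})^+;\bar\bfr)$, which by uniqueness of the fixed point of $\Phi^{\bfs}(\cdot;\bar\bfr)$ must equal $\bfs(\bar\bfr)$. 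Part (iii) is then a pigeonhole count on the sequence $P(\bfw^0)\subseteq P(\bfw^1)\subseteq\cdots$ produced in (i): as long as the stopping criterion has not fired, the inclusion is strict, so the cardinality $|P(\bfw^k)|$ grows by at least one per iteration; starting at $d_0$ and capped by $n$, the algorithm must terminate with $l\le n-d_0$. The main obstacle is clearly the base case of the monotonicity in (i): because $\bfw^0$ is not a priori non-negative, the block decomposition along $P(\bfw^0)$ together with the inheritance of the Elsinger property by principal submatrices is essential, and everything afterwards follows by a clean induction and a simple counting argument.
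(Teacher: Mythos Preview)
Your proof is correct and follows essentially the same architecture as the paper: Neumann-series invertibility of $\bfI_n-\bfMs\bfG(\bfw^k)$, an induction for monotonicity in (i) whose inductive step subtracts consecutive defining equations and applies the non-negative inverse to $\bfMs(\bfG(\bfw^k)-\bfG(\bfw^{k-1}))\bfw^k\ge\bzero_n$, a direct verification of the fixed-point identity in (ii), and a pigeonhole count in (iii).

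The one place you diverge is the base case of (i). You flag the possible negativity of $\bfw^0$ as the main obstacle and resolve it by a $P/N$ block decomposition. The paper sidesteps this entirely with a one-line observation: expanding $(\bfI_n-\bfMs\bfG(\bfw^0))^{-1}\bfw^0$ as a Neumann series gives
\[
\bfw^1 = \bfw^0 + \bfMs\,\bfG(\bfw^0)\bfw^0 + \bfMs\bfG(\bfw^0)\bfMs\,\bfG(\bfw^0)\bfw^0 + \cdots,
\]
and every term beyond the first contains the factor $\bfG(\bfw^0)\bfw^0\ge\bzero_n$, since $\bfG(\bfw^0)$ zeroes out precisely the negative coordinates of $\bfw^0$. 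Hence $\bfw^1\ge\bfw^0$ without any block analysis. Your route is perfectly valid and makes the role of the principal submatrix $\bfMs_{PP}$ explicit, but the paper's trick shows the obstacle is less serious than it looks: the diagonal $\bfG(\bfw^0)$ already does the filtering for you.
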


\begin{proof}
\begin{enumerate}
  \item[(i)] This part of the Proposition is shown by \citet{elsinger09}. 
  We give a different version of the proof.
  Because of \eqref{eq:eq_alg_inv}, the fact that $\bfG(\bfw^0)\bfw^0\ge\bzero_n$ 
  and using the series representation of $(\bfI_n - \bfMs\bfG(\bfw^0))^{-1}$ as shown in Lemma \ref{lem:inv_xos-mat} of the Appendix we get	
  \begin{equation}
  \begin{split}	
  \bfw^1 &= (\bfI_n - \bfMs\bfG(\bfw^0))^{-1} \bfw^0\\
         &= (\bfI_n + \bfMs\bfG(\bfw^0) + (\bfMs\bfG(\bfw^0))^2 + \ldots)\bfw^0 \\
         &= \bfw^0 + \bfMs\underbrace{\bfG(\bfw^0)\bfw^0}_{\ge\bzero_n} + 
            \bfMs\bfG(\bfw^0)\bfMs\underbrace{\bfG(\bfw^0)\bfw^0}_{\ge\bzero_n}+\ldots \\
         &\ge \bfw^0, 
  \end{split}		
  \end{equation}
  which is the induction start. 
  For the induction step we assume $\bfw^k\ge\bfw^{k-1}$ and $\bfG(\bfw^k)\ge\bfG(\bfw^{k-1})$ following from it. 
  We need to show that $\bfw^{k+1} \ge \bfw^k$, or, equivalently, $\bfw^{k+1} = \bfw^k + \bfe$ where $\bfe\ge \bzero_n$. 
  Since $\bfG(\bfw^k)\bfw^k\ge\bfG(\bfw^{k-1})\bfw^k$ and $\bfw^k=\bfw^0+\bfMs\bfG(\bfw^{k-1})\bfw^k$, it follows that
  \begin{equation}
  \bfu:= \bfw^0 + \bfMs\bfG(\bfw^k)\bfw^k - \bfw^k=\bfMs(\bfG(\bfw^k)-\bfG(\bfw^{k-1}))\bfw^k \ge \bzero_n.
  \end{equation}
  With this definition we have that 
  \begin{equation}
  \bfw^k+ \bfe = \bfw^0 + \bfMs\bfG(\bfw^k)(\bfw^k+ \bfe) = \bfw^0 + \bfMs\bfG(\bfw^k)\bfw^k +  \bfMs\bfG(\bfw^k)\bfe 
  \end{equation}
  and we can rearrange to 
  \begin{equation}
  \bfe - \bfMs\bfG(\bfw^k)\bfe = \bfw^0 + \bfMs\bfG(\bfw^k)\bfw^k - \bfw^k = \bfu \ge \bzero_n.
  \end{equation}
  Solving this for $\bfe$ leads to
  \begin{equation}
  \bfe = (\bfI_n - \bfMs\bfG(\bfw^k))^{-1}\bfu \ge \bzero_n
  \end{equation}
  from which follows that $\bfw^{k+1}\ge\bfw^k$.
  
  \item[(ii)]
  First, we will show that once a ``stable system'' has been reached, i.e. for $k\ge0$ we have $P(\bfw^k)=P(\bfw^{k+1})$,
  the sequence $\bfw^k$ will be constant.  
  Let $l$ be defined as above in \eqref{eq:iter_eq_final}. 
  Note that such an $l$ exists since $\bfw^k\le\bfw^{k+1}$ and therefore $P(\bfw^{k+1})\supseteq P(\bfw^k)$ for all $k\ge0$ as shown above. 
  Due to $\bfG(\bfw^l) = \bfG(\bfw^{l+1})$, it follows because of
  \begin{equation}
  \Psi_{\bfw^l}(\bfw) = \bfw^0 + \bfMs\bfG(\bfw^l)\bfw = \bfw^0 + \bfMs\bfG(\bfw^{l+1})\bfw = \Psi_{\bfw^{l+1}}(\bfw)
  \end{equation}
  that the two mappings $\Psi_{\bfw^l}$ and $\Psi_{\bfw^{l+1}}$ are the same and consequently $\bfw^{l+1}=\bfw^{l+2}$. 
  A direct consequence is that $P(\bfw^{l+2}) = P(\bfw^{l+1}) = P(\bfw^l)$ which implies $\bfG(\bfw^{l+2})=\bfG(\bfw^{l+1})=\bfG(\bfw^l)$. 
  By iteration, all following vectors will be equal to $\bfw^{l+1}$. 
  
  What remains to be shown out is that the positive part of this iteration vector is the fixed point of the mapping $\Phi^{\bfs}(\cdot; \bar\bfr)$. 
  Since $\bfw^{l+1}$ is the fixed point of $\Psi_{\bfw^l}$ it holds that $\bfw^{l+1} = \bfw^0 + \bfMs\bfG(\bfw^l)\bfw^{l+1}$. 
  This yields to 
  \begin{equation}\label{eq:fp_prop_w}
  \begin{split}
  \Phi^{\bfs}((\bfw^{l+1})^+; \bar\bfr) &= (\bfa + \bfMd\bar\bfr + \bfMs(\bfw^{l+1})^+ - \bfd)^+ \\
                                        &= (\bfa + \bfMd\bar\bfr + \bfMs\bfG(\bfw^{l+1})\bfw^{l+1} - \bfd)^+ \\
                                        &= (\bfa + \bfMd\bar\bfr + \bfMs\bfG(\bfw^l)\bfw^{l+1} - \bfd)^+ \\
                                        &= (\bfw^0 + \bfMs\bfG(\bfw^l)\bfw^{l+1})^+ \\
                                        &= (\bfw^{l+1})^+.
  \end{split}                                      
  \end{equation}
  \item[(iii)]
  As shown, the series $\bfw^k$ increases which means that the firms in $P(\bfw^0)$ will maintain 
  their positive entries in every further iteration step. 
  The same statement holds for every firm $i$ with $w_i^k < 0$ and $w_i^{k+1} \ge 0$ for any $k\ge 0$. 
  Because of (ii) this means that the number of iteration steps would certainly be maximal, 
  if in every iteration step the set $P(\bfw^k)$ increased by one and if $|P(\bfw^{l+1})|=n$. 
  In that case we would therefore have $|P(\bfw^{l+1})|-|P(\bfw^0)| = n-d_0$ maximal possible iteration steps.
\end{enumerate}
\end{proof}

Using Algorithm \ref{alg:equity} to get an equity vector for a given debt payment vector,
we can now present the algorithm to calculate the solution $\bfR^*$. 
In the sequel, we will make use of the mapping $\Phi^{\bfd}:(\R_0^+)^n\to(\R_0^+)^n$ defined by
\begin{equation}\label{eq:phi_aux_debt}
\Phi^{\bfd}(\bfr; \bar\bfs) = \min\{\bfd, \bfa+\bfMd\bfr+\bfMs\bar\bfs\}
\end{equation}
that represents the debt component of $\Phi$ for a given equity payment vector $\bar\bfs\ge\bzero_n$. 

\begin{algorithm}{3}[Elsinger Algorithm]\label{alg:elsinger}
Set $\eps>0$.
\begin{enumerate}
  \item For $k=0$, choose $\bfr^0\in\{\rsmall, \rgreat\}$ and determine $\bfs(\bfr^0)$ using Algorithm \ref{alg:equity}.
  \item\label{alg:fp_elsinger}
  For $k\ge1$, set $\bfr^k = \Phi^{\bfd}(\bfr^{k-1}; \bfs(\bfr^{k-1}))$ and calculate $\bfs(\bfr^k)$ by Algorithm \ref{alg:equity}.
  \item If $\left\|\begin{pmatrix}\bfr^{k-1}\\\bfs^{k-1}\end{pmatrix} - \begin{pmatrix}\bfr^k\\\bfs^k\end{pmatrix}\right\|<\eps$, 
  stop the algorithm. 
  Else, set $k = k + 1$ and proceed with step \ref{alg:fp_elsinger}.
\end{enumerate}
\end{algorithm}

The algorithm starts either assuming that all firms can fully deliver on their debt obligations ($\bfr^0=\rgreat=\bfd$) 
or that all firms have only their exogenous assets for paying their obligations ($\bfr^0=\rsmall=\min\{\bfd, \bfa\}$).
With this payment vector, the corresponding equity payments are obtained by using Algorithm \ref{alg:equity}. 
In the next step the debt vector has to be adapted to the new equity payments which is done applying $\Phi^{\bfd}$ to the previous debt vector. 
The updated debt payment vector is then used for determining a new equity payment vector. 
This procedure continues until the iterates are sufficiently close to each other. 
Additional to the original algorithm first presented in \citet{elsinger09}, 
Algorithm \ref{alg:elsinger} contains the second possible starting point $\rsmall$. 
We will show in the next proposition that if $\bfr^0=\rsmall$ is chosen, the vector of debt and equity payments establish an increasing sequence 
and hence converges to the solution $\bfR^*$ from below, while for $\bfr^0=\rgreat$, it converges from above.

\begin{proposition}\label{prop:elsinger_conv}
The Elsinger Algorithm delivers a series of decreasing vectors if $\bfr^0=\rgreat$ and a series of increasing vectors if $\bfr^0=\rsmall$. 
Both series converge to the fixed point of the mapping $\Phi$ in \eqref{eq:Phi}. 
\end{proposition}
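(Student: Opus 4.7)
The plan is to reduce the statement to two monotonicity facts plus the uniqueness in Theorem~\ref{theo:unique_fp}. The first fact is a monotonicity property of the inner subroutine, namely that the map $\bar\bfr\mapsto\bfs(\bar\bfr)$ defined through \eqref{eq:phi_aux_eq_fp} is monotone non-decreasing. Since $\Phi^{\bfs}(\bfs;\bar\bfr)=(\bfa+\bfMd\bar\bfr+\bfMs\bfs-\bfd)^+$ is non-decreasing in both arguments, and Proposition~\ref{prop:equ_comp_conv} (or, equivalently, a Picard iteration started at $\bzero_n$, which is bounded above by any fixed point) produces $\bfs(\bar\bfr)$ as the limit of such an iteration, a straightforward induction on the iteration index shows that $\bar\bfr_1\le\bar\bfr_2$ entails $\bfs(\bar\bfr_1)\le\bfs(\bar\bfr_2)$.

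Next I would run the main induction on $k$. For $\bfr^0=\rsmall$ the base case reads $\bfr^1=\min\{\bfd,\bfa+\bfMd\rsmall+\bfMs\bfs(\rsmall)\}\ge\min\{\bfd,\bfa\}=\rsmall=\bfr^0$, where the inequality uses that the two matrix-vector products are non-negative; consequently $\bfs^1=\bfs(\bfr^1)\ge\bfs(\bfr^0)=\bfs^0$ by the auxiliary monotonicity. The inductive step chains monotonicity of $\Phi^{\bfd}$ in both arguments with the auxiliary monotonicity: from $\bfr^k\ge\bfr^{k-1}$ and $\bfs^k\ge\bfs^{k-1}$ I obtain $\bfr^{k+1}=\Phi^{\bfd}(\bfr^k;\bfs^k)\ge\Phi^{\bfd}(\bfr^{k-1};\bfs^{k-1})=\bfr^k$ and then $\bfs^{k+1}=\bfs(\bfr^{k+1})\ge\bfs(\bfr^k)=\bfs^k$. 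The case $\bfr^0=\rgreat=\bfd$ is fully symmetric, with the base case $\bfr^1\le\bfd=\rgreat$ holding trivially by the $\min$ in $\Phi^{\bfd}$.

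Finally I would pass to the limit. Each of the two sequences $(\bfr^k,\bfs^k)_{k\ge 0}$ is monotone and bounded (below by $\bzero_{2n}$ and above by $\Rgreat$; for the upper-starting run the bound $\bfs^0=\bfs(\bfd)\le\sgreat$ follows from the inequality $\Phi^{\bfs}(\sgreat;\bfd)\le\sgreat$ already established in the proof of Lemma~\ref{lem:Phi_self-mapping}, combined with the fact that the iteration from $\sgreat$ is monotone decreasing to the unique fixed point $\bfs(\bfd)$). Hence each sequence converges componentwise to some $\bfR^{\infty}=((\bfr^{\infty})^t,(\bfs^{\infty})^t)^t$. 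Continuity of $\Phi^{\bfd}$ gives $\bfr^{\infty}=\Phi^{\bfd}(\bfr^{\infty};\bfs^{\infty})$, and passing to the limit in the defining equation $\bfs^k=(\bfa+\bfMd\bfr^k+\bfMs\bfs^k-\bfd)^+$, valid for each $k\ge 0$ by construction of $\bfs(\bfr^k)$, yields $\bfs^{\infty}=(\bfa+\bfMd\bfr^{\infty}+\bfMs\bfs^{\infty}-\bfd)^+$. Together these two identities say exactly that $\bfR^{\infty}$ is a fixed point of $\Phi$, so Theorem~\ref{theo:unique_fp} forces $\bfR^{\infty}=\bfR^*$.

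The main obstacle I anticipate is keeping the two levels of monotonicity cleanly separated: the inner fact that $\bar\bfr\mapsto\bfs(\bar\bfr)$ is monotone must be proved first and used as a lemma, since the outer induction on $(\bfr^k,\bfs^k)$ invokes it in the step that lifts $\bfr^{k+1}\ge\bfr^k$ to $\bfs^{k+1}\ge\bfs^k$ (and analogously for the decreasing case). Once this separation is in place, the remaining steps are the standard monotone-convergence-plus-continuity combination, and uniqueness of the fixed point is already provided by Theorem~\ref{theo:unique_fp}.
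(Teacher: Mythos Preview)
Your proof is correct and follows essentially the same approach as the paper: the auxiliary monotonicity of $\bar\bfr\mapsto\bfs(\bar\bfr)$, the induction on $k$ using monotonicity of $\Phi^{\bfd}$ in both arguments, and the passage to the limit via boundedness, continuity, and uniqueness from Theorem~\ref{theo:unique_fp}. The only difference is cosmetic: the paper defers the decreasing case and the monotonicity of $\bfs(\bar\bfr)$ to \citet{elsinger09}, whereas you supply direct self-contained arguments for both (and your upper bound $\bfs(\bfd)\le\sgreat$ via Lemma~\ref{lem:Phi_self-mapping} replaces the paper's bound $(\bfI_n-\bfMs)^{-1}(\bfa+\bfMd\bfr^*-\bfd)^+$, but either suffices).
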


\begin{proof}
The decreasing part is shown in \citet{elsinger09}, we only have to show that the debt iterate in the algorithm therein 
is identical to $\bfr^k$ in Algorithm \ref{alg:elsinger}. 
With our notation, the iterate of the debt component in \citet{elsinger09} is defined as 
\begin{equation}\label{eq:phi_elsinger}
\bfr^k = \min\{\bfd, (\bfw^*(\bfr^{k-1}) + \bfd)^+\},
\end{equation}
where $\bfw^*(\bfr^{k-1})$ is the solution of 
\begin{equation}\label{eq:equity_fp_els}
\bfw = \bfa + \bfMd\bfr^{k-1} + \bfMs\bfw^+ - \bfd.
\end{equation}
However, it follows from \eqref{eq:fp_prop_w} that for $\bar\bfr=\bfr^{k-1}$, 
\begin{equation}
(\bfw^{l+1})^+ = \Phi^{\bfs}\left((\bfw^{l+1})^+;\bfr^{k-1}\right) = \left(\bfa + \bfMd\bfr^{k-1} + \bfMs(\bfw^{l+1})^+ - \bfd\right)^+ = \left(\bfw^*(\bfr^{k-1})\right)^+, 
\end{equation}
where $\bfw^{l+1}$ is the result of Algorithm \ref{alg:equity} with the debt payment vector $\bfr^{k-1}$, 
i.e. $(\bfw^*(\bfr^{k-1}))^+ = \bfs(\bfr^{k-1})$. 
Because of \eqref{eq:equity_fp_els}, we have that
\begin{equation} 
\bfw^*(\bfr^{k-1}) = \bfa + \bfMd\bfr^{k-1} - \bfd + \bfMs(\bfw^*(\bfr^{k-1}))^+,
\end{equation}
from which follows with \eqref{eq:phi_aux_debt} and $\bfa\ge\bzero_n$ that
\begin{equation}
\begin{split}
\bfr^k &= \min\{\bfd, (\bfw^*(\bfr^{k-1}) + \bfd)^+\} \\
       &= \min\{\bfd, \bfa + \bfMd\bfr^{k-1} + \bfMs(\bfw^*(\bfr^{k-1}))^+\} \\
       &= \Phi^{\bfd}(\bfr^{k-1}; (\bfw^*(\bfr^{k-1}))^+)\\
       &= \Phi^{\bfd}(\bfr^{k-1}; \bfs(\bfr^{k-1})).
\end{split}       
\end{equation} 

What remains to be shown is that for the starting point $\bfr^0=\rsmall$ the generated series increases 
and converges to $\bfR^*$, which is done by induction. 
For the induction start check that 
\begin{equation}
\bfr^0 = \min\{\bfd, \bfa\} \le \min\{\bfd, \bfa + \bfMd\bfr^0 + \bfMs\bfs(\bfr^0)\} = \Phi^{\bfd}(\bfr^0;\bfs(\bfr^0)) = \bfr^1.
\end{equation}
As shown in \citet{elsinger09}, the result $\bfw^*(\bfr)$ of Algorithm \ref{alg:equity} is increasing in $\bfr$ from which also follows that 
$\bfs(\bfr)$ is increasing in $\bfr$. 
Hence, $\bfs(\bfr^0)\le\bfs(\bfr^1)$ which completes the induction start.
Assume for the induction step that $\bfr^{k-1}\le \bfr^k$ and consequently $\bfs(\bfr^{k-1})\le \bfs(\bfr^k)$. 
The next debt iterate emerges as		
\begin{equation}
\begin{split}
\bfr^{k+1} &= \min\{\bfd,\bfa + \bfMd\bfr^k + \bfMs(\bfs(\bfr^k))^+\} \\
           &\ge \min\{\bfd,\bfa + \bfMd\bfr^{k-1} + \bfMs(\bfs(\bfr^{k-1}))^+\} \\
           &= \bfr^k,
\end{split}		
\end{equation}
from which also follows that $\bfs(\bfr^{k+1})\ge\bfs(\bfr^k)$ and, hence, the increasing property of the series. 
For the convergence, check that $\bfs(\bfr^k)\ge\bzero_n$ and it holds that
\begin{equation}
\bfs(\bfr^k) = (\bfa + \bfMd\bfr^k + \bfMs\bfs(\bfr^k) - \bfd)^+ \le \bfMs\bfs(\bfr^k) + (\bfa + \bfMd\bfr^k - \bfd)^+.
\end{equation}
Because of $\bfr^k\le\bfr^*$, it follows after some rearrangements that
\begin{equation}\label{eq:els_algo_start_bound}
\bfs(\bfr^k) \le (\bfI_n - \bfMs)^{-1} (\bfa + \bfMd\bfr^k - \bfd)^+ \le (\bfI_n - \bfMs)^{-1} (\bfa + \bfMd\bfr^* - \bfd)^+,
\end{equation}
hence the series $\bfs(\bfr^k)$ is bounded from above as well and therefore converges to some $\mathbf{s}^*$ from below.
The fact that $\Phi^{\bfs}$ is continuous in $(\bfr^t,\bfs^t)^t$ implies together with $\Phi^{\bfs}(\bfs(\bfr^k);\bfr^k)=\bfs(\bfr^k)$ 
that $\Phi^{\bfs}(\bfs^*;\bfr^*)=\bfs^*$. 
Thus $((\bfr^*)^t,(\bfs^*)^t)^t$ solves \eqref{eq:liq_eq_equity}. 
Similarly, we can argue that because of the continuity of $\Phi^{\bfd}$, $\Phi^{\bfd}(\bfr^*;\bfs^*)=\bfr^*$ 
from which follows that $((\bfr^*)^t,(\bfs^*)^t)^t$ also solves \eqref{eq:liq_eq_debt} and therefore must be the fixed point $\bfR^*$.
\end{proof}

As described above, the Elsinger Algorithm determines the equity component of the iterates $\bfR^k$ in a different way than the Picard Iteration. 
An important consequence of this approach is that the iterates of the Elsinger Algorithm will for the decreasing version be in every step 
smaller than the iterates of the Picard Algorithm, as we will show in the next proposition. 
The same statement holds for the increasing version of both procedures, where the iterates from the Elsinger Algorithm will be greater 
than the iterates form the Picard Algorithm. 
Both procedures are difficult to compare concerning their total calculation effort due to different ways of obtaining the next 
equity iterate (cf. Section \ref{subsec:algo_efficency}). 
However, if we only take the number of needed iterations as a quality criterion, 
we can conclude that the Elsinger Algorithm converges faster to $\bfR^*$ than the Picard Iteration, 
no matter whether the algorithms start from the upper or the lower boundary. 

\begin{proposition}\label{prop:elsinger_better_picard}
Let $\bfR^k_{\rm P} = ((\bfr^k_{\rm P})^t, (\bfs^k_{\rm P})^t)^t$ be the $k$-th iterate of the Picard Algorithm 
and  $\bfR^k_{\rm E} = ((\bfr^k_{\rm E})^t, (\bfs^k_{\rm E})^t)^t$ the corresponding iterate of the Elsinger Algorithm. 
\begin{enumerate}[(i)]
  \item For any iterate $k\ge 0$ it holds that $\bfR^k_{\rm P}\ge\bfR^k_{\rm E}$ if $\bfR^0_{\rm P} = \Rgreat$ 
  and $\bfR^0_{\rm E} =(\rgreat^t, \bfs(\rgreat)^t)^t$. 
  In case of $\bfR^0_{\rm P} = \Rsmall$ and $\bfR^0_{\rm E} =(\rsmall^t, \bfs(\rsmall)^t)^t$, 
  we have that $\bfR^k_{\rm P}\le\bfR^k_{\rm E}$ for every iterate. 
  \item Let $\bfR^k$, $k\ge1$, be an iterate either of the Picard Algorithm with $\bfR^0=\Rgreat$ 
  or of the Elsinger Algorithm with $\bfR^0 = (\rgreat^t, \bfs(\rgreat)^t)^t$. 
  Then $\bfR^{k+1}_{\rm P}(\bfR^k)\ge\bfR^{k+1}_{\rm E}(\bfR^k)$. 
  If the starting vector is either $\bfR^0=\Rsmall$ or $\bfR^0 = (\rsmall^t, \bfs(\rsmall)^t)^t$, 
  it holds that $\bfR^{k+1}_{\rm P}(\bfR^k) \le \bfR^{k+1}_{\rm E}(\bfR^k)$.
\end{enumerate}
\end{proposition}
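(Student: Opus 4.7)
The plan is to first establish part (ii), the core comparison, and then derive part (i) via induction on $k$ combined with monotonicity of $\Phi$.

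The main ingredient is the auxiliary inequality $\bfs^k \ge \bfs(\bfr^k)$ for every iterate $\bfR^k$ arising from the specified starting points. For Elsinger iterates this holds with equality by construction. For Picard iterates I prove it by induction on $k$. The base case $\sgreat\ge\bfs(\rgreat)$ reuses the Neumann-series bound from the proof of Proposition~\ref{prop:lim_R}, applied now to $\bfs(\rgreat)$, which is itself a fixed point of $\Phi^{\bfs}(\cdot;\rgreat)$. The inductive step combines the monotonicity of $\bfs(\cdot)$ in its argument (noted in the proof of Proposition~\ref{prop:elsinger_conv}), the decreasing property of the Picard sequence, and the induction hypothesis to chain
\begin{equation*}
\bfs(\bfr^k_{\rm P}) \le \bfs(\bfr^{k-1}_{\rm P}) = \Phi^{\bfs}(\bfs(\bfr^{k-1}_{\rm P});\bfr^{k-1}_{\rm P}) \le \Phi^{\bfs}(\bfs^{k-1}_{\rm P};\bfr^{k-1}_{\rm P}) = \bfs^k_{\rm P}.
\end{equation*}

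With this lemma at hand, part (ii) in the decreasing case follows directly. The debt components satisfy $\bfr^{k+1}_{\rm P}=\Phi^{\bfd}(\bfr^k;\bfs^k)\ge\Phi^{\bfd}(\bfr^k;\bfs(\bfr^k))=\bfr^{k+1}_{\rm E}$ by monotonicity of $\Phi^{\bfd}$ in the equity argument. For the equity components I note that the auxiliary inequality continues to hold for the Picard image $\bfR^{k+1}_{\rm P}$ of $\bfR^k$: if $\bfR^k$ is a Picard iterate this is just the next step of the auxiliary induction, while if $\bfR^k$ is an Elsinger iterate then $\bfs^{k+1}_{\rm P}=\Phi^{\bfs}(\bfs(\bfr^k);\bfr^k)=\bfs(\bfr^k)$ and the desired inequality $\bfs^{k+1}_{\rm P}\ge\bfs(\bfr^{k+1}_{\rm P})$ reduces to $\bfs(\bfr^k)\ge\bfs(\bfr^{k+1}_{\rm E})$, which holds by monotonicity of $\bfs(\cdot)$ together with Proposition~\ref{prop:elsinger_conv}. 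Combining with $\bfr^{k+1}_{\rm P}\ge\bfr^{k+1}_{\rm E}$ and again the monotonicity of $\bfs(\cdot)$ closes the equity comparison:
\begin{equation*}
\bfs^{k+1}_{\rm P} \ge \bfs(\bfr^{k+1}_{\rm P}) \ge \bfs(\bfr^{k+1}_{\rm E}) = \bfs^{k+1}_{\rm E}.
\end{equation*}
The increasing case with starting points $\Rsmall$ and $(\rsmall^t,\bfs(\rsmall)^t)^t$ is entirely symmetric, with every $\ge$ replaced by $\le$ and the decreasing versions of the iteration sequences replaced by their increasing counterparts.

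Part (i) is then obtained by a straightforward induction on $k$. The base case $k=0$ reduces to $\sgreat\ge\bfs(\rgreat)$, which is already in hand. For the inductive step, the assumption $\bfR^k_{\rm P}\ge\bfR^k_{\rm E}$ together with the monotonicity of $\Phi$ gives $\bfR^{k+1}_{\rm P}=\Phi(\bfR^k_{\rm P})\ge\Phi(\bfR^k_{\rm E})$, while part (ii) applied to the Elsinger iterate $\bfR^k_{\rm E}$ gives $\Phi(\bfR^k_{\rm E})\ge\bfR^{k+1}_{\rm E}$; the two inequalities chain. The main obstacle throughout is the auxiliary inequality $\bfs^k\ge\bfs(\bfr^k)$ for Picard iterates; once that side induction is secured, the remainder is routine monotonicity chasing.
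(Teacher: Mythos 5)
Your proof is correct, and it draws on the same toolbox as the paper's — the bound $\bfs(\bfd)\le\sgreat$ from \eqref{eq:els_algo_start_bound}, monotonicity of $\Phi$, $\Phi^{\bfd}$, $\Phi^{\bfs}$ and of $\bfr\mapsto\bfs(\bfr)$, and the monotone-sequence results of Propositions \ref{prop:conv_picard} and \ref{prop:elsinger_conv} — but it is organized genuinely differently. The paper proves (i) directly by a two-component induction (the equity step chains $\bfs^{k+1}_{\rm P}\ge\Phi^{\bfs}(\bfs^k_{\rm E};\bfr^k_{\rm E})\ge\Phi^{\bfs}(\bfs^{k+1}_{\rm E};\bfr^{k+1}_{\rm E})=\bfs^{k+1}_{\rm E}$, using the decreasing Elsinger sequence and the fixed-point property), and for (ii) it obtains the crucial inequality $\bfs^{k+1}_{\rm P}\ge\bfs(\bfr^k)$ by noting that the inner Picard iteration of $\Phi^{\bfs}(\cdot;\bfr^k)$ started at $\bfs^k$ decreases to its fixed point. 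You instead isolate the auxiliary inequality $\bfs^k_{\rm P}\ge\bfs(\bfr^k_{\rm P})$, prove it by an induction along the outer Picard sequence, and then read off (ii) and finally (i) (via monotonicity of $\Phi$ plus (ii) applied to the Elsinger iterate); this buys a single reusable lemma in place of the paper's inner-iteration argument and makes the dependence of (i) on (ii) explicit, at the cost of one extra side induction. One interpretive difference worth noting: when the one-step Elsinger update is applied to a Picard iterate $\bfR^k$, the paper takes the debt update with the supplied equity, so that $\bfr^{k+1}_{\rm E}=\Phi^{\bfd}(\bfr^k;\bfs^k)=\bfr^{k+1}_{\rm P}$, whereas you first replace $\bfs^k$ by $\bfs(\bfr^k)$; under your reading the debt comparison becomes an inequality that needs your auxiliary lemma, but both readings of the (somewhat ambiguous) statement yield the asserted conclusion, so this is a matter of convention rather than a gap.
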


\begin{proof}
\begin{enumerate}[(i)]
  \item The assertion is shown by induction. 
  For $k=0$, suppose that the upper boundary is the starting vector for both algorithms. 
  In Equation \eqref{eq:els_algo_start_bound} it was shown that 
  \begin{equation}
  \bfs^0_{\rm E} = \bfs(\bfd) \le (\bfI_n-\bfMs)^{-1}(\bfa + \bfMd\bfd - \bfd)^+ = \sgreat = \bfs^0_{\rm P}. 
  \end{equation}
  Since $\bfr^0_{\rm E}=\bfr^0_{\rm P}=\bfd$, the induction start is complete. 
  Assume now, that for $k\ge1$ it holds that $\bfR^k_{\rm P}\ge\bfR^k_{\rm E}$. 
  From Proposition \ref{prop:elsinger_conv}, we know that $\bfR^{k+1}_{\rm E}\le\bfR^k_{\rm E}$. 
  This leads to 
  \begin{equation}
  \bfr^{k+1}_{\rm P} = \min\{\bfd, \bfa + \bfMd\bfr^k_{\rm P} + \bfMs\bfs^k_{\rm P}\} \ge 
  \min\{\bfd, \bfa + \bfMd\bfr^k_{\rm E} + \bfMs\bfs^k_{\rm E}\} = \bfr^{k+1}_{\rm E}
  \end{equation}
  and 		
  \begin{equation}
  \begin{split}
  \bfs^{k+1}_{\rm P} &= (\bfa + \bfMd\bfr^k_{\rm P} + \bfMs\bfs^k_{\rm P} - \bfd)^+\\
   &\ge (\bfa + \bfMd\bfr^k_{\rm E} + \bfMs\bfs^k_{\rm E} - \bfd)^+ \\
   &\ge (\bfa + \bfMd\bfr^{k+1}_{\rm E} + \bfMs\bfs^{k+1}_{\rm E} - \bfd)^+ \\
   &= \bfs^{k+1}_{\rm E}.
  \end{split}
  \end{equation}
  If the starting vector is the lower boundary and the series $\bfR^k_{\rm P}$ and $\bfR^k_{\rm E}$ are increasing, 
  the argumentation is similar. 
  \item We prove the claim for the decreasing version of the algorithms, the proof for the reverse direction is similar. 
  First, let $\bfR^k=\bfR^k_{\rm P}$. 
  The next iteration of the debt component is equal for both algorithms, 
  i.e. $\bfr^{k+1}_{\rm P} = \Phi^{\bfd}(\bfr^k;\bfs^k) = \bfr^{k+1}_{\rm E}$. 
  For the equity component, it holds that $\bfs^{k+1}_{\rm P}=\Phi^{\bfs}(\bfs^k;\bfr^k)$. 
  The mapping $\Phi^{\bfs}(\cdot;\bfr^k)$ has a unique fixed point, that we denote by $\bfs(\bfr^k)$ and that can be obtained via a Picard Iteration:
  \begin{equation}
  \lim_{l\to\infty}\left(\Phi^{\bfs}\right)^l(\bfs^k;\bfr^k) = \bfs(\bfr^k). 
  \end{equation}
  The iterates obviously form a decreasing sequence so that
  \begin{equation}
  \bfs^{k+1}_{\rm P} \ge \bfs(\bfr^k) \ge \bfs(\bfr^{k+1}) = \bfs^{k+1}_{\rm E}, 
  \end{equation}
  where the second inequality follows from the fact that $\bfs(\bfr)$ is increasing in $\bfr$ (cf. \citet{elsinger09}). 
  If the $k$-th iterate is given by $\bfR^k=\bfR^k_{\rm E}$, the arguments are analogous to the ones above. 
\end{enumerate}
\end{proof}

\subsection{A Hybrid Algorithm}

To motivate the approach of the next algorithm, we have to compare the functioning of the Elsinger Algorithm and the Picard Iteration. 
The major difference between both iterations emerges in the calculation of the equity component. 
Suppose that we are in iteration step $k\ge0$ and want to calculate the next iteration of the equity component. 
We ignore for an instant that both algorithms deliver different iterates 
and assume that the $k$-th iterate is given by $\bfR^k=((\bfr^k)^t, (\bfs^k)^t)^t$. 
In the Elsinger Algorithm, $\bfr^{k+1}$ is calculated first and then $\bfs^{k+1}$ as the fixed point of $\Phi^{\bfs}(\cdot;\bfr^{k+1})$ 
so that it holds that $\bfs^{k+1}=\Phi^{\bfs}(\bfs^{k+1};\bfr^{k+1})$. 
The Picard iterate, on the other side, can be written as $\bfs^{k+1}=\Phi^{\bfs}(\bfs^k;\bfr^k)$ from which it becomes clear 
that the Picard Iteration neither uses the ``updated'' debt vector $\bfr^{k+1}$, 
nor does it solve a separate fixed point mapping to obtain $\bfs^{k+1}$.

The determination of the debt component $\bfr^{k+1}$, however, is comparable in both algorithms. 
Again starting with $\bfR^k$ we have that $\bfr^{k+1}=\Phi^{\bfd}(\bfr^k;\bfs^k)$ for both procedures. 
An obvious extension of the Elsinger Algorithm would be to utilize the principle used for the equity component for the debt component as well. 
In the article of \citet{eisenberg01}, this concept is used for systems with no cross-ownership of equity, 
i.e. where $\bfMs=\bzero_{n\times n}$. 
In this subsection we will generalize the results of this work and it will turn out that combining both ideas, the one of \citet{elsinger09} 
and the one of \citet{eisenberg01}, will help to minimize the number of needed iteration steps of the global algorithm to find $\bfR^*$.

To explain this idea in more detail, say that for a debt payment vector $\bar\bfr\in[\rsmall, \rgreat]$ we have a corresponding 
equity vector $\bfs(\bar\bfr)$, that is, a fixed point of the mapping $\Phi^{\bfs}(\cdot; \bar\bfr)$ in \eqref{eq:phi_aux_eq}. 
In the Elsinger Algorithm the next debt iterate emerges as $\Phi^{\bfd}(\bar\bfr; \bfs(\bar\bfr))$. 
Instead of using this iterate, our aim is now to find the fixed point of $\Phi^{\bfd}(\cdot; \bfs(\bar\bfr))$ as the new iterate. 
This can be done using the following Algorithm.

\begin{algorithm}{4A}\label{alg:debt}
Suppose $\bfs\ge\bzero_n$. 
\begin{enumerate}
  \item For $k=0$, set $\bfr^0=\bar\bfr$ and determine $D(\bfr^0,\bfs)$ and $\bfL(\bfr^0,\bfs)$. 
  \item\label{alg:fp_debt}
  For $k\ge 1$, solve $\Theta_{\bfr^{k-1},\bfs}(\bfr) = \bfr$ where
  \begin{equation}\label{eq:fp_debt}
  \begin{split}
  \Theta_{\bfr^{k-1},\bfs}(\bfr) &= \bfL(\bfr^{k-1},\bfs)\left(\bfa + \bfMd\left(\bfL(\bfr^{k-1},\bfs)\bfr + 
  \left(\bfI_n-\bfL(\bfr^{k-1},\bfs)\right)\bfd\right) + \bfMs\bfs \right) \\ 
  &\qquad + \left(\bfI_n - \bfL(\bfr^{k-1},\bfs)\right)\bfd
  \end{split}
  \end{equation}
  \item Denote the solution by $\bfr^k$, i.e. $\Theta_{\bfr^{k-1},\bfs}(\bfr^k)=\bfr^k$ 
  and determine $D(\bfr^k,\bfs)$ and $\bfL(\bfr^k,\bfs)$. 
  \item If $D(\bfr^k,\bfs) = D(\bfr^{k-1},\bfs)$, stop the algorithm. 
  Else, set $k = k + 1$ and proceed with step \ref{alg:fp_debt}.
\end{enumerate}
\end{algorithm}

The algorithm is identical to the one given in \citet{eisenberg01} with the modification 
that some additional fixed payments due to equity cross-ownership are included. 
It solves \eqref{eq:liq_eq_debt} for a fixed amount of equity payment $\bfs\ge\bzero_n$, 
i.e. is the fixed point of the mapping $\Phi^{\bfd}(\cdot, \bfs)$, as we will show in the next proposition. 
Denote this fixed point by $\bfr^*(\bfs)$ for instance.
In the Hybrid Algorithm following later on, $\bfr^*(\bfs)$ is used as the next iterate for the debt component. 
To see the difference between the calculation of the debt component in the Elsinger Algorithm, 
assume that an arbitrary debt payment vector $\bfr\in[\bzero_n,\bfd]$ is given 
and that the corresponding equity payment vector $\bfs(\bfr)$ is given too. 
The fixed point of the mapping $\Phi^{\bfd}(\cdot;\bfs(\bfr))$ can on the one hand be obtained using Algorithm \ref{alg:debt} above, 
but on the other hand, we could also use a Picard Iteration, since for any $\bfr\in[\bzero_n,\bfd]$ it holds that
\begin{equation}
\bzero_n \le \Phi^{\bfd}(\bfr;\bfs(\bfd)) = \min\{\bfd, \bfa + \bfMd\bfr + \bfMs\bfs(\bfd)\} \le \bfd.
\end{equation}
Starting with the vector $\bfr$, the fixed point $\bfr^*(\bfs)$ is given as
\begin{equation}\label{eq:picard_debt}
\bfr^*(\bfs) = \lim_{l \to \infty}(\Phi^{\bfd})^l(\bfr;\bfs(\bfd)).
\end{equation}
In the Elsinger Algorithm, however, the next iterate for the debt component is defined as $\Phi^{\bfd}(\bfr;\bfs(\bfr))$ 
which is therefore the first iterate of the Picard Iteration in \eqref{eq:picard_debt}. 
Hence, one can say that using in the Elsinger Algorithm, a simple mapping is applied to obtain the next iterate, 
whereas in the Hybrid Algorithm, the fixed point of a mapping is determined. 
In Proposition \ref{prop:hybrid_better_elsinger}, we will show that when using the idea of the latter algorithm, 
the iterates $\bfR^k$ will always be closer to the searched solution $\bfR^*$.

\begin{proposition}\label{prop:debt_comp_conv}
Let $\bar\bfr\in[\rsmall, \rgreat]$ be a debt payment vector and $\bfs\ge\bzero_n$ a vector of equity payments such that 
\begin{equation}\label{eq:assu_debt_decr}
\Phi^{\bfd}(\bar\bfr; \bfs) \le \bar\bfr.
\end{equation}
\begin{enumerate}
  \item[(i)] Algorithm \ref{alg:debt} generates a well-defined decreasing sequence of vectors $\bfr^k$. 
  \item[(ii)] Let $1\le l\le n$ such that
  \begin{equation}\label{eq:iter_debt_final}
  l:=\min\{j\in\{0, 1 \ld n\}:D(\bfr^j, \bfs)=D(\bfr^{j+1}, \bfs)\}.
  \end{equation} 
  Then $\bfr^*(\bfs)=\bfr^{l+1}$ is the fixed point of the mapping $\Phi^{\bfd}(\cdot; \bfs)$ defined in \eqref{eq:phi_aux_debt}.
  \item[(iii)] Let $d_0=|D(\bar\bfr, \bfs)|$ be the number of firms in default under $\bar\bfr$ and $\bfs$. 
  The fixed point $\bfr^*(\bfs)$ is reached after no more than $n-d_0$ iteration steps.
\end{enumerate}
\end{proposition}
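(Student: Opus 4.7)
The proof parallels that of Proposition~\ref{prop:equ_comp_conv} but runs downward rather than upward. Part (i) has two aspects: well-definedness of each iterate, and the decreasing property. Writing $\bfL_{k-1} := \bfL(\bfr^{k-1},\bfs)$, the equation $\Theta_{\bfr^{k-1},\bfs}(\bfr) = \bfr$ is the linear system
\begin{equation*}
(\bfI_n - \bfL_{k-1}\bfMd\bfL_{k-1})\,\bfr \;=\; \bfL_{k-1}\bigl(\bfa + \bfMd(\bfI_n-\bfL_{k-1})\bfd + \bfMs\bfs\bigr) + (\bfI_n-\bfL_{k-1})\bfd.
\end{equation*}
Because $\bfMd$ has the Elsinger Property, so does $\bfL_{k-1}\bfMd\bfL_{k-1}$, and Lemma~\ref{lem:inv_xos-mat} gives invertibility of the coefficient matrix via its Neumann series; hence $\bfr^k$ exists uniquely.

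For the monotonicity $\bfr^{k} \le \bfr^{k-1}$ I would argue by induction, using the componentwise form of the fixed-point relation: for $i \notin D(\bfr^{k-1},\bfs)$ the equation forces $r^k_i = d_i$, while for $i \in D(\bfr^{k-1},\bfs)$ it gives $r^k_i = a_i + \sum_{j \in D(\bfr^{k-1},\bfs)} M^{\bfd}_{ij}r^k_j + \sum_{j \notin D(\bfr^{k-1},\bfs)} M^{\bfd}_{ij}d_j + \sum_j M^{\bfs}_{ij}s_j$. At the base step, for $i \notin D(\bar\bfr,\bfs)$ the hypothesis $\Phi^{\bfd}(\bar\bfr;\bfs) \le \bar\bfr$ combined with $\bar\bfr \le \rgreat = \bfd$ pins down $\bar r_i = d_i = r^1_i$. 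For $i \in D(\bar\bfr,\bfs)$, subtracting the two defining identities yields
\begin{equation*}
\bar r_i - r^1_i \;\ge\; \sum_{j \in D(\bar\bfr,\bfs)} M^{\bfd}_{ij}\,(\bar r_j - r^1_j),
\end{equation*}
where the contributions from $j \notin D(\bar\bfr,\bfs)$ cancel because $\bar r_j = d_j$ on that set. Restricted to $D(\bar\bfr,\bfs)$ this reads $(\bfI - \bfM_D)\bfe \ge \bzero$ for the principal submatrix $\bfM_D$ of $\bfMd$ on $D(\bar\bfr,\bfs)$; since $\bfM_D$ inherits the Elsinger Property, Lemma~\ref{lem:inv_xos-mat} yields a non-negative $(\bfI - \bfM_D)^{-1}$, forcing $\bfe \ge \bzero$. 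The induction step repeats this argument once I verify that $\Phi^{\bfd}(\bfr^k;\bfs) \le \bfr^k$ persists, which follows from the same case-split computation; it also shows $D(\bfr^{k-1},\bfs) \subseteq D(\bfr^k,\bfs)$ by monotonicity of the aggregate income term.

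For part (ii), once $D(\bfr^l,\bfs) = D(\bfr^{l+1},\bfs)$ the diagonal matrices agree, so $\Theta_{\bfr^l,\bfs}$ and $\Theta_{\bfr^{l+1},\bfs}$ coincide and by uniqueness $\bfr^{l+1} = \bfr^{l+2}$; inductively the iterates remain constant thereafter. To identify $\bfr^{l+1}$ with the fixed point of $\Phi^{\bfd}(\cdot;\bfs)$, I again split on defaults: for $i \notin D(\bfr^{l+1},\bfs)$, $r^{l+1}_i = d_i$ and the definition of the default set yield $\Phi^{\bfd}_i(\bfr^{l+1};\bfs) = d_i = r^{l+1}_i$; for $i \in D(\bfr^{l+1},\bfs) = D(\bfr^l,\bfs)$, substituting $r^{l+1}_j = d_j$ for $j \notin D(\bfr^l,\bfs)$ into $\Theta_{\bfr^l,\bfs}(\bfr^{l+1}) = \bfr^{l+1}$ reproduces the $\Phi^{\bfd}$-equation componentwise. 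Finally, for (iii), the inclusion $D(\bfr^{k-1},\bfs) \subseteq D(\bfr^k,\bfs)$ shown above makes $|D(\bfr^k,\bfs)|$ non-decreasing, and the algorithm only proceeds past step $k$ if this cardinality strictly grows; starting from $d_0$ and bounded by $n$, stabilization occurs in at most $n - d_0$ steps.

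The principal obstacle is the base case of the monotonicity: extracting the inequality $(\bfI - \bfM_D)\bfe \ge \bzero$ on the appropriate sub-index set, and then invoking the non-negativity of $(\bfI - \bfM_D)^{-1}$ via the Elsinger Property, is the delicate point where the algebraic structure of the default-set decomposition really drives the argument.
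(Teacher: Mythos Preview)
Your proof is correct and reaches the same conclusions as the paper, but the route to part (i) differs. The paper absorbs the fixed equity payments into the exogenous assets by setting $\tilde\bfa=\bfa+\bfMs\bfs$ and $\widetilde\bfM^{\bfs}=\bzero_{n\times n}$; this reduces the system to one with debt cross-ownership only, and the decreasing property of the iterates then follows directly from \citet{eisenberg01}, where the supersolution condition corresponds exactly to your hypothesis $\Phi^{\bfd}(\bar\bfr;\bfs)\le\bar\bfr$. Your argument instead unpacks the Eisenberg--Noe mechanism explicitly, via the principal submatrix $\bfM_D$ on the default set and the non-negativity of $(\bfI-\bfM_D)^{-1}$. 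Both are sound; the paper's reduction is shorter and makes the connection to the literature transparent, while your direct argument is self-contained and shows exactly where the Elsinger Property enters. Parts (ii) and (iii) of your proof match the paper's reasoning essentially line for line.
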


\begin{proof}
Since the equity vector $\bfs$ is considered as fixed we can modify the financial system $\F$ 
by setting $\tilde\bfa=\bfa+\bfMs\bfs$ and $\widetilde\bfM^{\bfs}=\bzero_{n\times n}$. 
The new system $\tilde\F =(\tilde\bfa,\bfd,\bfMd,\widetilde\bfM^{\bfs})$ is then a system without cross-ownership of equity. 
Such systems are considered in \citet{eisenberg01}. 
\begin{enumerate}
  \item[(i)] The proof that the sequence $\bfr^k$ decreases is now equivalent to the proof given in \citet{eisenberg01}. 
  A needed assumption in the proof therein is that $\bar\bfr$ is a so-called supersolution 
  which is given because of $\Phi^{\bfd}(\bar\bfr; \bfs(\bar\bfr)) \le \bar\bfr\le\bfd$. 
  What we have to show to complete this part is that the fixed point of the mapping in \eqref{eq:fp_debt} exists and is unique, 
  since their definition of a financial system, differs slightly from ours.  
  Denote by $\bfL:=\bfL(\bfr^k,\bfs)$ the diagonal matrix for $\bfr^k$. 
  The next iterate $\bfr^{k+1}$ is according to \eqref{eq:fp_debt} given by
  \begin{equation}
  \begin{split}
  \bfr^{k+1} &= \bfL\left(\tilde\bfa + \bfMd(\bfL\bfr^{k+1} + (\bfI_n-\bfL)\bfd) \right) + (\bfI_n-\bfL)\bfd \\
             &= \bfL\bfMd\bfL\bfr^{k+1} + \bfL\left(\tilde\bfa + \bfMd(\bfI_n-\bfL)\bfd\right) + (\bfI_n-\bfL)\bfd
  \end{split}
  \end{equation}
  and rearranging yields to 
  \begin{equation}
  \bfr^{k+1} = \left(\bfI_n-\bfL\bfMd\bfL\right)^{-1}\left(\bfL\left(\tilde\bfa + \bfMd(\bfI_n-\bfL)\bfd\right) + (\bfI_n-\bfL)\bfd\right).
  \end{equation}
  Note that $\bfMd$ has the Elsinger Property and, hence, so does $\bfL\bfMd\bfL$, which means that the inverse of $\bfI_n-\bfL\bfMd\bfL$ exists. 
  This proves the uniqueness of $\bfr^{k+1}$.
  
  \item[(ii)]  The argumentation that the sequence converges and becomes constant in the end is analogous to part (ii) 
  of the proof of Proposition \ref{prop:equ_comp_conv}. 
  Since $\bfr^k$ is decreasing, we have that $D(\bfr^k,\bfs)\subseteq D(\bfr^{k+1},\bfs)$ 
  that means the number of firms in default increases. 
  If $D(\bfr^l, \bfs)=D(\bfr^{l+1}, \bfs)$, then we also have that $\bfL(\bfr^l,\bfs)= \bfL(\bfr^{l+1},\bfs)$, 
  from which follows that the mappings $\Theta_{\bfr^l,\bfs}$ and $\Theta_{\bfr^{l+1},\bfs}$ have the same fixed point. 
  It must hold then that all consequent iterates are equal.
  
  To show that $\bfr^{l+1}$ is the fixed point of $\Phi^{\bfd}(\cdot,\bfs)$, first check that 
  by definition of $\bfL(\bfr^{l+1},\bfs)$:
  \begin{equation}
  \bfr^{l+1} = \bfL(\bfr^{l+1},\bfs)\bfr^{l+1} + (\bfI_n - \bfL(\bfr^{l+1},\bfs))\bfd.
  \end{equation}
  It then holds that
  \begin{equation}
  \begin{split}
  \Phi^{\bfd}(\bfr^{l+1}; \bfs) &= \min\{\bfd, \tilde\bfa + \bfMd\bfr^{l+1}\} \\
   &= (\bfI_n - \bfL(\bfr^{l+1},\bfs))\bfd + \bfL(\bfr^{l+1},\bfs)(\tilde\bfa + \bfMd\bfr^{l+1}) \\
   &= (\bfI_n - \bfL(\bfr^l,\bfs))\bfd \\
   &\qquad + \bfL(\bfr^l,\bfs)\left(\tilde\bfa + \bfMd\left(\bfL(\bfr^{l+1},\bfs)\bfr^{l+1} + (\bfI_n - \bfL(\bfr^{l+1},\bfs))\bfd\right) \right)\\ 
   &= (\bfI_n - \bfL(\bfr^l,\bfs))\bfd  + \bfL(\bfr^l,\bfs)\left(\tilde\bfa + \bfMd\left(\bfL(\bfr^l,\bfs)\bfr^{l+1} + (\bfI_n - \bfL(\bfr^l,\bfs))\bfd\right) \right)\\ 
   &= \bfr^{l+1},
  \end{split} 
  \end{equation}
  where the last equality follows from \eqref{eq:fp_debt}.
  \item[(iii)] This part is similar to part (iii) of the proof of Proposition \ref{prop:equ_comp_conv} with the reverse argumentation. 
  The $d_0$ firms in default under the starting vector will stay in default since the series decreases. 
  To achieve a maximum theoretical length of the algorithm, exactly one additional default step has to occur in every new iteration step. 
  This results in no more than $n-d_0$ possible iteration steps.
\end{enumerate}
\end{proof}

The validity of the inequality in \eqref{eq:assu_debt_decr} is crucial for the monotonicity 
of the iterates $\bfr^k$ produced by Algorithm \ref{alg:debt}. 
However, there are situations in which a debt payment vector $\bar\bfr\in[\rsmall,\rgreat]$ is given 
together with an arbitrary vector $\bfs\ge\bzero_n$ and where \eqref{eq:assu_debt_decr} does not hold. 
Think of an algorithm to find $\bfR^*$ that starts with $\Rsmall$. 
In this case the first debt iterate is $\rsmall=\min\{\bfd,\bfa\}$ and the corresponding equity iterate is $\bfs(\rsmall)$. 
Applying $\Phi^{\bfd}$ on these vectors yields to
\begin{equation}
\Phi^{\bfd}(\rsmall;\bfs(\rsmall)) = \min\{\bfd, \bfa + \bfMd\rsmall + \bfMs\bfs(\rsmall)\} \ge \min\{\bfd,\bfa\} = \rsmall
\end{equation}
and to a violation of \eqref{eq:assu_debt_decr}. 
Finding the next debt iterate as the fixed point of $\Phi^{\bfd}(\cdot;\bfs(\rsmall))$ and applying Algorithm \ref{alg:debt} to do so, 
can under certain circumstances lead to a non-monotone series, as one can simply verify by a self-chosen example.
This makes it difficult to prove the convergence of such a series in general. 
Nevertheless, given a debt vector $\bar\bfr$ and $\bfs\ge\bzero_n$, we can still calculate the fixed point 
by avoiding Algorithm \ref{alg:debt} and use a Picard-type algorithm instead. 

\begin{algorithm}{5A}[Picard Iteration for the Debt Component]\label{alg:picard_debt}
Suppose that $\bfs\ge\bzero_n$ and $\eps > 0$.
\begin{enumerate}
  \item For $k=0$, set $\bfr^0=\bar\bfr$.  
  \item\label{alg:fp_picard_debt}
  For $k\ge 1$, determine $\bfr^k = \Phi^{\bfd}(\bfr^{k-1};\bfs)$.
  \item If $\|\bfr^{k-1}-\bfr^k\|<\eps$, stop the algorithm. 
  Else, set $k = k + 1$ and proceed with step \ref{alg:fp_picard_debt}.
\end{enumerate}
\end{algorithm}

\begin{proposition}\label{prop:debt_comp_conv_picard}
Algorithm \ref{alg:picard_debt} delivers a series of decreasing vectors $\bfr^k$ if $\Phi^{\bfd}(\bar\bfr;\bfs)\le \bar\bfr$ 
and a series of increasing vectors if $\Phi^{\bfd}(\bar\bfr;\bfs)\ge \bar\bfr$. 
Both series converge to the unique fixed point of $\Phi^{\bfd}(\cdot;\bfs)$. 
\end{proposition}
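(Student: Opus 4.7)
The plan is to run a routine monotone-convergence argument on the debt-only map $\Phi^{\bfd}(\cdot;\bfs)$, with uniqueness of its fixed point handled by a reduction to an equivalent financial system without equity cross-holdings, exactly in the spirit of the reduction used in the proof of Proposition \ref{prop:debt_comp_conv}.

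First I would record that $\Phi^{\bfd}(\cdot;\bfs)$ is component-wise monotone in its first argument: for $\bfr \le \bfr'$, non-negativity of $\bfMd$ gives $\bfa + \bfMd\bfr + \bfMs\bfs \le \bfa + \bfMd\bfr' + \bfMs\bfs$, and the operation $\min\{\bfd,\cdot\}$ preserves this inequality entry-wise. Suppose now $\Phi^{\bfd}(\bar\bfr;\bfs)\le\bar\bfr$. Then $\bfr^1\le\bfr^0$, and an easy induction using monotonicity of $\Phi^{\bfd}(\cdot;\bfs)$ yields $\bfr^{k+1}\le\bfr^k$ for all $k\ge 0$. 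Moreover each iterate is non-negative, because $\bar\bfr\in[\rsmall,\rgreat]\subseteq[\bzero_n,\bfd]$ makes $\bfr^0\ge\bzero_n$ and then $\min\{\bfd,\bfa+\bfMd\bfr^{k-1}+\bfMs\bfs\}$ is the minimum of two non-negative vectors. A decreasing sequence in $\R^n$ that is bounded below converges, and continuity of $\Phi^{\bfd}(\cdot;\bfs)$ makes the limit a fixed point. The increasing case is entirely symmetric: $\Phi^{\bfd}(\bar\bfr;\bfs)\ge\bar\bfr$ combined with monotonicity gives $\bfr^{k+1}\ge\bfr^k$, and the iterates stay bounded above by $\bfd$ because $\Phi^{\bfd}$ is defined via $\min\{\bfd,\cdot\}$.

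For uniqueness of the fixed point, I would set $\tilde\bfa = \bfa+\bfMs\bfs$ and $\widetilde\bfM^{\bfs} = \bzero_{n\times n}$ and consider the modified system $\tilde\F = (\tilde\bfa,\bfd,\bfMd,\widetilde\bfM^{\bfs})$. The zero matrix trivially satisfies the Elsinger Property, so Assumption \ref{assu:holding_mat} still holds for $\tilde\F$. For this system the fixed-point map $\Phi$ decouples; its debt component depends only on $\bfr$ and coincides precisely with $\Phi^{\bfd}(\cdot;\bfs)$. Any two fixed points of $\Phi^{\bfd}(\cdot;\bfs)$ would therefore lift to two fixed points of $\Phi$ for $\tilde\F$, contradicting Theorem \ref{theo:unique_fp}. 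Consequently the fixed point of $\Phi^{\bfd}(\cdot;\bfs)$ is unique, and both monotone series must converge to it. No step poses a serious obstacle; the only mild care required is checking that the iterates remain inside $[\bzero_n,\bfd]$ so that the monotonicity induction is well-posed.
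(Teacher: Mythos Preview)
Your proposal is correct and follows essentially the same monotone-convergence argument as the paper: both use monotonicity of $\Phi^{\bfd}(\cdot;\bfs)$ and boundedness in $[\bzero_n,\bfd]$ to obtain convergence to a fixed point, then invoke the Elsinger Property for uniqueness. The only minor difference is that the paper cites \citet{elsinger09}, Theorem~3, directly for uniqueness, whereas you route through the reduced system $\tilde\F$ and Theorem~\ref{theo:unique_fp} of this paper; your version is slightly more self-contained but substantively equivalent.
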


\begin{proof}
Assume that $\Phi^{\bfd}(\bar\bfr;\bfs)\ge \bar\bfr=\bfr^0$. 
For the first iterate, it holds that $\bfr^1=\Phi^{\bfd}(\bar\bfr;\bfs)\ge\bfr^0$. 
Via induction, it follows that $\bfr^{k+1}\ge\bfr^k$ for all $k\ge1$. 
Because the monotone series $\bfr^k$ is bounded by $\bfd$, it must converge to some fixed point. 
Because of the fact that the Elsinger condition holds, it follows directly that this fixed must be unique (see \citet{elsinger09}, Theorem 3). 
The argumentation is similar if $\Phi^{\bfd}(\bar\bfr;\bfs)\le \bar\bfr$.
\end{proof}

The Algorithms \ref{alg:debt} and \ref{alg:picard_debt} both enable us to calculate a new debt iterate given an equity vector. 
Together with Algorithm \ref{alg:equity} for the equity component, we can now combine both procedures in a common algorithm 
that searches for the fixed point $\bfR^*$.

\begin{algorithm}{6}[Hybrid Algorithm]\label{alg:comb_method}
Set $\eps > 0$. \hfill
\begin{enumerate}
  \item For $k=0$, choose $\bfr^0\in\{\rgreat,\rsmall\}$ and determine $\bfs(\bfr^0)$ with Algorithm \ref{alg:equity}. 
  \item\label{alg:comb_method_fp}
  For $k\ge1$:
  \begin{enumerate}
    \item[2.1] Determine $\bfr^k$ using Algorithm \ref{alg:debt} if $\bfr^0=\rgreat$ 
    or using Algorithm \ref{alg:picard_debt} if $\bfr^0=\rsmall$ in both cases with $\bfs=\bfs(\bfr^{k-1})$. 
    \item[2.2] Determine $\bfs^k = \bfs(\bfr^k)$ using Algorithm \ref{alg:equity}.
  \end{enumerate}
  \item If $\left\|\begin{pmatrix}\bfr^{k-1}\\\bfs^{k-1}\end{pmatrix} - \begin{pmatrix}\bfr^k\\\bfs^k\end{pmatrix}\right\|<\eps$, 
  stop the algorithm. 
  Else, set $k = k + 1$ and proceed with step \ref{alg:comb_method_fp}.
\end{enumerate}
\end{algorithm}

For given $\bfr = \bfr^k, k\ge 0$, the Hybrid Algorithm determines $\bfs^k=\bfs(\bfr^k)$ as the correct equity value 
that solves \eqref{eq:liq_eq_equity} and for given $\bfs = \bfs(\bfr^k), k\ge 0$, 
it determines the correct debt value $\bfr^{k+1}$ that solves \eqref{eq:liq_eq_debt}. 
As such, conditional on the values determined in the previous step, the algorithm calculates an exact solution 
of either \eqref{eq:liq_eq_equity} or \eqref{eq:liq_eq_debt} in the next iteration step. 

\begin{proposition}\label{prop:comb_meth_conv}
The Hybrid Algorithm delivers a series of decreasing vectors if $\bfr^0=\rgreat$ that converges to the fixed point $\bfR^*$. 
In case of $\bfr^0=\rsmall$ the series is increasing with the same limit.
\end{proposition}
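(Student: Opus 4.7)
My plan is to treat the decreasing case $\bfr^0 = \rgreat$ and the increasing case $\bfr^0 = \rsmall$ in parallel by induction on the outer iteration index $k$, and then to pass to the limit using monotone boundedness together with Theorem \ref{theo:unique_fp}. The two main ingredients that I would exploit throughout are that the inner equity map has the monotonicity property $\bfr \mapsto \bfs(\bfr)$ (used in the proof of Proposition \ref{prop:elsinger_conv}), and that each outer step produces an exact fixed point of $\Phi^{\bfd}(\cdot; \bfs(\bfr^{k-1}))$, so the only bookkeeping is to ensure the appropriate supersolution/subsolution hypothesis for Algorithm \ref{alg:debt} or Algorithm \ref{alg:picard_debt} survives from one step to the next.

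For the decreasing case, the base is immediate: $\bfr^0 = \bfd$ satisfies $\Phi^{\bfd}(\bfr^0;\bfs(\bfr^0)) \le \bfd = \bfr^0$, so by Proposition \ref{prop:debt_comp_conv} Algorithm \ref{alg:debt} produces a well-defined $\bfr^1 \le \bfr^0$ equal to the fixed point of $\Phi^{\bfd}(\cdot;\bfs(\bfr^0))$. Because $\bfs(\cdot)$ is monotone non-decreasing, $\bfs^1 = \bfs(\bfr^1) \le \bfs(\bfr^0) = \bfs^0$, which then yields
\begin{equation*}
\Phi^{\bfd}(\bfr^1;\bfs^1) \le \Phi^{\bfd}(\bfr^1;\bfs^0) = \bfr^1,
\end{equation*}
so the supersolution hypothesis is preserved and Algorithm \ref{alg:debt} applies again. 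The inductive step is literally the same argument with $k-1,k$ replaced by $k,k+1$.

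For the increasing case, the base reduces to $\Phi^{\bfd}(\rsmall;\bfs(\rsmall)) \ge \rsmall$, which holds because $\bfMd\rsmall,\bfMs\bfs(\rsmall) \ge \bzero_n$ and $\rsmall = \min\{\bfd,\bfa\}$. Proposition \ref{prop:debt_comp_conv_picard} then gives an increasing Picard sequence whose limit $\bfr^1$ is the fixed point of $\Phi^{\bfd}(\cdot;\bfs(\rsmall))$ and satisfies $\bfr^1 \ge \rsmall$. Monotonicity of $\bfs(\cdot)$ again yields $\bfs^1 \ge \bfs^0$, and
\begin{equation*}
\Phi^{\bfd}(\bfr^1;\bfs^1) \ge \Phi^{\bfd}(\bfr^1;\bfs^0) = \bfr^1,
\end{equation*}
which is the subsolution condition needed to apply Algorithm \ref{alg:picard_debt} again. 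Induction gives $\bfr^{k+1} \ge \bfr^k$ and $\bfs^{k+1} \ge \bfs^k$ throughout.

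Finally, in either case the sequence $\bfR^k$ is monotone and contained in $[\Rsmall,\Rgreat]$ (since $\bfr^k$ is a fixed point of $\Phi^{\bfd}(\cdot;\bfs^{k-1})$ and hence $\rsmall \le \bfr^k \le \bfd$, while $\bfs^k = \bfs(\bfr^k)$ inherits $\bfs^k \le \sgreat$ from Proposition \ref{prop:lim_R}), so it converges to some $\widetilde\bfR$. Since each iterate satisfies the exact identities $\bfr^k = \Phi^{\bfd}(\bfr^k;\bfs^{k-1})$ and $\bfs^k = \Phi^{\bfs}(\bfs^k;\bfr^k)$, continuity of $\Phi^{\bfd}$ and $\Phi^{\bfs}$ lets me pass to the limit to obtain $\widetilde\bfR = \Phi(\widetilde\bfR)$; uniqueness from Theorem \ref{theo:unique_fp} then gives $\widetilde\bfR = \bfR^*$. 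The main obstacle I expect is the bookkeeping in the inductive step showing that the directional hypothesis needed by the inner sub-algorithm (supersolution for Algorithm \ref{alg:debt}, subsolution for Algorithm \ref{alg:picard_debt}) is preserved; everything else is a direct application of monotone convergence together with the previously established results.
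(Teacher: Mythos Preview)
Your proof is correct and follows essentially the same approach as the paper: induction on $k$ using the monotonicity of $\bfr\mapsto\bfs(\bfr)$ to propagate the supersolution/subsolution condition $\Phi^{\bfd}(\bfr^k;\bfs(\bfr^k))\lessgtr\bfr^k$ from one outer step to the next, followed by monotone boundedness, continuity of $\Phi^{\bfd},\Phi^{\bfs}$, and uniqueness from Theorem~\ref{theo:unique_fp}. The only minor slip is citing Proposition~\ref{prop:lim_R} for the bound $\bfs^k\le\sgreat$ (that proposition concerns the fixed point $\bfR^*$, not $\bfs(\bfr^k)$); the correct reference is the estimate~\eqref{eq:els_algo_start_bound}, but the conclusion is the same.
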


\begin{proof}
First, suppose that $\bfr^0=\rgreat$. 
We will first show by induction that the series decreases. 
For the induction start note that
\begin{equation}
\bfr^1 = \Phi^{\bfd}(\bfr^1;\bfs(\bfr^0)) = \min\{\bfd, \bfa + \bfMd\bfr^1 + \bfMs\bfs(\bfr^0)\} \le \bfd = \bfr^0.
\end{equation}
As mentioned in the proof of Proposition \ref{prop:elsinger_better_picard}, the equity vectors $\bfs(\bfr)$ are increasing in $\bfr$ 
which yields to $\bfs(\bfr^1)\le \bfs(\bfr^0)$. 
For the induction step, assume that for $k>1$ it holds that $\bfr^{k-1}\ge\bfr^k$ and consequently $\bfs(\bfr^{k-1})\ge\bfs(\bfr^k)$. 
Since $\bfr^k=\Phi^{\bfd}(\bfr^k;\bfs(\bfr^{k-1}))$ and because of
\begin{equation}
\begin{split}
\Phi^{\bfd}(\bfr^k;\bfs(\bfr^k)) &= \min\{\bfd, \bfa + \bfMd\bfr^k + \bfMs\bfs(\bfr^k)\} \\
                                 &\le \min\{\bfd, \bfa + \bfMd\bfr^k + \bfMs\bfs(\bfr^{k-1})\} \\
                                 &= \Phi^{\bfd}(\bfr^k;\bfs(\bfr^{k-1}))\\
                                 &= \bfr^k 
\end{split}								 
\end{equation}
the assumption \eqref{eq:assu_debt_decr} is fulfilled. 
The next iterate $\bfr^{k+1}$ emerges from a decreasing sequence produced by applying Algorithm \ref{alg:debt} beginning with $\bar\bfr=\bfr^k$. 
Hence $\bfr^{k+1}\le\bfr^k$ and thus $\bfs(\bfr^{k+1})\le\bfs(\bfr^k)$. 
Next step is to show that the series converges to $\bfR^*$. 
We have that the two sequences $((\bfr^{k+1})^t, (\bfs(\bfr^k))^t)^t$ and $((\bfr^k)^t, (\bfs(\bfr^k))^t)^t$ are both decreasing in $(\R_0^+)^{2n}$ 
and therefore converge to the same limit $((\bfr^*)^t, (\bfs^*)^t)^t\in(\R_0^+)^{2n}$. 
Because of the continuity of $\Phi^{\bfd}$ and $\Phi^{\bfs}$ it must hold that $\Phi^{\bfd}(\bfr^*,\bfs^*)=\bfr^*$ 
and $\Phi^{\bfs}(\bfs^*,\bfr^*)=\bfs^*$. 
Thus, $((\bfr^*)^t, (\bfs^*)^t)^t$ solves \eqref{eq:liq_eq_equity} and \eqref{eq:liq_eq_debt}. 
The proof for $\bfr^0=\rsmall$ is similar. 
\end{proof}

In Proposition \ref{prop:elsinger_better_picard}, we have shown that when using the Elsinger Algorithm, 
the iterates will always be nearer to the solution $\bfR^*$ than the corresponding iterates of the Picard Algorithm. 
This lead to the conclusion that the iteration number is minimized for the Elsinger Algorithm. 
The next Proposition shows the same when comparing the Elsinger and the Hybrid Algorithm and it will become clear 
that the Hybrid Algorithm will need less iteration steps to reach $\bfR^*$ than the Elsinger Algorithm. 

\begin{proposition}\label{prop:hybrid_better_elsinger}
As in Proposition \ref{prop:elsinger_better_picard}, we denote the iterates of the two algorithms with subscripts, 
where E stands for the Elsinger and H for the Hybrid Algorithm. 
\begin{enumerate}[(i)]
  \item For any iterate $k\ge 1$ it holds that $\bfR^k_{\rm E}\ge \bfR^k_{\rm H}$ if $\bfR^0 = (\rgreat^t, \bfs(\rgreat)^t)^t$ 
  and $\bfR^k_{\rm E}\le \bfR^k_{\rm H}$ when $(\rsmall^t, \bfs(\rsmall)^t)^t$ is the starting vector of both algorithms. 
  \item Let $\bfR^k$, $k\ge0$, be an iterate either of the Elsinger Algorithm or of the Hybrid Algorithm 
  that started with $\bfR^0 = (\rgreat^t, \bfs(\rgreat)^t)^t$. 
  Then $\bfR^{k+1}_{{\rm E}}(\bfr^k)\ge \bfR^{k+1}_{{\rm H}}(\bfr^k)$ for the next iterates which were calculated 
  with either the Elsinger or the Hybrid Algorithm starting from $\bfR^k$. 
  If $\bfR^0 = ((\rsmall^t, \bfs(\rsmall)^t)^t$, it holds that $\bfR^{k+1}_{{\rm E}}(\bfr^k) \le \bfR^{k+1}_{{\rm H}}(\bfr^k)$.
\end{enumerate}
\end{proposition}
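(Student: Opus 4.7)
The plan is to mirror the structure of Proposition \ref{prop:elsinger_better_picard}: I would prove the single-step comparison of part (ii) first, since it captures the essential mechanism, and then deduce (i) by induction using (ii) combined with monotonicity of the relevant one-shot maps. In both branches (decreasing and increasing) the proofs are symmetric, so I would treat only the decreasing case in detail.

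For part (ii), observe that in either algorithm the iterate has the form $\bfR^k = ((\bfr^k)^t, (\bfs(\bfr^k))^t)^t$, because the equity component is always recomputed via Algorithm \ref{alg:equity}. One Elsinger step produces $\bfr^{k+1}_{\rm E} = \Phi^{\bfd}(\bfr^k; \bfs(\bfr^k))$, while one Hybrid step returns $\bfr^{k+1}_{\rm H} = \bfr^*(\bfs(\bfr^k))$, the unique fixed point of $\Phi^{\bfd}(\cdot; \bfs(\bfr^k))$. The crucial observation, analogous to the trick used in Proposition \ref{prop:elsinger_better_picard}(ii), is that by Proposition \ref{prop:debt_comp_conv_picard} this fixed point can be obtained as the Picard limit $\lim_{m\to\infty}\bigl(\Phi^{\bfd}(\cdot;\bfs(\bfr^k))\bigr)^m(\bfr^k)$. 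In the decreasing branch this sequence is monotonically decreasing with first iterate equal to $\bfr^{k+1}_{\rm E}$, so its limit $\bfr^{k+1}_{\rm H}$ satisfies $\bfr^{k+1}_{\rm H} \leq \bfr^{k+1}_{\rm E}$. Monotonicity of $\bfs(\cdot)$ (recorded by \citet{elsinger09} and used in Proposition \ref{prop:elsinger_conv}) then passes the inequality to the equity component. Before this Picard argument applies I must check the supersolution hypothesis $\Phi^{\bfd}(\bfr^k; \bfs(\bfr^k)) \leq \bfr^k$: for Elsinger iterates this follows from the monotonicity of that algorithm, while for Hybrid iterates I combine $\bfr^k = \Phi^{\bfd}(\bfr^k; \bfs(\bfr^{k-1}))$ with $\bfs(\bfr^k) \leq \bfs(\bfr^{k-1})$ and monotonicity of $\Phi^{\bfd}(\bfr;\cdot)$.

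For part (i) I would induct on $k$. The case $k=1$ is immediate from (ii) because both algorithms start from the same vector. For the induction step in the decreasing case, assume $\bfR^{k-1}_{\rm H} \leq \bfR^{k-1}_{\rm E}$, hence $\bfs(\bfr^{k-1}_{\rm H}) \leq \bfs(\bfr^{k-1}_{\rm E})$. I would then invoke monotonicity of the map $\bfs \mapsto \bfr^*(\bfs)$ to obtain $\bfr^k_{\rm H} = \bfr^*(\bfs(\bfr^{k-1}_{\rm H})) \leq \bfr^*(\bfs(\bfr^{k-1}_{\rm E}))$, and chain this with the single-step bound from (ii), namely $\bfr^*(\bfs(\bfr^{k-1}_{\rm E})) \leq \Phi^{\bfd}(\bfr^{k-1}_{\rm E}; \bfs(\bfr^{k-1}_{\rm E})) = \bfr^k_{\rm E}$, to conclude $\bfr^k_{\rm H} \leq \bfr^k_{\rm E}$. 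Monotonicity of $\bfs(\cdot)$ again transfers the inequality to the equity part. The increasing branch runs identically with all inequalities reversed, Algorithm \ref{alg:picard_debt} replacing Algorithm \ref{alg:debt}, and with the corresponding subsolution condition $\Phi^{\bfd}(\bfr^k;\bfs(\bfr^k))\geq\bfr^k$ verified in the same fashion.

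The main obstacle I foresee is the auxiliary monotonicity of $\bfs\mapsto\bfr^*(\bfs)$, which is not stated earlier in the paper. I would dispose of this as a short side lemma: for $\bfs_1\leq\bfs_2$, start a Picard iteration of $\Phi^{\bfd}(\cdot;\bfs_2)$ from $\bfr^*(\bfs_1)$; since $\Phi^{\bfd}(\bfr^*(\bfs_1);\bfs_2)\geq\Phi^{\bfd}(\bfr^*(\bfs_1);\bfs_1)=\bfr^*(\bfs_1)$, the iterates increase, are bounded by $\bfd$, and converge by Proposition \ref{prop:debt_comp_conv_picard} to the unique fixed point $\bfr^*(\bfs_2)$, yielding $\bfr^*(\bfs_1)\leq\bfr^*(\bfs_2)$. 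Once this lemma and the supersolution/subsolution checks are in place, the remainder is just monotone bookkeeping analogous to Proposition \ref{prop:elsinger_better_picard}.
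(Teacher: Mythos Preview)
Your proposal is correct and the outcome is the same as the paper's, but the organization and the key device differ a little.

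In the paper, part (i) is proved first, directly by induction, and the crucial step exploits the \emph{fixed-point identity} for the Hybrid debt iterate: since $\bfr^k_{\rm H}=\Phi^{\bfd}(\bfr^k_{\rm H};\bfs(\bfr^{k-1}_{\rm H}))$, one can chain
\[
\bfr^k_{\rm E}=\Phi^{\bfd}(\bfr^{k-1}_{\rm E};\bfs(\bfr^{k-1}_{\rm E}))\ge \Phi^{\bfd}(\bfr^{k-1}_{\rm H};\bfs(\bfr^{k-1}_{\rm H}))\ge \Phi^{\bfd}(\bfr^k_{\rm H};\bfs(\bfr^{k-1}_{\rm H}))=\bfr^k_{\rm H},
\]
using only the induction hypothesis and the already-established monotonicity $\bfr^{k-1}_{\rm H}\ge\bfr^k_{\rm H}$ from Proposition~\ref{prop:comb_meth_conv}. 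The same fixed-point trick handles part (ii). In particular the paper never needs your side lemma on monotonicity of $\bfs\mapsto\bfr^*(\bfs)$.

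Your route instead views the Hybrid debt step as the Picard limit of $\Phi^{\bfd}(\cdot;\bfs(\bfr^k))$ started at $\bfr^k$, so that the Elsinger step is its first iterate and the inequality for (ii) is immediate from monotonicity of that Picard sequence; you then feed (ii) plus the auxiliary lemma into the induction for (i). This is a perfectly valid and arguably more modular decomposition (the monotonicity of $\bfr^*(\cdot)$ is a natural companion to the monotonicity of $\bfs(\cdot)$ already used), at the cost of one extra lemma. The paper's single fixed-point substitution is shorter and avoids that detour.
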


\begin{proof}
\begin{enumerate}[(i)]
  \item Let $\bfR^0 = (\rgreat^t, \bfs(\rgreat)^t)^t$. 
  From Proposition \ref{prop:comb_meth_conv} we know that $\bfr^1_{\rm H}\le\bfr^0_{\rm H}=\bfd$ which yields to
  \begin{equation}
  \begin{split}
  \bfr^1_{\rm E} = \min\{\bfd, \bfa + \bfMd\bfd + \bfMs\bfs(\bfd)\} \ge 
                   \min\{\bfd, \bfa + \bfMd\bfr^1_{\rm H} + \bfMs\bfs(\bfd)\} = \bfr^1_{\rm H}.  
  \end{split}
  \end{equation}
  Further, since $\bfs(\bfr)$ is increasing in $\bfr$ (cf. Proposition \ref{prop:elsinger_better_picard}),  
  $\bfs^1_{\rm E} = \bfs(\bfr^1_{\rm E}) \ge \bfs(\bfr^1_{\rm H}) = \bfs^1_{\rm H}$, which completes the induction start.
  For the induction step, assume that it holds for $k>1$ that $\bfR^{k-1}_{\rm E}\ge\bfR^{k-1}_{\rm H}$. 
  Because of Proposition \ref{prop:comb_meth_conv}, $\bfr^{k-1}_{\rm H}\ge \bfr^k_{\rm H}$ and thus
  \begin{equation}
  \begin{split}
  \bfr^k_{\rm E} &= \min\{\bfd, \bfa + \bfMd\bfr^{k-1}_{\rm E} + \bfMs\bfs(\bfr^{k-1}_{\rm E})\} \\
   &\ge \min\{\bfd, \bfa + \bfMd\bfr^{k-1}_{\rm H} + \bfMs\bfs(\bfr^{k-1}_{\rm H})\} \\
   &\ge \min\{\bfd, \bfa + \bfMd\bfr^k_{\rm H} + \bfMs\bfs(\bfr^{k-1}_{\rm H})\} \\
   &= \bfr^k_{\rm H},
  \end{split} 
  \end{equation}
  where we again used the fact that $\bfs(\bfr)$ is increasing in $\bfr$ 
  from which follows that $\bfs(\bfr^{k-1}_{\rm E})\ge\bfs(\bfr^{k-1}_{\rm H})$ 
  and also $\bfs(\bfr^k_{\rm E})\ge\bfs(\bfr^k_{\rm H})$. 
  The proof when $\bfR^0 = (\rsmall^t, \bfs(\rsmall)^t)^t$ is completely analogous. 
  
  \item Let $\bfR^0 = (\rgreat^t, \bfs(\rgreat)^t)^t$ and $\bfR^k=\bfR^k_{\rm E}$. 
  Note that because of $\bfr^k_{\rm E}\le\bfr^{k-1}_{\rm E}$ is holds that
  \begin{equation}
  \begin{split}
  \Phi^{\bfd}(\bfr^k_{\rm E}; \bfs(\bfr^k_{\rm E})) &= \min\{\bfd, \bfa + \bfMd\bfr^k_{\rm E} + \bfMs\bfs(\bfr^k_{\rm E})\} \\ 
   &\le \min\{\bfd, \bfa + \bfMd\bfr^{k-1}_{\rm E} + \bfMs\bfs(\bfr^{k-1}_{\rm E})\} \\
   &= \bfr^k_{\rm E}.
  \end{split} 
  \end{equation}
  Therefore, the assumption in \eqref{eq:assu_debt_decr} is fulfilled which ensures that $\bfr^{k+1}_{\rm H}\le\bfr^k_{\rm E}$. 
  For the next iterate it follows that
  \begin{equation}
  \begin{split}
  \bfr^{k+1}_{\rm E} &= \min\{\bfd, \bfa + \bfMd\bfr^k_{\rm E} + \bfMs\bfs(\bfr^k_{\rm E})\} \\ 
   &\ge \min\{\bfd, \bfa + \bfMd\bfr^{k+1}_{\rm H} + \bfMs\bfs(\bfr^k_{\rm E})\} \\ 
   &= \bfr^{k+1}_{\rm H},
  \end{split} 
  \end{equation}
  which in turn implies $\bfs^{k+1}_{\rm E}\ge\bfs^{k+1}_{\rm H}$. 
  On the other hand, starting with $\bfR^k=\bfR^k_{\rm H}$ yields because of $\bfr^{k+1}_{\rm H}\le \bfr^k_{\rm H}$ to
  \begin{equation}
  \begin{split}
  \bfr^{k+1}_{\rm E} &= \min\{\bfd, \bfa + \bfMd\bfr^k_{\rm H} + \bfMs\bfs(\bfr^k_{\rm H})\} \\ 
   &\ge \min\{\bfd, \bfa + \bfMd\bfr^{k+1}_{\rm H} + \bfMs\bfs(\bfr^k_{\rm H})\} \\
   &= \bfr^{k+1}_{\rm H}.
  \end{split} 
  \end{equation}
  If follows from this results that $\bfs^{k+1}_{\rm E}\le\bfs^{k+1}_{\rm H}$. 
  A similar argumentation together with Proposition \ref{prop:debt_comp_conv_picard} delivers the proof 
  in case of $\bfR^0 = (\rsmall^t, \bfs(\rsmall)^t)^t$.
\end{enumerate}
\end{proof}

Of course, within an iteration step of the Hybrid Algorithm, potentially many linear equation systems have to be solved 
since for the debt component Algorithm \ref{alg:debt} is applied, which results in higher computational costs. 
But if we ignore for a moment this circumstance it follows from Proposition \ref{prop:hybrid_better_elsinger} 
that the convergence speed of the Hybrid Algorithm is higher than the one of the Elsinger Algorithm.

\section{Finite Algorithms}\label{sec:def-set-algo}

The Algorithms in the previous section all had the drawback that it could not be ensured that the solution $\bfR^*$ 
is reached in a finite number of iteration steps. 
In this section we will present two ways in which potentially infinite solution algorithms can be turned into procedures 
that reach the solution in finitely many steps. 
The common principle of these methods is to include the information which firms are in default under a current iterate $\bfR^k$. 
It turns out that this slight modification helps to overcome the disadvantage of potentially infinitely many iteration steps.

To guarantee that the forthcoming procedures are well-defined, 
we have to drop the Elsinger Property and demand a stricter property of the ownership matrices (see \citet{fischer14}). 

\begin{assumption}\label{assu:os_mat_norm}
For both the debt and the equity ownership matrices it holds that $\|\bfMd\|<1$ and $\|\bfMs\|<1$.
\end{assumption}

For the remainder of this section we suppose that Assumption \ref{assu:os_mat_norm} holds. 
Note that Assumption \ref{assu:os_mat_norm} implies Assumption \ref{assu:holding_mat}, but not the other way round.
The financial system therefore still has a unique solution under Assumption \ref{assu:os_mat_norm}.

\begin{definition}\label{def:sys_sol}
Let $\bfR=(\bfr^t,\bfs^t)^t\in(\R_0^+)^{2n}$ be an arbitrary vector with corresponding default set $D(\bfR)$ and default matrix $\bfL=\bfL(\bfR)$. 
The \emph{pseudo solution $\widehat\bfR\in(\R_0^+)^{2n}$ of \eqref{eq:liq_eq_debt} 
and \eqref{eq:liq_eq_equity} that belongs to $D(\bfR)$} is defined by
\begin{equation}
\widehat\bfR = \begin{pmatrix} (\bfI_n-\bfL)\bfd + \bfL\bfx \\
(\bfI_n-\bfL)\bfx \end{pmatrix},
\end{equation}
where $\bfx\in\R^n$ is the solution of the linear equation system $\bfA\bfx=\bfb$ with 
\begin{equation}\label{eq:sys_sol_A}
\bfA = \bfI_n - \left(\bfMd\bfL + \bfMs(\bfI_n - \bfL)\right)\in\R^{n\times n} 
\end{equation}
and 
\begin{equation}\label{eq:sys_sol_b}
\bfb = \bfa + \bfMd(\bfI_n-\bfL)\bfd - (\bfI_n-\bfL)\bfd\in\R^n.
\end{equation}
\end{definition}

To motivate the definition of a pseudo solution, assume that it was known for each firm whether it was in default 
under the solution of \eqref{eq:liq_eq_debt} and \eqref{eq:liq_eq_equity} or not. 
Denote by $D^*\subseteq\mathcal N$ the set of firms that were in default under $\bfR^*$:
\begin{equation}
D^* = D(\bfr^*,\bfs^*) = \left\{i\in\mathcal N:a_i + \sum_{j=1}^n M^{\bfd}_{ij} r^*_j + \sum_{j=1}^n M^{\bfs}_{ij} s^*_j < d_i\right\}
\end{equation} 
and let $\bfL^*=\bfL(\bfr^*,\bfs^*)$ be the corresponding default matrix. 
We assume that the set was known even though this information is not available \emph{a priori}. 
However, if we had this information, no iteration procedure would be needed to find the fixed point $\bfR^*$. 
We only had to compute the pseudo solution that belongs to $D^*$, as is shown in Proposition \ref{prop:def_set_sol}.

The reason why we have to restrict the following considerations to ownership matrices with a matrix norm smaller one is 
because we have to guarantee that $\bfx$ from Definition \ref{def:sys_sol} is uniquely defined. 
This can only be ensured if $\|\bfMd\|<1$ and $\|\bfMs\|<1$ since then $\|\bfMd\bfL + \bfMs(\bfI_n - \bfL)\|<1$ for any $\bfL$ as well. 
This in turn implies that $\bfA$ in \eqref{eq:sys_sol_A} is invertible. 
For ownership matrices $\bfMd$ and $\bfMs$ that have the Elsinger Property, the invertibility of $\bfA$ is obviously not always given.

\begin{proposition}\label{prop:def_set_sol}
The pseudo solution belonging to $D^*$ is the solution $\bfR^*$ of the financial system $\mathcal F(\bfa, \bfMd, \bfMs, \bfd)$, i.e.
\begin{equation}
\bfR^* = \begin{pmatrix} (\bfI_n-\bfL^*)\bfd + \bfL^*\bfx \\
(\bfI_n-\bfL^*)\bfx \end{pmatrix},
\end{equation}
where $\bfL^*$ is the default matrix belonging to $D^*$ and $\bfx$ is the solution of the equation $\bfA\bfx=\bfb$ defined in 
\eqref{eq:sys_sol_A} and \eqref{eq:sys_sol_b}. 
\end{proposition}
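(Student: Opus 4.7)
The plan is to construct a candidate vector $\widetilde\bfx$ from $\bfR^*$, check that it satisfies the linear system $\bfA\bfx = \bfb$, and invoke uniqueness to conclude that substituting it into the pseudo-solution formula recovers $\bfR^*$. First I would verify that the linear system has a unique solution: because $\bfL^*$ is a $0$--$1$ diagonal matrix, each column of $\bfMd\bfL^* + \bfMs(\bfI_n - \bfL^*)$ coincides with the corresponding column of either $\bfMd$ or $\bfMs$, so its $\ell^1$-norm is bounded by $\max(\|\bfMd\|, \|\bfMs\|) < 1$ under Assumption \ref{assu:os_mat_norm}. Consequently $\bfA$ is invertible and $\bfx$ in Definition \ref{def:sys_sol} is well-defined.

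The natural candidate is $\widetilde\bfx := \bfL^*\bfr^* + (\bfI_n - \bfL^*)\bfs^*$: on defaulting firms $i \in D^*$ one has $\widetilde x_i = r_i^*$, while on non-defaulting firms one has $\widetilde x_i = s_i^*$. Combined with the Absolute Priority Rule (which gives $s_i^* = 0$ for $i \in D^*$ via \eqref{eq:liq_eq_equity}) and the fact that non-defaulting firms repay in full ($r_i^* = d_i$ for $i \notin D^*$ via \eqref{eq:liq_eq_debt}), plugging $\widetilde\bfx$ into the two blocks of the pseudo-solution formula recovers $\bfr^*$ and $\bfs^*$ coordinate by coordinate. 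It therefore suffices to verify $\bfA\widetilde\bfx = \bfb$.

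For this last step I would use the identity $\bfL^*(\bfI_n - \bfL^*) = \bzero_{n\times n}$ to obtain $\bfL^*\widetilde\bfx = \bfL^*\bfr^*$ and $(\bfI_n - \bfL^*)\widetilde\bfx = \bfs^*$, and then rewrite $\Phi(\bfR^*) = \bfR^*$ coordinatewise. Splitting by membership in $D^*$, the quantity $\bfa + \bfMd\bfr^* + \bfMs\bfs^*$ equals $r_i^* = \widetilde x_i$ on defaulting coordinates and equals $s_i^* + d_i = \widetilde x_i + d_i$ on the rest, so in matrix form
\begin{equation*}
\bfa + \bfMd\bfr^* + \bfMs\bfs^* = \widetilde\bfx + (\bfI_n - \bfL^*)\bfd.
\end{equation*}
Substituting $\bfr^* = \bfL^*\widetilde\bfx + (\bfI_n - \bfL^*)\bfd$ on the left-hand side and collecting the $\widetilde\bfx$-terms then yields exactly $\bfA\widetilde\bfx = \bfb$. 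Uniqueness from the first step gives $\widetilde\bfx = \bfx$, and so the pseudo solution equals $\bfR^*$.

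The main obstacle is really just the algebraic bookkeeping of this final step: the proposition is essentially tautological once the right candidate $\widetilde\bfx$ has been guessed, but one must be careful to invoke $\bfL^*(\bfI_n - \bfL^*) = \bzero_{n\times n}$ consistently and to split the fixed-point equation by defaulting versus non-defaulting coordinates in the right way so that the $\min$ and the $(\cdot)^+$ disappear cleanly.
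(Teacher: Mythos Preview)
Your proposal is correct and follows essentially the same route as the paper: both define the candidate $\widetilde\bfx = \bfL^*\bfr^* + (\bfI_n-\bfL^*)\bfs^*$, use the default/non-default split to strip away the $\min$ and $(\cdot)^+$ in the fixed-point equation, and verify that this candidate solves $\bfA\bfx=\bfb$ (the paper arrives at the same identity $\bfa + \bfMd\bfr^* + \bfMs\bfs^* = \widetilde\bfx + (\bfI_n-\bfL^*)\bfd$ by adding the two subsystems and rearranging). You are slightly more explicit about invoking invertibility of $\bfA$ under Assumption~\ref{assu:os_mat_norm} to conclude uniqueness, whereas the paper leaves that to the discussion preceding the proposition.
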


\begin{proof}
According to the liquidation value equations in \eqref{eq:liq_eq_debt} and \eqref{eq:liq_eq_equity}, the vectors $\bfr^*$ and $\bfs^*$ are given as
\begin{equation}
r_i^* = \begin{cases} d_i ,& \text{if $i\notin D^*$,} \\
                      a_i + \sum_{j=1}^nM^{\bfd}_{ij}r^*_j + \sum_{j=1}^nM^{\bfs}_{ij}s^*_j ,&  \text{if $i\in D^*$} \end{cases}
\end{equation}
and 
\begin{equation}
s_i^* = \begin{cases} a_i + \sum_{j=1}^nM^{\bfd}_{ij}r^*_j + \sum_{j=1}^nM^{\bfs}_{ij}s^*_j - d_i ,&  \text{if $i\notin D^*$,} \\
                      0 ,&  \text{if $i\in D^*$}. \end{cases}
\end{equation}
In matrix notation this means in particular that $(\bfI_n-\bfL^*)\bfr^*=(\bfI_n-\bfL^*)\bfd$ and $\bfL^*\bfs^* = \bzero_n$ and thus
\begin{equation}
\bfR^* = \begin{pmatrix} (\bfI_n-\bfL^*)\bfd + \bfL^*\bfr^* \\
(\bfI_n-\bfL^*)\bfs^* \end{pmatrix}.
\end{equation}
For the firms in default we only have to calculate the debt payments 
and for the firms not in default we have to determine the equity value. 
The solution $\bfR^*$ does hence contain only $n$ unknown values 
and we only have to consider the two subsystems
\begin{equation}
\bfL^*\bfr^* = \bfL^*\bfa + \bfL^*\bfMd\bfr^*+\bfL^*\bfMs\bfs^*
\end{equation}
and
\begin{equation}
(\bfI_n-\bfL^*)\bfs^* = (\bfI_n-\bfL^*)(\bfa + \bfMd\bfr^*+\bfMs\bfs^* - \bfd)
\end{equation}
We can add the two equations and write the system more compact as:
\begin{equation}
\bfL^*\bfr^* + (\bfI_n-\bfL^*)\bfs^* = \bfa + \bfMd\bfr^*+ \bfMs\bfs^* - (\bfI_n-\bfL^*)\bfd.
\end{equation}
Because of $(\bfI_n-\bfL^*)\bfs^*=\bfs^*$ we get
\begin{equation}
\bfL^*\bfr^* + (\bfI_n-\bfL^*)\bfs^* = \bfa + \bfMd\bfr^*+ \bfMs(\bfI_n-\bfL^*)\bfs^* - (\bfI_n-\bfL^*)\bfd,
\end{equation}
which leads after some rearrangements to 
\begin{equation}
\bfL^*\bfr^* + (\bfI_n-\bfL^*)\bfs^* - \bfMd\bfL^*\bfr^* - \bfMs(\bfI_n-\bfL^*)\bfs^* = \bfa + \bfMd(\bfI_n-\bfL^*)\bfr^* - (\bfI_n-\bfL^*)\bfd
\end{equation}
that is equivalent to
\begin{equation}
\left(\bfI_n - \left(\bfMd\bfL^* + \bfMs(\bfI_n - \bfL^*)\right)\right) (\bfL^*\bfr^* + (\bfI_n-\bfL^*)\bfs^*) 
= \bfa + \bfMd(\bfI_n-\bfL^*)\bfd - (\bfI_n-\bfL^*)\bfd, 
\end{equation}
since $\bfL^*(\bfI_n-\bfL^*)=\bzero_{n\times n}$. 
Setting $\bfx=\bfL^*\bfr^* + (\bfI_n-\bfL^*)\bfs^*$ and with the notation of Definition \ref{def:sys_sol}, the equation system becomes $\bfA\bfx=\bfb$.
\end{proof}

The main challenge in this solution approach is of course that the final default set $D^*$ is unknown. 
Algorithms that follow this idea to find $\bfR^*$ consequently have to find $D^*$ in a fast way. 
A naive strategy could be to check all possible default scenarios of the financial system, 
calculate the pseudo solution for the corresponding default set and check whether it actually is the fixed point of $\Phi$. 
However, there are $2^n$ possible scenarios that would have to be checked, which could be cumbersome for large $n$. 
Therefore, more efficient algorithms are needed that require less computation to find $D^*$.
Some possible algorithms are presented in the next subsections .

\subsection{Decreasing Trial-and-Error Algorithms}\label{subsec:trial_error_alg_decr}

The three algorithms \ref{alg:picard}, \ref{alg:elsinger} and \ref{alg:comb_method} 
from Section \ref{sec:iterative_algo} can start with a vector $\bfR^0$ that was the upper boundary of the solution vector $\bfR^*$. 
The procedures in this subsection have in common that they also start with this upper boundary and calculate a corresponding default set. 
For every following iterate, the corresponding default set is determined as well. 
To avoid that every default set it is checked whether it actually is $D^*$ 
and whether the corresponding pseudo solution is the fixed point of $\Phi$, 
the algorithm will identify potential default sets to reduce the computational effort. 
If it turns out that the potential default set is $D^*$, the algorithm stops. 
Otherwise, the procedure continues until a new potential default set is found that has to be checked again, and so on.
Due to these characteristics we name this type of algorithm \emph{Trial-and-Error Algorithm}. 
The general procedure of algorithms of this type is similar. 

\begin{algorithm}{7}[Decreasing Trial-and-Error Algorithm]\label{alg:trial_error_decr}
Set $l\ge2$ and $p=0$.\hfill
\begin{enumerate}
  \item\label{alg:t-e_decr_choice}
  Choose either the Picard (Algorithm \ref{alg:picard}), the Elsinger (Algorithm \ref{alg:elsinger}) 
  or the Hybrid Algorithm (Algorithm \ref{alg:comb_method}) which is used in the following to generate the next iterate. 
  \item If in Step \ref{alg:t-e_decr_choice} the Picard Algorithm is chosen, set $d=-1$, $\bfR^0=\Rgreat$ and determine $D(\bfR^0)$. 
  Else, set $d=0$, $\bfR^0=\left(\begin{smallmatrix}\bfd \\ \bfs(\bfd)\end{smallmatrix}\right)$ and determine $D(\bfR^0)$.
  \item\label{alg:t-e_decr_all_def}
  If $D(\bfR^0)=\mathcal N$, set $\bfR^* = \left(\begin{smallmatrix}(\bfI_n-\bfMd)^{-1}\bfa\\\bzero_n\end{smallmatrix}\right)$ 
  and stop the algorithm.
  \item\label{alg:t-e_decr_no_def}
  If the Elsinger or the Hybrid Algorithm is chosen in Step \ref{alg:t-e_decr_choice} and if $D(\bfR^0)=\emptyset$, 
  set $\bfR^*=\bfR^0$ and stop the algorithm. 
  \item\label{alg:trial_error_fp_decr}
  Else, calculate for $k > p$ the iterates $\bfR^k$ starting with $\bfR^p$ using the algorithm chosen in step \ref{alg:t-e_decr_choice}
  and the corresponding default sets $D(\bfR^k)$ until $k=q$ with
  \begin{equation}\label{eq:pot_def_set}
  q = \min\{m>p: D(\bfR^{m-l+1}) = \ldots = D(\bfR^m)\ \text{and}\ |D(\bfR^m)| > d\}
  \end{equation}
  is reached. 
  Determine the pseudo solution belonging to $D(\bfR^q)$ and denote it by $\widehat\bfR^q$.
  \item If $\Phi(\widehat\bfR^q) = \widehat\bfR^q$, stop the algorithm. 
  Else, set $d = |D(\bfR^q)|$ and $p=q$ and proceed with step \ref{alg:trial_error_fp_decr}.  
\end{enumerate}
\end{algorithm}

The Algorithms \ref{alg:picard}, \ref{alg:elsinger} and \ref{alg:comb_method} in their decreasing versions 
produce decreasing sequences of iterates and thus increasing sequences of default sets, i.e. $D(\bfR^k)\subseteq D(\bfR^{k+1})$ for $k\ge0$.
Algorithm \ref{alg:trial_error_decr} means that one iterates 
and checks whether the default set has not changed compared to the previous default set. 
If the default set stays the same for the next $l-1$ consecutive iterations, 
this is an indication that the actual $D^*$ might have been reached. 
To check this, the pseudo solution is calculated and it is checked whether it solves \eqref{eq:liq_eq_debt} and \eqref{eq:liq_eq_equity}. 
If no solution has been found, one iterates again until a larger default sets stays identical for $l-1$ consecutive times, 
and the described procedure can be repeated. 
If a solution is reached, the procedure stops. 
Due to its described property, we call $l$ the \emph{lag value}.

In the special case of $l=2$ this means that the pseudo solution is calculated if the default set stays the same from one iteration step to another. 
Obviously, choosing a higher lag value inspires more confidence in the potential default set since the longer the default set stays unchanged, 
the higher is the chance that it is the actual default set. 

Depending on the choice of the algorithm in Step \ref{alg:t-e_decr_choice}
of the Decreasing Trial-and-Error Algorithm, we obtain three different versions of Algorithm \ref{alg:trial_error_decr}:
\begin{itemize}
  \item[(i)] The \emph{Decreasing Trial-and-Error Picard Algorithm} with $\bfR^0 = \Rgreat$, where the iterates as given by $\bfR^k=\Phi(\bfR^{k-1})$.
  \item[(ii)] The \emph{Decreasing Trial-and-Error Elsinger Algorithm} with $\bfR^0 = ((\rgreat)^t, (\bfs(\rgreat))^t)^t$, 
  where $\bfs(\rgreat)$ is obtained via Algorithm \ref{alg:equity} and the next iterates are obtained using Algorithm \ref{alg:elsinger}. 
  \item[(iii)] The \emph{Decreasing Trial-and-Error Hybrid Algorithm} with the same starting vector as in (ii),  
  where the next iterates are obtained using Algorithm \ref{alg:comb_method}.
\end{itemize}

The particular cases when $D(\bfR^0)\in\{\emptyset,\mathcal N\}$ in the steps \ref{alg:t-e_decr_all_def} and \ref{alg:t-e_decr_no_def}, 
deserve a separate mention since in such situations, no iteration is necessary and the solution $\bfR^*$ can be given explicitly 
under some circumstances.
The justification of this phenomena is given in the following proposition. 

\begin{proposition}\label{prop:hybrid_zero_all}
For the Decreasing Trial-and-Error Hybrid Algorithm the following holds:
\begin{enumerate}
  \item[(i)] If $D(\bfR^0)=\mathcal N$, then $\bfR^*=\left(\begin{smallmatrix}(\bfI_n-\bfMd)^{-1}\bfa\\\bzero_n\end{smallmatrix}\right)$, 
  no matter which version of the algorithm is taken. 
  \item[(ii)] If $D(\bfR^0)=\emptyset$ and either the Decreasing Trial-and-Error Elsinger Algorithm or 
  the Decreasing Trial-and-Error Hybrid Algorithm is used, then $\bfR^0=\bfR^*$.
\end{enumerate}
\end{proposition}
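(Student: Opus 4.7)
The plan is to treat the two parts separately, exploiting the monotonicity of the iterates and the uniqueness of the fixed point established in Theorem~\ref{theo:unique_fp}.

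For part (i), I would first observe that all three underlying algorithms (Picard, Elsinger, Hybrid), in their decreasing versions, produce iterates $\bfR^k$ that are componentwise non-increasing and converge to $\bfR^*$ (Propositions~\ref{prop:conv_picard}, \ref{prop:elsinger_conv}, \ref{prop:comb_meth_conv}). In particular, $\bfR^* \le \bfR^0$. Since the map $\bfR \mapsto \bfa + \bfMd\bfr + \bfMs\bfs$ is componentwise monotone in $\bfR$, the assumption that every firm satisfies $a_i + (\bfMd\bfr^0)_i + (\bfMs\bfs^0)_i < d_i$ transfers directly to $\bfR^*$, giving $D(\bfR^*) = \mathcal{N}$, so $\bfL^* = \bfI_n$. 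Plugging $\bfL^* = \bfI_n$ into the liquidation value equations \eqref{eq:liq_eq_debt} and \eqref{eq:liq_eq_equity} forces $\bfs^* = \bzero_n$ and reduces the debt equation to $\bfr^* = \bfa + \bfMd\bfr^*$. Invoking Assumption~\ref{assu:os_mat_norm}, which guarantees invertibility of $\bfI_n - \bfMd$, I solve to obtain $\bfr^* = (\bfI_n - \bfMd)^{-1}\bfa$.

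For part (ii), the starting vector for the Elsinger and Hybrid versions is $\bfR^0 = (\bfd^t, \bfs(\bfd)^t)^t$, where $\bfs(\bfd)$ is the fixed point of $\Phi^{\bfs}(\cdot;\bfd)$ from \eqref{eq:phi_aux_eq_fp}. The assumption $D(\bfR^0) = \emptyset$ is equivalent to $\bfa + \bfMd\bfd + \bfMs\bfs(\bfd) - \bfd \ge \bzero_n$, so the positive part in the definition of $\bfs(\bfd)$ is redundant and
\begin{equation*}
\bfs(\bfd) = \bfa + \bfMd\bfd + \bfMs\bfs(\bfd) - \bfd.
\end{equation*}
I would then verify directly that $\Phi(\bfR^0) = \bfR^0$: the debt component gives $\min\{\bfd,\bfd + \bfs(\bfd)\} = \bfd$ because $\bfs(\bfd) \ge \bzero_n$, while the equity component returns exactly $\bfs(\bfd)$ by the identity above. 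Uniqueness of the fixed point from Theorem~\ref{theo:unique_fp} then yields $\bfR^0 = \bfR^*$.

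I do not anticipate a serious obstacle; the argument is essentially bookkeeping. The only subtle point worth highlighting in the writeup is why part (ii) is restricted to the Elsinger and Hybrid variants: for the Picard Algorithm the starting vector is $\Rgreat = (\bfd^t, \sgreat^t)^t$ with $\sgreat = (\bfI_n - \bfMs)^{-1}(\bfa + \bfMd\bfd - \bfd)^+$, which is in general strictly larger than $\bfs(\bfd)$, so the identity that makes the fixed-point check work is unavailable and $\Rgreat$ need not be a fixed point of $\Phi$ even when $D(\Rgreat) = \emptyset$.
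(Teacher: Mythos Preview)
Your argument is correct. Part~(ii) is essentially identical to the paper's proof: both verify directly that $\Phi(\bfd,\bfs(\bfd))=(\bfd,\bfs(\bfd))$ using $\bfa+\bfMd\bfd+\bfMs\bfs(\bfd)\ge\bfd$ and the fixed-point identity for $\bfs(\bfd)$.

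For part~(i) you take a slightly different route than the paper. The paper treats the Picard case separately from the Elsinger/Hybrid case: in the Picard case it observes that $D(\Rgreat)=\mathcal N$ forces $\bfa+\bfMd\bfd<\bfd$ and hence $\sgreat=\bzero_n$, so $\bfs^*\le\sgreat=\bzero_n$ by Proposition~\ref{prop:lim_R}; in the Elsinger/Hybrid case it argues $\bfs^*=\bfs(\bfr^*)\le\bfs(\bfd)$ and shows $\bfs(\bfd)=\bzero_n$. Your approach is more uniform: you use $\bfR^*\le\bfR^0$ (valid for all three decreasing algorithms) together with the monotonicity of $\bfR\mapsto\bfa+\bfMd\bfr+\bfMs\bfs$ to conclude $D(\bfR^*)=\mathcal N$ directly, and then read off $\bfs^*=\bzero_n$ and $\bfr^*=(\bfI_n-\bfMd)^{-1}\bfa$ from the liquidation equations. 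This buys you a single argument covering all three variants, at no cost; it is arguably cleaner than the paper's case distinction. One small point: invertibility of $\bfI_n-\bfMd$ already follows from the Elsinger Property (Lemma~\ref{lem:inv_xos-mat}), so you need not invoke the stronger Assumption~\ref{assu:os_mat_norm} here, though doing so is harmless since Section~\ref{sec:def-set-algo} assumes it throughout.
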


\begin{proof}
\begin{enumerate}
  \item[(i)] 
  First, assume that the Picard Algorithm is chosen in Step \ref{alg:t-e_decr_choice} of Algorithm \ref{alg:trial_error_decr}. 
  Because of $D(\bfR^0)=\mathcal N$, it must hold that $\bfa + \bfMd\bfd + \bfMs\sgreat < \bfd$ and also $\bfa + \bfMd\bfd < \bfd$. 
  A consequence is that $\sgreat = (\bfI_n-\bfMs)^{-1}(\bfa + \bfMd\bfd - \bfd)^+=\bzero_n$. 
  From Proposition \ref{prop:lim_R} it follows that $\bfs^*=\bzero$. 
  For $\bfs=\bfs^*=\bzero$, Equation \eqref{eq:liq_eq_debt} is now solved by $\bfr^*=(\bfI_n - \bfMd)^{-1}\bfa$, 
  where Lemma \ref{lem:inv_xos-mat} proves that $(\bfI_n - \bfMd)^{-1}$ exists. 
  If the Elsinger or the Hybrid Algorithm is chosen in Step \ref{alg:t-e_decr_choice}, we have that $\bfa + \bfMd\bfd + \bfMs\bfs(\bfd) < \bfd$. 
  It follows from \eqref{eq:phi_aux_eq_fp} that $\bfs(\bfd)=\bfs^*=\bzero_n$  since $\bfs^*=\bfs(\bfr^*)\le\bfs(\bfd)$ 
  because of $\bfr^*\le\bfd$ and the fact that $\bfs(\bfr)$ is increasing in $\bfr$. 
  The solution of Equation \eqref{eq:liq_eq_debt} is therefore the same as in the Picard case. 
  \item[(ii)] 
  Now, $\bfR^0=\left(\begin{smallmatrix}\bfd \\ \bfs(\bfd)\end{smallmatrix}\right)$ and since $D(\bfR^0)=\emptyset$, 
  it holds that $\bfa + \bfMd\bfd + \bfMs\bfs(\bfd)\ge\bfd$. 
  This leads to 
  \begin{equation}\label{eq:fp_t-e_decr_no_def}
  \Phi\begin{pmatrix} \bfd \\ \bfs(\bfd) \end{pmatrix} = 
  \Phi\begin{pmatrix} \min\{\bfd, \bfa + \bfMd\bfd + \bfMs\bfs(\bfd)\} \\ (\bfa + \bfMd\bfd + \bfMs\bfs(\bfd) - \bfd)^+ \end{pmatrix} = 
  \begin{pmatrix} \bfd \\ \bfs(\bfd) \end{pmatrix},
  \end{equation}
  which proves the claim. 
\end{enumerate}
\end{proof} 

Note that for the Decreasing Trial-and-Error Picard Algorithm, we cannot conclude that $\bfR^0=\Rgreat=\bfR^*$ if $D(\bfR^0)=\emptyset$. 
There are simple counterexamples for situations like this. 

\begin{proposition}\label{prop:conv_def_set}
Algorithm \ref{alg:trial_error_decr} reaches the solution $\bfR^*$ of \eqref{eq:liq_eq_debt} 
and \eqref{eq:liq_eq_equity} in a finite number of iteration steps.
\end{proposition}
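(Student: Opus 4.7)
The plan is to combine three ingredients: the monotonicity of the underlying iterative scheme chosen in Step~\ref{alg:t-e_decr_choice}, which forces monotonicity and boundedness of the default sets; the convergence $\bfR^k\to\bfR^*$, which pins the limiting default set down to $D^*$; and Proposition~\ref{prop:def_set_sol}, which identifies the pseudo solution belonging to $D^*$ as $\bfR^*$ itself.

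First I would invoke Propositions~\ref{prop:conv_picard}, \ref{prop:elsinger_conv}, and~\ref{prop:comb_meth_conv}, according to which the base algorithm always produces a componentwise decreasing sequence with $\bfR^k\to\bfR^*$. Since $i\in D(\bfr,\bfs)$ iff $a_i+\sum_j M^{\bfd}_{ij}r_j+\sum_j M^{\bfs}_{ij}s_j<d_i$ and the entries of $\bfMd,\bfMs$ are non-negative, the chain $\bfR^k\ge\bfR^{k+1}\ge\bfR^*$ translates into $D(\bfR^k)\subseteq D(\bfR^{k+1})\subseteq D^*$. Thus the default sets grow monotonically inside $D^*$ and, by finiteness of $\mathcal N$, can strictly enlarge at most $n$ times.

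Next I would show that the default sets stabilize at $D^*$ after finitely many steps. For $i\notin D^*$, monotonicity $\bfR^k\ge\bfR^*$ already yields $a_i+\sum_j M^{\bfd}_{ij}r^k_j+\sum_j M^{\bfs}_{ij}s^k_j\ge d_i$, so $i\notin D(\bfR^k)$ for every $k$; for $i\in D^*$ the strict defect $a_i+\sum_j M^{\bfd}_{ij}r^*_j+\sum_j M^{\bfs}_{ij}s^*_j<d_i$ survives small perturbations, so $\bfR^k\to\bfR^*$ forces $i\in D(\bfR^k)$ for all $k$ large. Hence there is a finite index $K$ with $D(\bfR^k)=D^*$ for every $k\ge K$, and after at most $l-1$ further base iterations the stability test in~\eqref{eq:pot_def_set} fires at a $q$ with $D(\bfR^q)=D^*$. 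At that point Proposition~\ref{prop:def_set_sol} yields $\widehat\bfR^q=\bfR^*$, the fixed-point check $\Phi(\widehat\bfR^q)=\widehat\bfR^q$ succeeds, and the algorithm terminates.

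To see that only finitely many steps have been executed in total, I would bound the number of earlier ``false alarm'' pseudo-solution checks. Each such check updates the threshold $d$ to $|D(\bfR^q)|$, and the side condition $|D(\bfR^m)|>d$ in~\eqref{eq:pot_def_set} then forces every subsequent check to take place at a strictly larger default set. Since $|D|\le n$, at most $n+1$ checks can occur overall, and between two consecutive checks only finitely many base iterations are required, because the monotone chain of default sets can grow only finitely often before being forced to remain constant for $l-1$ consecutive steps. The main technical obstacle is precisely the passage from ``$\bfR^k\to\bfR^*$'' to ``$D(\bfR^k)=D^*$ for all large $k$'': one must use the asymmetric, strict definition of default together with $\bfR^k\ge\bfR^*$ to exclude borderline firms from $D(\bfR^k)\setminus D^*$, while relying on componentwise convergence to push every strictly-defaulting firm into $D(\bfR^k)$ for $k$ sufficiently large.
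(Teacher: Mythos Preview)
Your proof is correct and follows the same approach as the paper, only in much greater detail: the paper's argument is the single observation that, since $\bfR^k\to\bfR^*$ from above and default is defined by a strict inequality, there exists $k^0$ with $D(\bfR^k)=D^*$ for all $k\ge k^0$. Your additional bookkeeping on the number of false alarms and on the role of the threshold $d$ is correct but not needed for the bare finiteness claim, since once $D(\bfR^k)\equiv D^*$ the stability test fires and Proposition~\ref{prop:def_set_sol} terminates the procedure.
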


\begin{proof}
By definition of $D(\bfR^k)$ in \eqref{eq:defi_def_set} and since $\bfR^k$ converges to $\bfR^*$ from above for any of the three algorithms 
\ref{alg:picard}, \ref{alg:elsinger} and \ref{alg:comb_method}, 
there exists a $k^0\ge0$ such that $D(\bfR^k)=D(\bfR^*)=D^*$ for all $k\ge k^0$.
\end{proof}

\subsection{Increasing Trial-and-Error Algorithms}\label{subsec:trial_error_alg_incr}

In contrast to the decreasing algorithms presented in the subsection above, it is of course also possible to use an algorithm 
with the reverse direction, i.e. in which the series of produced iterates is increasing and in which the default sets are decreasing. 
The general form is very similar to Algorithm \ref{alg:trial_error_decr}.

\begin{algorithm}{8}[Increasing Trial-and-Error Algorithm]\label{alg:trial_error_incr}
Set $l\ge2$, $d=n+1$ and $p=0$.\hfill
\begin{enumerate}
  \item Choose a starting vector $\bfR^0$ and determine $D(\bfR^0)$. 
  \item\label{alg:t-e_incr_no_def}
  If $D(\bfR^0)=\emptyset$, set $\bfR^*=\left(\begin{smallmatrix}\bfd \\ \bfs(\bfd)\end{smallmatrix}\right)$ and stop the algorithm.
  \item\label{alg:trial_error_fp_incr}
  Else, calculate for $k > p$ the iterates $\bfR^k$ starting with $\bfR^p$ using one of the Algorithms \ref{alg:picard}, \ref{alg:elsinger} 
  or \ref{alg:comb_method} and the corresponding default sets $D(\bfR^k)$ until $k=q$ with
  \begin{equation}\label{eq:pot_def_set_incr}
  q = \min\{m>p: D(\bfR^{m-l+1}) = \ldots = D(\bfR^m)\ \text{and}\ |D(\bfR^m)| < d\}
  \end{equation}
  is reached. 
  Determine the pseudo solution belonging to $D(\bfR^q)$ and denote it by $\widehat\bfR^q$.
  \item If $\Phi(\widehat\bfR^q) = \widehat\bfR^q$, stop the algorithm. 
  Else, set $d = |D(\bfR^q)|$ and $p=q$ and proceed with step \ref{alg:trial_error_fp_incr}.  
\end{enumerate}
\end{algorithm}

The functioning of Algorithm \ref{alg:trial_error_incr} is similar to the Decreasing Trial-and-Error Algorithms with the difference 
that the resulting sequence of default sets is obviously decreasing. 
As in Section \ref{subsec:trial_error_alg_decr}, the way of choosing the calculation method to determine the next iterate, 
allows three different modifications:
\begin{itemize}
  \item[(i)] The \emph{Increasing Trial-and-Error Picard Algorithm} with $\bfR^0=\Rsmall$ and $\bfR^k=\Phi(\bfR^{k-1})$.
  \item[(ii)] The \emph{Increasing Trial-and-Error Elsinger Algorithm} with $\bfR^0=((\rsmall)^t, (\bfs(\rsmall))^t)^t$, 
  where $\bfs(\rsmall)$ is obtained via Algorithm \ref{alg:equity} and the next iterates are obtained using Algorithm \ref{alg:elsinger}.
  \item[(iii)] The \emph{Increasing Trial-and-Error Hybrid Algorithm} with the same starting vector as in (ii) 
  and where the next iterates are obtained using Algorithm \ref{alg:comb_method}.
  Note that for the next debt iterate, Algorithm \ref{alg:picard_debt} is used instead of Algorithm \ref{alg:debt} in the decreasing version.
\end{itemize}

The justification of the stopping criteria in Step \ref{alg:t-e_incr_no_def} of Algorithm \ref{alg:trial_error_incr} is as follows. 
Suppose that the Picard version of the algorithm is chosen and that $D(\bfR^0)=\emptyset$, 
which means that $\bfa+\bfMd\rsmall+\bfMs\ssmall \ge \bfd$. 
Since $\rsmall\le\bfd$ and $\ssmall\le\bfs(\bfd)$, it also holds that $\bfa+\bfMd\bfd+\bfMs\bfs(\bfd) \ge \bfd$ 
and $\bfs(\bfd)\ge\bzero_n$ following from this. 
With Equation \eqref{eq:fp_t-e_decr_no_def}, we see that $\bfR^*=\left(\begin{smallmatrix}\bfd \\ \bfs(\bfd)\end{smallmatrix}\right)$. 
Also note that, in contrast to Algorithm \ref{alg:trial_error_decr}, there is no stopping criteria in case of $D(\bfR^0)=\mathcal N$. 
The reason is that in this case, no general statement about the structure of the solution $\bfR^*$ can be made, 
no matter which version of the algorithm is used.
In particular, from $D(\bfR^0)=\mathcal N$ it does not follow in general that $D(\bfR^*)=\mathcal N$, 
since there are easily constructable counterexamples for this. 

The reason why we distinguish between decreasing and increasing Trial-and-Error Algorithms 
is that Algorithm \ref{alg:trial_error_decr} will always find the correct default set $D^*=D(\bfR^*)$, 
and this in a finite number of iteration steps. 
For the Increasing Trial-and-Error Algorithms such a statement is not possible in general 
since there are some situations in which the default sets do not converge to $D^*$, no matter which lag value is chosen. 
Situations in which this ``anomaly'' occurs are always financial systems that contain a so-called \emph{borderline firm}. 
The expression borderline is taken from \citet{liu10} and denotes a firm $i\in\mathcal N$ in a financial system with fixed point $\bfR^*$ 
for which it holds that $r^*_i=d_i$ and $s^*_i=0$. 
In other words, borderline firms are just able to fully cover their liabilities, but have no remaining capital left in their balance sheet 
that can be furnished to their shareholders. 
By definition of a default set in \eqref{eq:defi_def_set}, a borderline firm $i$ is not in default since
\begin{equation}
0 = s_i = a_i + \sum_{j=1}^nM_{ij}^{\bfd}r^*_j + \sum_{j=1}^nM_{ij}^{\bfs}s^*_j - d_i 
\end{equation}
and therefore $i\notin D(\bfR^*)$.
However, when using an Increasing Trial-and-Error Algorithm it can happen for such a borderline firm $i$ 
that $i\in D(\bfR^k)$ for every iterate $\bfR^k$, $k\ge0$. 
This means that the true default set $D^*$ will never be identified by the algorithm.  
There exist many examples of financial systems that have this property.
To show that in such situations, the fixed point $\bfR^*$ can still be determined via the calculation of the pseudo solution, 
assume that the set $\mathcal B\subset \mathcal N$ contains an arbitrary selection of borderline firms. 
The common set of defaulting firms and the selected borderline firms is denoted by $\widetilde D$, i.e. $\widetilde D=D^* \cup \mathcal B$. 
The corresponding ``default'' matrices are given by $\widetilde\bfL=\widetilde\bfL(\widetilde D)$ and $\bfL^*=\bfL^*(D^*)$, respectively. 
Following this notation, $\widetilde\bfA$ and $\bfA^*$ define the matrices from \eqref{eq:sys_sol_A} with the corresponding default matrix, 
and $\widetilde\bfb$ and $\bfb^*$ are defined analogously. 
Moreover, we define $\widetilde\bfL_{\mathcal B}=\widetilde\bfL-\bfL^*$ as the diagonal matrix that indicates only the selected borderline firms.

\begin{lemma}\label{lem:pseudo_equiv}
The vector $\bfx^*=\bfL^*\bfr^* + (\bfI_n-\bfL^*)\bfs^*$ solves the equation system $\bfA^*\bfx=\bfb^*$ if and only if 
$\widetilde\bfx = \bfx^* + \widetilde\bfL_{\mathcal B}\bfd$ is the solution of $\widetilde\bfA\bfx=\widetilde\bfb$.
\end{lemma}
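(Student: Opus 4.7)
The plan is to prove the equivalence by a direct algebraic manipulation, rewriting both $\widetilde{\bfA}\widetilde{\bfx}$ and $\widetilde{\bfb}$ in terms of $\bfA^*\bfx^*$ and $\bfb^*$ respectively, and observing that the two differences coincide. The crucial structural inputs are the defining properties of borderline firms and the disjointness of $\bfL^*$ and $\widetilde{\bfL}_{\mathcal B}$.

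First I would rewrite the two matrices and right-hand sides after the substitution $\widetilde{\bfL} = \bfL^* + \widetilde{\bfL}_{\mathcal B}$. A direct expansion of \eqref{eq:sys_sol_A} and \eqref{eq:sys_sol_b} gives the identities
\begin{equation}
\widetilde{\bfA} \;=\; \bfA^* + (\bfMs - \bfMd)\,\widetilde{\bfL}_{\mathcal B}, \qquad
\widetilde{\bfb} \;=\; \bfb^* + (\bfI_n - \bfMd)\,\widetilde{\bfL}_{\mathcal B}\,\bfd,
\end{equation}
which reduce the claim to showing $\widetilde{\bfA}\widetilde{\bfx} - \bfA^*\bfx^* = (\bfI_n-\bfMd)\widetilde{\bfL}_{\mathcal B}\bfd$.

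Next I would collect two elementary observations about $\widetilde{\bfL}_{\mathcal B}$. Since the firms in $\mathcal B$ are borderline, they satisfy $s^*_i = 0$ and $L^*_{ii}=0$; the former gives $\widetilde{\bfL}_{\mathcal B}\bfx^* = \bzero_n$ (because on the diagonal support of $\widetilde{\bfL}_{\mathcal B}$ we read off $s^*_i$ from $\bfx^*$), while the latter gives $\bfL^*\widetilde{\bfL}_{\mathcal B} = \bzero_{n\times n}$ and equivalently $(\bfI_n-\bfL^*)\widetilde{\bfL}_{\mathcal B} = \widetilde{\bfL}_{\mathcal B}$. With these two identities in hand,
\begin{equation}
\widetilde{\bfA}\widetilde{\bfx} \;=\; \bigl(\bfA^* + (\bfMs-\bfMd)\widetilde{\bfL}_{\mathcal B}\bigr)\bigl(\bfx^* + \widetilde{\bfL}_{\mathcal B}\bfd\bigr) \;=\; \bfA^*\bfx^* + \bfA^*\widetilde{\bfL}_{\mathcal B}\bfd + (\bfMs-\bfMd)\widetilde{\bfL}_{\mathcal B}\bfd,
\end{equation}
where the cross term $(\bfMs-\bfMd)\widetilde{\bfL}_{\mathcal B}\bfx^*$ vanishes and I used $\widetilde{\bfL}_{\mathcal B}^2 = \widetilde{\bfL}_{\mathcal B}$.

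Finally I would simplify $\bfA^*\widetilde{\bfL}_{\mathcal B}\bfd$ using the two identities above: the $\bfMd\bfL^*\widetilde{\bfL}_{\mathcal B}$ piece drops out and $\bfMs(\bfI_n-\bfL^*)\widetilde{\bfL}_{\mathcal B}$ collapses to $\bfMs\widetilde{\bfL}_{\mathcal B}$, leaving $\bfA^*\widetilde{\bfL}_{\mathcal B}\bfd = (\bfI_n-\bfMs)\widetilde{\bfL}_{\mathcal B}\bfd$. Adding the remaining $(\bfMs-\bfMd)\widetilde{\bfL}_{\mathcal B}\bfd$ term gives exactly $(\bfI_n-\bfMd)\widetilde{\bfL}_{\mathcal B}\bfd = \widetilde{\bfb} - \bfb^*$, which establishes the desired equivalence in both directions simultaneously (since it is an equality of affine maps, not merely an implication).

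I do not expect a serious obstacle: the only non-bookkeeping step is recognising that the borderline condition $(r^*_i,s^*_i) = (d_i,0)$ together with $i\notin D^*$ makes $\widetilde{\bfL}_{\mathcal B}$ ``invisible'' to $\bfx^*$ and orthogonal to $\bfL^*$. Once this is extracted, everything else is a short algebraic reconciliation between the two linear systems. The potential pitfall is handling signs carefully when differencing $\widetilde{\bfA}$ and $\bfA^*$, since $\bfMs$ and $\bfMd$ enter with opposite signs through the $\widetilde{\bfL}$ and $\bfI_n - \widetilde{\bfL}$ components.
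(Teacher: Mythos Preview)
Your proposal is correct and follows essentially the same approach as the paper: both arguments are direct algebraic verifications that hinge on the two identities $\widetilde\bfL_{\mathcal B}\bfx^*=\bzero_n$ and $\bfL^*\widetilde\bfL_{\mathcal B}=\bzero_{n\times n}$ coming from the borderline condition. Your organization---first isolating $\widetilde\bfA-\bfA^*$ and $\widetilde\bfb-\bfb^*$ in closed form and then checking $\widetilde\bfA\widetilde\bfx-\bfA^*\bfx^*=\widetilde\bfb-\bfb^*$---is a slightly more streamlined packaging of the same computation the paper carries out by rewriting the fixed-point equation term by term.
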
 

\begin{proof}
Without loss of generality, we assume that the first $n_1$ firms of the system are solvent, 
that the next $n_2-n_1$ firms are the selected borderline cases, and that the remaining firms are in default under $\bfR^*$. 
This means that
\begin{equation}
\mathcal N = \{1\ld n_1\} \cup \mathcal B \cup D^* = \{1\ld n_1\} \cup \{n_1+1\ld n_2\} \cup \{n_2+1\ld n\}.
\end{equation}
It follows from Proposition \ref{prop:def_set_sol} that 
\begin{equation}
\bfx^* = (x_1\ld x_{n_1}, x_{n_1+1}\ld x_{n_2}, x_{n_2+1}\ld x_n)^t = (s^*_1\ld s^*_{n_1}, 0\ld 0, r^*_{n_2+1}\ld r^*_n)^t.
\end{equation}
Further, note that 
\begin{equation}
\bfMs(\bfI_n-\bfL^*)\bfx^*=\bfMs(\bfI_n-\widetilde\bfL)\bfx^* = \bfMs(\bfI_n-\widetilde\bfL)(\bfx^* + \widetilde\bfL_{\mathcal B}\bfd)
\end{equation}
and 
\begin{equation}
\bfMd(\bfI_n-\bfL^*)\bfd - \bfMd\widetilde\bfL_{\mathcal B}\bfd = \bfMd(\bfI_n-\widetilde\bfL)\bfd
\end{equation}
and that 
\begin{equation}
\bfMd\bfL^*\bfx^*+\bfMd\widetilde\bfL_{\mathcal B}\bfd=\bfMd\widetilde\bfL(\bfx^* + \widetilde\bfL_{\mathcal B}\bfd)
\end{equation}
because of the structure of $\bfx^*$. 
By \eqref{eq:sys_sol_A} and \eqref{eq:sys_sol_b}, $\bfx^*$ solves $\bfA^*\bfx=\bfb^*$ if and only if
\begin{equation}
\begin{split}
\bfx^* &= \bfb^* + (\bfMd\bfL^* + \bfMs(\bfI_n-\bfL^*))\bfx^* \\
       &= \bfa + \bfMd(\bfI_n-\bfL^*)\bfd - (\bfI_n-\bfL^*)\bfd + (\bfMd\bfL^* + \bfMs(\bfI_n-\bfL^*))\bfx^* \\
       &= \bfa + \bfMd(\bfI_n-\bfL^*)\bfd - \bfMd\widetilde\bfL_{\mathcal B}\bfd - (\bfI_n-\bfL^*)\bfd + 
       (\bfMd\bfL^* + \bfMs(\bfI_n-\bfL^*))\bfx^* + \bfMd\widetilde\bfL_{\mathcal B}\bfd \\
       &= \bfa + \bfMd(\bfI_n-\widetilde\bfL)\bfd - (\bfI_n-\bfL^*)\bfd + (\bfMd\widetilde\bfL + 
       \bfMs(\bfI_n-\widetilde\bfL))(\bfx^* + \widetilde\bfL_{\mathcal B}\bfd).
\end{split}
\end{equation}
Since $(\bfI_n-\bfL^*)\bfd =(\bfI_n-\widetilde\bfL)\bfd + \widetilde\bfL_{\mathcal B}\bfd$, 
we can add $\widetilde\bfL_{\mathcal B}\bfd$ on both sides of the equation and obtain
\begin{equation}
\bfx^* + \widetilde\bfL_{\mathcal B}\bfd = \bfa + \bfMd(\bfI_n-\widetilde\bfL)\bfd - (\bfI_n-\widetilde\bfL)\bfd + (\bfMd\widetilde\bfL + 
\bfMs(\bfI_n-\widetilde\bfL))(\bfx^* + \widetilde\bfL_{\mathcal B}\bfd),
\end{equation}
Therefore $\widetilde\bfx=(s^*_1\ld s^*_{n_1}, d_{n_1+1}\ld d_{n_2}, r^*_{n_2+1}\ld r^*_n)^t$ 
is the solution of $\widetilde\bfA\bfx=\widetilde\bfb$ if and only if $\bfx^*$ solves $\bfA^*\bfx=\bfb^*$. 
\end{proof}

The pseudo solution belonging to $D^*$ is the solution $\bfR^*$ of the system. 
A direct consequence of Lemma \ref{lem:pseudo_equiv} is that the pseudo solution of $\widetilde D$ is also equal to $\bfR^*$. 
Similar to the proof of Proposition \ref{prop:conv_def_set}, we can argue that the set $\widetilde D$ will be reached 
by the Increasing Trial-and-Error Algorithms in a finite number of steps. 
Note that this statement holds in particular for the Increasing Hybrid Trial-and-Error Algorithm, 
where Algorithm is used \ref{alg:picard_debt} to calculate the next debt iterate. 
Even though a Picard-typed procedure is used in this auxiliary algorithm, we can conclude together with 
Proposition \ref{prop:hybrid_better_elsinger} and $\eps>0$ that the number of iteration will still be finite. 
We summarize the findings in the next proposition. 

\begin{proposition}\label{prop:conv_def_set_incr}
Algorithm \ref{alg:trial_error_incr} reaches the solution $\bfR^*$ of \eqref{eq:liq_eq_debt} 
and \eqref{eq:liq_eq_equity} in a finite number of iteration steps.
\end{proposition}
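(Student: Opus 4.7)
The plan is to extend the argument of Proposition \ref{prop:conv_def_set} to the increasing case, with the additional subtlety that the default sets of the iterates may not converge to $D^*$ itself but only to an enlargement of $D^*$ by borderline firms, which is exactly the situation that Lemma \ref{lem:pseudo_equiv} is designed to handle.

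First I would invoke Propositions \ref{prop:conv_picard}, \ref{prop:elsinger_conv}, and \ref{prop:comb_meth_conv}, which guarantee that with any of the three admissible starting vectors the iterates $\bfR^k$ form a non-decreasing sequence in $[\Rsmall,\Rgreat]$ converging to the unique fixed point $\bfR^*$. Since $\bfR^k \le \bfR^{k+1}$, the quantity $\bfa + \bfMd\bfr^k + \bfMs\bfs^k$ is coordinatewise non-decreasing in $k$, so the default sets $D(\bfR^k)$ form a decreasing chain of subsets of the finite set $\mathcal N$. Consequently they must stabilise after finitely many iterations at some set $\widetilde D$.

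Next I would identify $\widetilde D$ in relation to $D^*$ by a strict-slack argument. A firm $i \notin D^*$ that is not borderline satisfies $a_i + \sum_j M^{\bfd}_{ij} r^*_j + \sum_j M^{\bfs}_{ij} s^*_j > d_i$ strictly, hence by continuity and $\bfR^k \to \bfR^*$ the same strict inequality holds for $\bfR^k$ once $k$ is large enough, so $i \notin D(\bfR^k)$ eventually. A firm $i \in D^*$ satisfies the reverse strict inequality at $\bfR^*$, and since $\bfR^k \le \bfR^*$ for all $k$, the deficit persists, giving $D^* \subseteq D(\bfR^k)$ for every $k$. The only firms whose membership may remain undecided are the borderline firms, so $\widetilde D = D^* \cup \mathcal B$ for some $\mathcal B$ contained in the set of borderline firms.

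Once $D(\bfR^k)$ has been equal to $\widetilde D$ for $l$ consecutive indices and $|\widetilde D| < d$, the stopping rule \eqref{eq:pot_def_set_incr} fires; the first time $\widetilde D$ stabilises this bound is automatic because $d$ is initialised to $n+1$. The pseudo solution $\widehat\bfR^q$ belonging to $\widetilde D$ is then computed, and Lemma \ref{lem:pseudo_equiv} (applied with the borderline set $\mathcal B$) shows that $\widehat\bfR^q$ coincides with $\bfR^*$. Hence $\Phi(\widehat\bfR^q) = \widehat\bfR^q$ and the algorithm terminates. The main obstacle is exactly the borderline phenomenon, which is already resolved by Lemma \ref{lem:pseudo_equiv}; the only additional care concerns the Hybrid variant, where Algorithm \ref{alg:picard_debt} is invoked in each outer step, but under Assumption \ref{assu:os_mat_norm} this inner Picard scheme is a contraction, so with $\eps > 0$ it terminates in finitely many inner steps and the overall iteration count stays finite.
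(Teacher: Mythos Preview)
Your proposal is correct and follows essentially the same route as the paper: the paper's argument (given in the paragraph immediately preceding the proposition rather than in a separate proof environment) also rests on the observation that the increasing iterates yield default sets that eventually stabilise at some $\widetilde D = D^* \cup \mathcal B$ with $\mathcal B$ a set of borderline firms, and then invokes Lemma~\ref{lem:pseudo_equiv} to conclude that the pseudo solution at $\widetilde D$ equals $\bfR^*$, with the Hybrid case handled by noting that the inner Picard sub-algorithm terminates finitely because $\eps>0$. Your write-up is in fact more explicit than the paper's, spelling out the strict-slack argument for the inclusion $D^* \subseteq \widetilde D \subseteq D^* \cup \{\text{borderline firms}\}$ and the reason the firing condition \eqref{eq:pot_def_set_incr} is eventually met.
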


\subsection{Sandwich Algorithms}

A disadvantage of the Trial-and-Error Algorithms was that when a potential default set is reached, 
the only way to find out whether this default set is actually $D^*$, is to calculate the corresponding pseudo solution 
and check whether it is a fixed point of \eqref{eq:Phi}.
The choice of a high lag value can of course increase the chance that $D^*$ is reached at the first trial, 
but there is no certainty. 

Another way to find $D^*$ is to start an iteration simultaneously with the largest and smallest possible solution 
and use one of the Algorithms \ref{alg:picard}, \ref{alg:elsinger} or \ref{alg:comb_method} to obtain the next iterate. 
For $k\ge0$ denote by $\overline{\bfR}^k$ the $k$-th iterate of the series that emerges when starting the algorithm with the maximum 
and by $\underline{\bfR}^k$ its counterpart when starting with the minimum possible solution. 
Depending which algorithm is chosen, the starting vector can either be $\overline{\bfR}^0=\Rgreat$ (Picard Iteration) or 
$\overline{\bfR}^0=(\rgreat^t, (\bfs(\rgreat))^t)^t$ (Elsinger and Hybrid Algorithm) when starting with the upper boundary. 
Analogously, we have $\underline{\bfR}^0=\Rsmall$ or $\underline{\bfR}^0=(\rsmall^t, (\bfs(\rsmall))^t)^t$ 
if the minimum possible solution is the starting point. 
By the Propositions \ref{prop:conv_picard}, \ref{prop:elsinger_conv}, \ref{prop:comb_meth_conv} and by Equation \eqref{eq:defi_def_set}, 
the iterative use of one of the mentioned algorithms entails that the default sets approach one another, i.e. for $k\ge0$
\begin{equation}
D(\underline{\bfR}^k) \supseteq D(\underline{\bfR}^{k+1}) \supseteq D^* \supseteq D(\overline{\bfR}^{k+1}) \supseteq D(\overline{\bfR}^k).
\end{equation}
Let 
\begin{equation}\label{eq:lag_sandwich}
l = \min\{k\ge0:D(\underline{\bfR}^k) = D(\overline{\bfR}^k)\}
\end{equation}
be the first iteration step in which the default set for both starting vectors is the same. 
Then we must have that $D(\overline{\bfR}^l)=D^*$ and, by Proposition \ref{prop:def_set_sol}, 
determining the pseudo solution belonging to $D^*$ leads to $\bfR^*$. 
Because of its characteristics we call this algorithm the \emph{Sandwich Algorithm}. 

\begin{algorithm}{9}[Sandwich Algorithm]\label{alg:sandwich}\hfill
\begin{enumerate}
  \item Determine $\overline{\bfR}^0$ and $\underline{\bfR}^0$ as well 
  as their corresponding default sets $D(\overline{\bfR}^0)$ and $D(\underline{\bfR}^0)$.
  \item\label{alg:fp_sandwich}
  For $k\ge 1$, calculate the iterates $\overline{\bfR}^k$ and $\underline{\bfR}^k$ 
  using one of the Algorithms \ref{alg:picard}, \ref{alg:elsinger} or \ref{alg:comb_method} 
  and the corresponding default sets $D(\overline{\bfR}^k)$ and $D(\underline{\bfR}^k)$. 
  \item If $D(\overline{\bfR}^k)=D(\underline{\bfR}^k)$, stop the algorithm, 
  set $D^*=D(\overline{\bfR}^k)$ and calculate the pseudo solution that belongs to $D^*$ following Definition \ref{def:sys_sol}.
  Else, set $k=k+1$ and go back to step \ref{alg:fp_sandwich}.
\end{enumerate}
\end{algorithm}

As for the Trial-and-Error Algorithms in the sections above, the Sandwich Algorithm results in different versions:
\begin{enumerate}
  \item[(i)] The \emph{Sandwich Picard Algorithm} with $\overline{\bfR}^0=\Rgreat$ and $\underline{\bfR}^0=\Rsmall$ 
  and the use of Algorithm \ref{alg:picard} in step \ref{alg:fp_sandwich}. 
  \item[(ii)] The \emph{Sandwich El\-sing\-er Algorithm} with $\overline{\bfR}^0=(\rgreat^t,(\bfs(\rgreat))^t)^t$ 
  and $\underline{\bfR}^0=(\rsmall^t,(\bfs(\rsmall))^t)^t$ and the use of Algorithm \ref{alg:elsinger} in step \ref{alg:fp_sandwich}.
  \item[(iii)] The \emph{Sandwich Hybrid Algorithm} with the same starting points as the Sandwich El\-sing\-er Algorithm 
  and the iterative use of Algorithm \ref{alg:comb_method} in step \ref{alg:fp_sandwich}.
\end{enumerate}

Recall the insights from Section \ref{subsec:trial_error_alg_incr}, where it was shown that, under some circumstances, it may happen 
that the series of default sets $D(\underline{\bfR}^k)$ will never converge to the actual default set $D^*$. 
Situations in which this problem occurs always contain at least one firm  that is on borderline in the solution $\bfR^*$. 
As a result of this behavior, the Sandwich Algorithm may not converge in the sense 
that the default sets $D(\overline{\bfR}^k)$ and $D(\underline{\bfR}^k)$ will never become identical. 
However, if we consider a stochastic setting and assume a distribution for the vector $\bfa$ of the exogenous assets' prices 
which has a density with respect to the Lebesgue measure on $(\R_0^+)^n$,
then situations in which the convergence cannot be assured occur only with probability zero as the next Proposition shows. 
Note that this assumption is fulfilled in the usual $n$-firm Merton models where the individual $a_i$ are log-normally distributed.

\begin{proposition}\label{prop:sandwich}
The Sandwich Algorithm generates a sequence of decreasing default sets $D(\underline{\bfR}^k)$ 
and a sequence of increasing default sets $D(\overline{\bfR}^k)$ that reach the default set $D^*$ of the solution $\bfR^*$ 
almost surely after finitely many steps. 
Thus, it reaches the solution $\bfR^*$ of \eqref{eq:Phi} almost surely after finitely many steps.
\end{proposition}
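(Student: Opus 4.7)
The plan is to decompose the statement into three parts and prove them in order: the deterministic containment and monotonicity of the two default set sequences, the deterministic finite convergence of $D(\overline{\bfR}^k)$ to $D^*$, and the almost-sure finite convergence of $D(\underline{\bfR}^k)$ to $D^*$. Throughout I use that by Propositions \ref{prop:conv_picard}, \ref{prop:elsinger_conv}, and \ref{prop:comb_meth_conv}, regardless of which of the three choices is made in Step \ref{alg:fp_sandwich}, we have $\overline{\bfR}^k \searrow \bfR^*$ and $\underline{\bfR}^k \nearrow \bfR^*$ componentwise.

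First I would establish the structural inclusions. From $\overline{\bfR}^k \ge \bfR^* \ge \underline{\bfR}^k$ and the definition \eqref{eq:defi_def_set}, the quantity $a_i + (\bfMd \bfr)_i + (\bfMs \bfs)_i$ is monotone in $(\bfr,\bfs)$, so $D(\overline{\bfR}^k) \subseteq D^* \subseteq D(\underline{\bfR}^k)$ for all $k$, and the monotone convergence of the iterates forces $D(\overline{\bfR}^k)$ to be non-decreasing and $D(\underline{\bfR}^k)$ to be non-increasing in $k$. In particular, if at some step the two sets agree, they must both equal $D^*$, and Proposition \ref{prop:def_set_sol} then yields $\bfR^*$ from Definition \ref{def:sys_sol}.

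Next I would show the finite convergence of $D(\overline{\bfR}^k)$. For each $i \in D^*$ the inequality $a_i + (\bfMd \bfr^*)_i + (\bfMs \bfs^*)_i < d_i$ is strict by definition. Since $\overline{\bfR}^k \to \bfR^*$ and the relevant map is continuous, the strict inequality persists for all $k$ large enough, so eventually $i \in D(\overline{\bfR}^k)$. Because $\mathcal N$ is finite, there is a step after which $D(\overline{\bfR}^k) = D^*$, deterministically. The obstacle is the symmetric statement for $D(\underline{\bfR}^k)$: for $i \notin D^*$ we only know $a_i + (\bfMd \bfr^*)_i + (\bfMs \bfs^*)_i \ge d_i$, and when equality holds we cannot conclude from $\underline{\bfR}^k < \bfR^*$ that $i \notin D(\underline{\bfR}^k)$. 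Such firms are precisely the borderline firms mentioned in Section \ref{subsec:trial_error_alg_incr}, and they are the only obstruction.

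The hard part is therefore to argue that borderline firms do not occur almost surely. Here I would exploit the stochastic assumption: $\bfa$ has a Lebesgue density on $(\R_0^+)^n$. The key observation is that for any fixed default set $\widetilde D \subseteq \mathcal N$ with associated diagonal matrix $\bfL$, the pseudo solution from Definition \ref{def:sys_sol} is an affine function of $\bfa$, since $\bfx = \bfA^{-1}\bfb$ with $\bfA$ independent of $\bfa$ and $\bfb$ affine in $\bfa$. The condition that a given firm $i \notin \widetilde D$ has equity exactly zero in this pseudo solution, i.e. is borderline, is therefore a single affine equation in $\bfa \in \R^n$ and its solution set is a hyperplane of Lebesgue measure zero (whenever the corresponding coefficient vector is not identically zero, which one checks directly; if it is identically zero, one can verify that no $\bfa$ produces a genuine borderline firm with this default set in $\bfR^*$, or the event is empty). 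Taking the union over the finitely many pairs $(\widetilde D, i)$ yields a Lebesgue-null set $N \subset \R^n$ outside of which $\bfR^*$ contains no borderline firm.

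Finally I would assemble the argument: on the complement of $N$, every $i \notin D^*$ satisfies the strict inequality $a_i + (\bfMd \bfr^*)_i + (\bfMs \bfs^*)_i > d_i$, so by continuity and $\underline{\bfR}^k \to \bfR^*$ we obtain $i \notin D(\underline{\bfR}^k)$ for all large $k$, and finiteness of $\mathcal N$ again yields a step after which $D(\underline{\bfR}^k) = D^*$. Combined with the deterministic finite convergence from above, this produces an index $l$ as in \eqref{eq:lag_sandwich}, at which Step 3 of Algorithm \ref{alg:sandwich} terminates and delivers $\bfR^*$ via Proposition \ref{prop:def_set_sol}. Since $N$ has Lebesgue measure zero and $\bfa$ has a density with respect to Lebesgue measure, this occurs almost surely, proving the claim.
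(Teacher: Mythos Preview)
Your argument is correct and shares the paper's overall structure: monotonicity and nesting of the default-set sequences from the monotone convergence of the iterates, deterministic finite convergence of $D(\overline{\bfR}^k)$ to $D^*$ via the strict inequalities defining $D^*$, and identification of borderline firms as the sole obstruction on the lower side. The paper packages the almost-sure absence of borderline firms into a separate lemma (Lemma~\ref{lem:boderline_as}).

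Where you genuinely diverge is in how that null-set result is obtained. You exploit the explicit affine structure of the pseudo solution from Definition~\ref{def:sys_sol}: for a fixed candidate default set $\widetilde D$, the vector $\bfx=\bfA^{-1}\bfb$ is affine in $\bfa$, and the borderline condition $x_i=0$ for $i\notin\widetilde D$ cuts out a proper affine hyperplane (your hedge about the coefficient vector is in fact unnecessary: it is the $i$-th row of $\bfA^{-1}$, nonzero because $\bfA$ is invertible under Assumption~\ref{assu:os_mat_norm}); a finite union over $(\widetilde D,i)$ then gives the null set. The paper instead argues structurally: from the fixed-point identity $\bfr^*+\bfs^*=\bfa+\bfMd\bfr^*+\bfMs\bfs^*$ it deduces that $\bfa_2\gg\bfa_1$ forces $r_i^*(\bfa_2)+s_i^*(\bfa_2)>r_i^*(\bfa_1)+s_i^*(\bfa_1)$, so the borderline set meets each line parallel to $(1,\ldots,1)^t$ in at most one point, and a Fubini-type slicing lemma (Lemma~\ref{lem:borel_set_measure_zero}) yields measure zero. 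Your route is more elementary and direct here, while the paper's monotonicity argument does not rely on the explicit linear form of the solution and would carry over to settings where the pseudo-solution map is not available in closed form.
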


\begin{proof}
The increasing and decreasing property of the default sets follows directly from the Propositions 
\ref{prop:conv_picard}, \ref{prop:elsinger_conv} and \ref{prop:comb_meth_conv}. 
The two series of default sets of the algorithm both converge in finitely many iteration steps to $D^*$ 
if there is no firm in the financial system that is borderline. 
Lemma \ref{lem:boderline_as} in the Appendix shows that the probability for borderline firms in $\bfR^*$ is zero 
from which almost sure convergence follows. 
\end{proof}

By its nature, the Sandwich Algorithm converges to $D^*$ from both directions which doubles the computation 
and makes the algorithm somewhat inefficient from a computational point of view. 
On the other hand, the algorithm computes an exact solution in finitely many iteration steps without wasting time on ``Trial-and-Error''.
In contrast to the Trial-and-Error Algorithms, the drawback of the Sandwich Algorithm is 
that the convergence of the procedure cannot be ensured when borderline firms are present in the system. 
To overcome this problem, we recommend for practical purposes to apply the idea of a lag value in the Sandwich Algorithm as well. 

\begin{algorithm}{10}[Modified Sandwich Algorithm]\label{alg:sandwich_mod} Set $l\ge2$.
\begin{enumerate}
  \item Determine $\overline{\bfR}^0$ and $\underline{\bfR}^0$ as well 
  as their corresponding default sets $D(\overline{\bfR}^0)$ and $D(\underline{\bfR}^0)$.
  \item\label{alg:fp_sandwich_mod}
  For $k\ge 1$, calculate the iterates $\overline{\bfR}^k$ and $\underline{\bfR}^k$ 
  using one of the Algorithms \ref{alg:picard}, \ref{alg:elsinger} or \ref{alg:comb_method} 
  and the corresponding default sets $D(\overline{\bfR}^k)$ and $D(\underline{\bfR}^k)$. 
  \item If $D(\overline{\bfR}^k)=D(\underline{\bfR}^k)$, stop the algorithm, 
  set $D^*=D(\overline{\bfR}^k)$ and calculate the pseudo solution that belongs to $D^*$ following Definition \ref{def:sys_sol}.
  Else, if $k\ge l$ and
  \begin{equation}\label{eq:lag_equal_sandwich}
  |D(\underline{\bfR}^k)| - |D(\overline{\bfR}^k)| = \ldots = |D(\underline{\bfR}^{k-l+1})| - |D(\overline{\bfR}^{k-l+1})|,
  \end{equation}
  calculate the pseudo solution belonging to $D(\overline{\bfR}^k)$ and stop the algorithm 
  if it solves the Equations \eqref{eq:liq_eq_debt} and \eqref{eq:liq_eq_equity}. 
  Else, set $k=k+1$ and go back to step \ref{alg:fp_sandwich_mod}.
\end{enumerate}
\end{algorithm}

The modification consists of interrupting the algorithm if the default sets $D(\overline{\bfR}^k)$ 
and $D(\underline{\bfR}^k)$ for both iteration directions are not identical but stay constant for $l$ consecutive times. 
If $l$ is chosen large enough (e.g. $l\ge5$) and \eqref{eq:lag_equal_sandwich} holds, 
this is a strong indication that at least one firm in the system is borderline and that the convergence of both series is not given. 
In this situation, a check whether the default set has already been reached is suitable.

\section{Simulation Study}\label{sec:simulation}

Aim of this section is to confirm the theoretical findings in the former sections by simulation. 
In particular, we focus our considerations on the following subjects:
\begin{enumerate}[(i)]
  \item Investigate the trade-off when choosing a lag value $l$ in the Trial-and-Error Algorithms of Section \ref{subsec:trial_error_alg_decr}. 
  The result of this part contains `optimal' lag values for each algorithm and will be used in the next part. 
  \item Investigate the algorithm efficiency of all the presented algorithms in the Sections \ref{sec:iterative_algo} and \ref{sec:def-set-algo}. 
  For every different technique to calculate the next iterate (Picard, Elsinger, Hybrid), we compare the three types of algorithms 
  (non-finite, Trial-and-Error, Sandwich) with each other. 
  \end{enumerate}
Before presenting the results of our study, the used financial systems are further specified.

\subsection{General Structure of the Financial Systems}

For the system size $n$ we chose six different values, viz. $n\in\{5, 10, 25, 50, 100, 200\}$. 
A system with only 5 or 10 firms can be considered as relatively small whereas networks with $n=25$ or $n=50$ are regarded as medium-sized. 
Small systems are investigated for example in \citet{gourieroux12}, \citet{rogers13} and \citet{elsinger06a} 
where the size was 5, 6 and 10 firms respectively.  
Examples of medium-sized systems are \citet{acemoglu13} and \citet{nier07} where the size was 20 and 25 firms respectively. 
Further, we added networks with 100 and 200 firms into our study to also include larger systems. 
Existing studies for such sizes are \citet{elliott13}, \citet{cont11} and \citet{gai11} 
that entailed networks with 100, 125 and 250 firms, respectively.

There are some empirical studies (\citet{elsinger06b}, \citet{gai10}) that investigated system sizes of about $n=1000$. 
We believe that for practical purposes such large systems are not of interest which is why we did not take values $n>200$ into account. 
However, it is expectable that the results obtained for our system sizes also hold for networks with more than 200 firms. 

The next input parameters to define are the asset and debt values. 
To keep it simple, we assumed in every simulation scenario of this study for the exogenous assets a value of 1 for each firm, 
i.e. $\bfa = (1\ld 1)^t\in\R^n$. 
For firm $i$'s nominal debt value, we set a fixed value $d_i=d$ for all $i\in\mathcal N$ and added a random variation to each debt value in order to
get differing setups which leads to 
\begin{equation}\label{eq:sim_debt_value}
\bfd = (d_1\ld d_n)^t + (\eps_1\ld \eps_n)^t \in\R^n,
\end{equation}
where the $\eps_i$ are independently normally distributed with mean value 0 and standard deviation 0.5, i.e. $\eps_i\sim N(0, 0.25)$. 
Note that in case of shocks with $\eps_i < d_i$ we set $d_i=0$ to avoid negative liabilities. 

When constructing an ownership matrix, the degree of ownership that is operationalized by the expression \emph{integration} 
can provide some crucial information. 

\begin{definition}
Consider a financial System $\mathcal F = (\bfa, \bfd, \bfMd, \bfMs)$. 
The \emph{debt integration level $\mu^{\bfd}$} is defined as the maximum column sum of $\bfMd$, i.e. 
\begin{equation}
\nu^{\bfd} = \max_{i\in\mathcal N} \sum_{j=1}^n M_{ij}^{\bfd} = \|\bfMd\|. 
\end{equation}
Analogously, 
\begin{equation}
\nu^{\bfs} = \max_{i\in\mathcal N} \sum_{j=1}^n M_{ij}^{\bfs} = \|\bfMs\| 
\end{equation}
is called the \emph{equity integration level}. 
\end{definition}

The integration level is hence a measure of the extent of cross-ownership in either the debt or the equity component 
and its definition is based on the one given in the work of \citet{elliott13}. 
Because of Assumption \ref{assu:os_mat_norm}, it follows directly that $\nu^{\bfd}, \nu^{\bfs}\in[0,1)$. 

The integration levels $\nu^{\bfd}$ and $\nu^{\bfs}$ obviously do not specify the single entries of the ownership matrices. 
For this purpose we will limit our consideration in the following to somewhat regular structures of the matrices.

\begin{definition}
An ownership matrix $\bfM$ is called 
\begin{itemize}
  \item a \emph{ring ownership matrix} if in every column only one entry is larger than 0 and 
  \item a \emph{complete ownership matrix} if every entry, except for the diagonal entry, is larger than 0 and of the same size.  
\end{itemize}
Further, let $\widetilde\bfM$ be a ring ownership matrix and $\widehat\bfM$ be a complete ownership matrix. 
A \emph{$\lambda$-convex combination} of $\widetilde\bfM$ and $\widehat\bfM$ is defined as the matrix $\bfM$ with entries 
\begin{equation}\label{eq:os_mat_convex}
M_{ij} = \lambda \widetilde M_{ij} + (1-\lambda) \widehat M_{ij}, \quad \lambda\in[0,1]. 
\end{equation}
\end{definition}

The concepts of ring and complete matrices and of convex combinations are originally used in \citet{acemoglu13}. 
If $\bfMd$ is a ring matrix this means that every firm has only one creditor within the system, 
and only one shareholder if $\bfMs$ is a ring matrix. 
Without loss of generality we assume that firm $i+1$ is the creditor (shareholder) of firm $i$ for $i=1\ld n-1$ 
and that firm $n$ is the creditor (shareholder) of firm 1. 
When $\bfMd$ ($\bfMs$) is a complete ownership matrix, debt (share) proportions are equally distributed between the $n-1$ firms. 
The lower $\lambda$ is chosen, the more equal are the entries of the corresponding convex combination. 

\begin{example}
For a system of size $n=4$ we assume a debt integration level of $\nu^{\bfd}=0.9$ and set $\lambda = 0.5$.
The ring ownership matrix $\widetilde\bfM^{\bfd}$, the complete ownership matrix $\widehat\bfM^{\bfd}$ 
and the $\lambda$-convex combination matrix $\bfMd$ then are
\begin{equation}
\widetilde\bfM^{\bfd} = \begin{pmatrix} 0  & 0  & 0  & .9 \\
                                        .9 & 0  & 0  & 0  \\
                                        0  & .9 & 0  & 0  \\
                                        0  & 0  & .9 & 0  \\\end{pmatrix},\ 
\widehat\bfM^{\bfd} = \begin{pmatrix} 0  & .3 & .3 & .3 \\
                                      .3 & 0  & .3 & .3 \\
                                      .3 & .3 & 0  & .3 \\
                                      .3 & .3 & .3 & 0  \\\end{pmatrix}\ \text{and}\ 
\bfMd = \begin{pmatrix}  0  & .15 & .15 & .6  \\
                        .6  & 0   & .15 & .15 \\
                        .15 & .6  & 0   & .15 \\
                        .15 & .15 & .6  & 0   \\\end{pmatrix}.
\end{equation}
\end{example} 

Consider two financial systems $\mathcal F = (\bfa,\bfd,\bfMd,\bfMs)$ and 
$\widetilde \F = (\widetilde\bfa,\widetilde\bfd,\widetilde\bfM^{\bfd},\widetilde\bfM^{\bfs})$ 
with corresponding integration levels $\nu^{\bfd}$, $\nu^{\bfs}$ and $\tilde\nu^{\bfd}$, $\tilde\nu^{\bfs}$. 
Due to the regular structure of the ownership matrices we can say that a system $\widetilde \F$ is more \emph{debt-integrated} than $\F$, 
if and only if $\tilde\nu^{\bfd}>\nu^{\bfd}$. 
In the same way, we define that a system is more \emph{equity-integrated}. 

With the former definitions, the following parameters are needed for the simulation of a financial system: $n$, $d$, $\nud$, $\nus$ and $\lambda$, 
where we will use the same $\lambda$ to define the debt and the equity ownership matrix according to \eqref{eq:os_mat_convex}. 
A \emph{simulated system} is the financial system $\F = (\bfa,\bfd,\bfMd,\bfMs)$, 
where the parameters $n$, $d$, $\nud$, $\nus$ and $\lambda$ are used to define $\bfa$, $\bfMd$ and $\bfMs$ 
and where the liabilities $\bfd$ are a realization of the random variable in \eqref{eq:sim_debt_value}.

\subsection{Effect of the Lag Value}\label{subsec:effect_lag_value}

As mentioned in Section \ref{subsec:trial_error_alg_decr}, the smaller the lag value $l$ is chosen in the Trial-and-Error Algorithms, 
the higher is the chance that the first possible default set is not the actual $D^*$. 
This results in unnecessary computation steps to reach the real default set. 
On the other hand, if $l$ is taken as very high, say $l=5$ or higher, there are many iteration steps in the algorithm that are possibly not needed. 
For this reason, we wanted to investigate this trade-off situation by determining the error rate for the first potential default set. 

Assume that for a given parameters $n$, $d$, $\nud$, $\nus$ and $\lambda$, we have generated $N$ simulated systems. 
For every system we determine for the Trial-and-Error Picard (TP), the Trial-and-Error Elsinger (TE) 
and the Trial-and-Error Hybrid Algorithm (TH) for a lag value $l\ge 2$ the first potential default set 
$\bar D^j_{{\rm TP}}(l)$, $\bar D^j_{{\rm TE}}(l)$ and $\bar D^j_{{\rm TH}}(l)$ where $j=1\ld N$. 
In case of the Trial-and-Error Picard Algorithm we define
\begin{equation}
\eps^j_{{\rm IP}}(l) = \begin{cases} 1, & \text{if $\bar D^j_{{\rm IP}}(l) \not= D^*$} \\ 
                                     0, & \text{else}, \end{cases}
\end{equation}
and analogously $\eps^j_{\rm TE}(l)$ and $\eps^j_{\rm TH}(l)$ for the TE and TH Algorithm, respectively. 
The \emph{error rate} for the TP Algorithm for the lag value $l$ is then given by
\begin{equation}
\eps_{\rm TP}(l) = \frac 1N \sum_{j=1}^N \eps^j_{\rm TP}(l) \in [0, 1].
\end{equation}
In the same way the error rates $\eps_{\rm TE}(l)$ and $\eps_{\rm TH}(l)$ are defined. 

For the investigation of the error rate we chose $d_i=d=1.5$ in \eqref{eq:sim_debt_value} as the debt value. 
The debt integration values where $\nu^{\bfd}\in\{0.9, 0.5, 0.1\}$, which we considered as systems with high, moderate and low debt cross-ownership. 
Similarly, we took $\nu^{\bfs}\in\{0.45, 0.25, 0.05\}$ for equity integration, where each value is half the associated debt integration. 
The justification for this approach is that equity cross-ownership is probably commonly less pronounced than debt cross-ownership. 
Further, we wanted to avoid possible cross-ownership entries lager that 0.5 
since this would mean that a firm is owned by majority by another firm in the system. 
Each equity and debt integration value was combined with each other which results in 9 possible system settings. 
Beyond that, all 9 settings were investigated three times, 
where the structure parameter $\lambda$ took the three possible values $\lambda\in\{0, 0.5, 1\}$,
i.e. systems with only ring ownership matrices, systems with only complete matrices and systems with a 0.5-convex combination were considered. 
In total, the combination of the parameters $\nud$, $\nus$ and $\lambda$ leads to 27 different settings 
and for every setting $N=1000$ simulated systems were generated.
The error rates were calculated for the three algorithms for the lag values $l=2\ld 7$. 
Repetitive simulations with $N=1000$ showed that the error rates are fairly stable for different simulation runs 
which is why we viewed the number of 1000 repetitions as reliable. 
We used the Decreasing Trial-and-Error Algorithm defined in Algorithm \ref{alg:trial_error_decr}, 
but simulations with the Increasing Trial-and-Error Algorithm showed very similar results. 

\begin{table}[!htbp]
\centering
\caption{Error rates $\eps^k_{\rm TP}(l)$, $\eps^k_{\rm TE}(l)$ and $\eps^k_{\rm TH}(l)$ 
in percentage points for the Decreasing TP, TE and TH Algorithm for $l=2\ld 7$. 
Mean values over all three values of $\lambda$ and all values of $\nu^{\bfd}$ and $\nu^{\bfs}$ are shown for each combination.
The last three rows of the table show the overall mean error rates over all considered system sizes.}
\label{tab:lag_sim_res_n}
\begin{tabular}{llrrrrrr}\toprule
                         &    & $l=2$  & $l=3$ & $l=4$ & $l=5$ & $l=6$ & $l=7$ \\ \midrule
\multirow{3}{*}{$n=5$}   & TP & 9.652  & 3.111 & 1.170 & 0.500 & 0.230 & 0.104\\
                         & TE & 3.452  & 0.718 & 0.159 & 0.037 & 0.022 & 0.007\\
                         & TH & 0.156  & 0.015 & 0.004 &     0 &     0 &     0\\ \midrule                         
\multirow{3}{*}{$n=10$}  & TP & 4.574  & 1.452 & 0.529 & 0.196 & 0.089 & 0.048\\
                         & TE & 1.256  & 0.111 & 0.015 & 0.004 &     0 &     0\\
                         & TH & 0.233  & 0.004 &     0 &     0 &     0 &     0\\ \midrule
\multirow{3}{*}{$n=25$}  & TP & 8.608  & 2.730 & 1.063 & 0.455 & 0.207 & 0.067\\
                         & TE & 2.819  & 0.348 & 0.041 & 0.004 &     0 &     0\\
                         & TH & 0.533  &     0 &     0 &     0 &     0 &     0\\ \midrule
\multirow{3}{*}{$n=50$}  & TP & 10.596 & 3.122 & 1.111 & 0.482 & 0.189 & 0.078\\
                         & TE & 3.300  & 0.389 & 0.041 & 0.007 &     0 &     0\\
                         & TH & 0.537  & 0.011 &     0 &     0 &     0 &     0\\ \midrule
\multirow{3}{*}{$n=100$} & TP & 11.359 & 3.385 & 1.211 & 0.478 &  0.2  & 0.093\\
                         & TE & 3.233  & 0.411 & 0.044 & 0.007 &  0    &     0\\
                         & TH & 0.404  & 0.007 &     0 &     0 &  0    &     0\\ \midrule
\multirow{3}{*}{$n=200$} & TP & 11.485 & 3.230 & 1.111 & 0.419 & 0.167 & 0.063\\
                         & TE & 2.870  & 0.322 & 0.022 & 0.007 &     0 &     0\\
                         & TH & 0.267  & 0.007 &     0 &     0 &     0 &     0\\ \midrule
\multirow{3}{*}{all $n$} & TP & 9.379  & 2.838 & 1.033 & 0.422 & 0.180 & 0.075\\
                         & TE & 2.822  & 0.383 & 0.054 & 0.011 & 0.004 & 0.001\\
                         & TH & 0.355  & 0.007 & 0.001 &     0 &     0 &    0\\ \bottomrule
\end{tabular}
\end{table}

The first observation was that the structure of the ownership matrix, that is the choice of $\lambda$, had no severe effect on the error rates. 
This results in comparing the error rates for systems with given $n$, $\nu^{\bfd}$ and $\nu^{\bfs}$ for the three choices of $\lambda$. 
In case of $n=100$ where moderate debt and low equity cross-ownership was present ($\nu^{\bfd}=0.5$, $\nu^{\bfs}=0.05$) and a lag value $l=1$, 
the largest absolute difference is documented for the Trial-and-Error Picard Algorithm 
between complete and ring ownership matrices (9.6\% to 3.2\%). 
In the large majority of possible combinations, the difference was much smaller which is why we concluded that the ownership structure itself 
does not affect the error rate in an essential way. 
For this reason, we summarized the three values of $\lambda$ and calculated the mean error rates for every combination of $n$, $\nu^{\bfd}$ 
and $\nu^{\bfs}$ over all $\lambda$ for the further results. 

In Table \ref{tab:lag_sim_res_n}, the results of the simulation to investigate the effect of the system size are summarized. 
The error rates are calculated as the mean over all possible combinations for debt and equity integration for every value of $n$. 
We observe that the network size only slightly affects the error rates, since for the same lag value they are relatively close for all system sizes.
The only exception are systems with $n=10$ firms, where the error rate is smaller compared to the other systems. 
In the last three rows of the table, an overall impression of the error rates shows that even for the TP Algorithm 
and a lag value of $l=2$, the error rate is not higher than 10\% in total. 
We also detected that the error rates for the TE Algorithm are much smaller and even more so for the TH Algorithm. 
For increasing lag values, the error rates quickly diminish in size for all considered methods. 

\begin{table}[!htbp]
\centering
\caption{Error rates $\eps^k_{\rm TP}(l)$, $\eps^k_{\rm TE}(l)$ and $\eps^k_{\rm TH}(l)$ 
in percentage points for the TP, TE and TH Algorithm for $l=2,3$. 
For the debt integration level $\nu^{\bfd}$, low, moderate and high integration is defined as 0.1, 0.5 and 0.9, respectively. 
For the equity integration level $\nu^{\bfs}$ the corresponding levels are defined as 0.05, 0.25 and 0.45. 
Mean values over all three values of $\lambda$ and all values of $n$ are given for each combination. 
Further, overall mean error rates for within each debt and equity integration level are shown in an additional column and row.}
\label{tab:lag_sim_res_alg}
\begin{tabular}{llrrrrclrrrr}\toprule
 & \multicolumn{5}{c}{$l=2$} & & \multicolumn{5}{c}{$l=3$}\\\cmidrule{2-6}\cmidrule{8-12}
\multirow{6}{*}{TP} & $\nu^{\bfd}$ & \multicolumn{3}{c}{$\nu^{\bfs}$} & & & $\nu^{\bfd}$ & \multicolumn{3}{c}{$\nu^{\bfs}$} & \\ \cmidrule{3-5} \cmidrule{9-11}
 &      & low  & mod.  & high  &       & &      & low  & mod. & high  &     \\
 & low  & 2.60 & 4.65  & 7.64  & 4.96  & & low  & 0.22 & 0.54 & 1.41  & 0.72\\
 & mod. & 8.53 & 19.64 & 35.83 & 21.33 & & mod. & 1.72 & 5.58 & 14.62 & 7.31\\
 & high & 1.31 &  2.47 & 1.75  & 1.84  & & high & 0.14 & 0.73 &  0.57 & 0.48\\ \cmidrule{2-6}\cmidrule{8-12} 
 &      & 4.11 & 8.73  & 15.14 & 9.38  & &      & 0.69 & 2.29 &  5.53 & 2.84\\ \midrule
\multirow{6}{*}{TE} & $\nu^{\bfd}$ & \multicolumn{3}{c}{$\nu^{\bfs}$} & & & $\nu^{\bfd}$ & \multicolumn{3}{c}{$\nu^{\bfs}$} & \\ \cmidrule{3-5} \cmidrule{9-11}
 &      & low  & mod.  & high  &       & &      & low  & mod. & high  &     \\
 & low  & 2.16 & 2.34  & 2.51  & 2.34  & & low  & 0.19 & 0.19 & 0.18  & 0.19\\
 & mod. & 6.24 & 6.07  & 5.60  & 5.97  & & mod. & 1.07 & 0.91 & 0.86  & 0.95\\
 & high & 0.29 & 0.13  & 0.06  & 0.16  & & high & 0.03 & 0.02 & 0     & 0.02\\ \cmidrule{2-6}\cmidrule{8-12}
 &      & 2.90 & 2.84  & 2.72  & 2.82  & &      & 0.43 & 0.37 & 0.35  & 0.38\\ \midrule
\multirow{6}{*}{TH} & $\nu^{\bfd}$ & \multicolumn{3}{c}{$\nu^{\bfs}$} & & & $\nu^{\bfd}$ & \multicolumn{3}{c}{$\nu^{\bfs}$} & \\ \cmidrule{3-5} \cmidrule{9-11}
 &      & low  & mod.  & high  &       & &      & low  & mod. & high  &     \\
 & low  & 0.02 & 0.13  & 0.24  & 0.13  & & low  &  0   & 0    & 0     & 0   \\
 & mod. & 0.22 & 0.96  & 1.59  & 0.92  & & mod. &  0   & 0.01 & 0.04  & 0.02\\
 & high & 0.04 & 0     & 0     & 0.01  & & high &  0   & 0    & 0     & 0   \\ \cmidrule{2-6}\cmidrule{8-12}
 &      & 0.09 & 0.36  & 0.61  & 0.35  & &      &  0   & 0.01 & 0.01  & 0.01\\ \bottomrule 
\end{tabular}
\end{table}

To assess the influence of the debt and the equity integration level, the mean error rate over all considered system sizes are taken 
and listed for each possible combination of the integration level, as shown in Table \ref{tab:lag_sim_res_alg}. 
If the debt integration level increases from low to moderate, the error rates increase as well. 
This can, for example, be seen when comparing the mean error rates for the debt integration levels 0.1 and 0.5 
over all equity integration levels in the last column in Table \ref{tab:lag_sim_res_alg} for every lag value. 
For the Trial-and-Error Picard Algorithm, the error rate increases from 4.96\% to 21.33\%; for the other algorithms we observe similar results. 
A further increase of the debt integration from 0.5 to 0.9, however, has the reverse effect since the error rates decrease in this case. 
Again we take the TP Algorithm as an example where the error rates shrinks from 21.33\% to 1.84\%. 
A possible explanation for this behavior could be that for the TP Algorithm the number of needed iteration steps to converge to the solution 
was always highest for the combination $\nu^{\bfd}=0.5$ and $\nu^{\bfs}=0.45$ (data not shown). 
Hence, the convergence speed in these situations is very slow 
which explains why the first potential default set is often not the actual default set. 
When the mean error rates for every equity integration level averaged over all values of $\nu^{\bfd}$ is examined, 
we observed that, except for the TE Algorithm, the error rates increase for increasing integration levels. 
In case of the TP Algorithm, we have an increase from 4.11\% to 8.73\% to 15.14\% 
for $\nu^{\bfs}=0.05$, $\nu^{\bfs}=0.25$ and $\nu^{\bfs}=0.45$, respectively. 
The error rates for the TE Algorithm stay approximately constant (2.90\%, 2.84\%, 2.72\%). 

Our overall conclusion of this part of the simulation study is that the choice of the algorithm 
and the lag value $l$ has the strongest effect on the error rate, i.e the error rates quickly decrease for greater lag values. 
The TP Algorithm has, as expected, the highest overall error rates, much higher than the TE and the TH Algorithm. 
What also affects the error rate is the integration level of the ownership matrices. 
For our simulation setting, it was the combination of moderate debt and high equity integration that yielded in the highest rates. 
The structure of the ownership matrices, on the other hand, had no influence on the error rate. 
Taking the overall mean error rates as the main reference, we can state that for the TE and the TH Algorithm a lag value of 2 is appropriate, 
since the corresponding error rates are with 2.82\% and 0.36\% very small. 
For the TP Algorithm, however, for $l=2$ we get an overall error rate of 9.38\% 
which is why a lag value of 3 with an overall error rate of 2.84\% seems more convenient for this procedure. 

\subsection{Comparison of Algorithm Efficiency}\label{subsec:algo_efficency}

Searching for the most efficient algorithm, the main issue is to minimize the calculation effort to find the solution $\bfR^*$. 
In every iteration step of the algorithm, different kind of calculations are carried out for the different algorithms. 
We quantify the calculation costs with the Landau symbol (Big $\Omikron$ notation), 
where for example $\Omikron(n)$ means that the time $T(n)$ to compute a problem of size $n$ grows at the rate $n$. 
We distinguish between two different types of calculations in our considerations. 
For the first type, a mapping is applied to a given vector. 
This mapping can either be the mapping $\Phi$ in \eqref{eq:Phi}  
or the mappings $\Phi^{\bfd}$ and $\Phi^{\bfs}$ in \eqref{eq:phi_aux_eq} and \eqref{eq:phi_aux_debt}, respectively, 
whereas the second type, matrix multiplications are done. 
In all cases, the most expensive calculations are matrix multiplications, 
whereas the type embodies the solution of a linear equation system such as the ones defined in the Algorithms \ref{alg:equity} and \ref{alg:debt}. 
The computational costs of both types are between the range of $\Omikron(n^2)$ and $\Omikron(n^3)$ (cf. \citet{dahlquist08}).

Keeping the functioning of the algorithms in mind, the Elsinger and the Hybrid Algorithm seem to be less efficient than the Picard Algorithm, 
since in the latter one, no linear equation system has to be solved which results in smaller computational costs. 
However, as we have seen in the Propositions \ref{prop:elsinger_better_picard} and \ref{prop:hybrid_better_elsinger}, 
in terms of iteration steps, the Hybrid Algorithm converges faster to the solution compared to the Elsinger Algorithm, 
which in turn converges faster to the solution than the Picard Iteration. 
Therefore, a typical trade-off-situation is given between computational costs and convergence speed of an algorithm. 
Note that for a sequence $\bfR^k$ that converges to a fixed point $\bfR^*$ the \emph{convergence rate} is called \emph{linear}, 
if there exists a $c\in(0,1)$ such that 
\begin{equation}
\|\bfR^* - \bfR^{k+1}\| \le c \|\bfR^* - \bfR^k\|
\end{equation}
for all $k\ge0$. 
Since $\|\Phi(\bfR^*)-\Phi(\bfR^k)\|\le\Imax\|\bfR^*-\bfR^k\|$ with $\Imax=\max\{||\bfMd||,||\bfMs||\}$ 
(see Lemma 4.1 in \citet{fischer14}), 
linear convergence holds for the Picard Algorithm if instead of Assumption \ref{assu:holding_mat}
the stronger assumption of matrix norms being strictly smaller than $1$ is made (which means that of all debt
and equity a non-zero share is held by a system outsider).
The properties of the Elsinger and the Hybrid Algorithms, however, 
made it impossible to prove linear convergence (or an even higher convergence rate). 
The next problem is that the total computational cost for one the algorithms is not determinable in general. 
For these reasons, the comparison of the different calculation techniques (Picard, Elsinger, Hybrid) on an analytical base seems impossible. 

This is why we measured the time that was needed to execute an algorithm and considered this value as the primary outcome of our simulation. 
Though this measure strongly depends on the processor speed and memory capacity of the computer, 
it allows an objective comparison of the different algorithms. 
The simulations were conducted on a computer with 3.2 GHz and 4 GB RAM, the software used was R (\citet{Rsoftware}).

The parameters defining the financial systems are given in the following. 
Unlike to the simulation in Section \ref{subsec:effect_lag_value}, where a fixed debt value $d$ was used, 
we varied between five possible the debt values and chose $d\in\{1, 1.5, 2, 2.5, 3\}$. 
The set of equity and debt integration levels was extended to $\nu^{\bfs}\in\{0.025,0.1,0.175,0.25,0.325,0.4,0.475\}$ and 
$\nu^{\bfd} \in \{0.05,0.2,0.35,0.5,0.65,0.8,0.95\}$, hence seven possible integration level respectively. 
A result in Section \ref{subsec:effect_lag_value} was that the structure of the ownership matrix does not influence the error rates, 
which is why we only took complete ownership matrices into account for this simulation, i.e. $\lambda=0$ in \eqref{eq:os_mat_convex}. 
Together with the six considered systems sizes ($n\in\{5, 10, 25, 50, 100, 200\}$), this new setting results in 
$6\cdot5\cdot7\cdot7=1470$ different settings. 
Again, for each setting, $N=1000$ simulated systems were generated for every parameter combination. 
For every simulated systems we applied all 15 algorithms presented in the former section 
and documented the runtime for every procedure to find the solution $\bfR^*$. 
We used the Picard, the Elsinger and the Hybrid Algorithm (Algorithms \ref{alg:picard}, \ref{alg:elsinger} and \ref{alg:comb_method}) 
with both versions, i.e. the decreasing and the increasing version. 
Further, the Trial-and-Error versions were considered, again both the decreasing (Algorithm \ref{alg:trial_error_decr}) 
and the increasing version (Algorithm \ref{alg:trial_error_incr}) with lag values of $l=3$ for the Picard versions 
and $l=2$ for the Elsinger and the Hybrid versions. 
The choice of the lag value is a result of the simulations in Section \ref{subsec:effect_lag_value}. 
Note that minimizing the error rate to an appropriate value is not necessarily equivalent to minimizing the runtime of the algorithms. 
For these reasons, we compared for every considered scenario the runtime of the Trial-and-Error Algorithms using lag values from $l=2$ to $l=5$. 
The results (not shown here) are that for the Elsinger and the Hybrid versions of the algorithms, 
the choice of $l=2$ does not only keep the error rate on a very low level, but also minimizes the runtime. 
For the Trial-and-Error Picard Algorithm, the simulation showed that the runtime is almost identical for $l=2$
and $l=3$. 
However, there is no clear tendency between those choices of $l$: 
for some parameter combinations $l=2$ lead to smaller runtimes and for some situations this was the case for $l=3$. 
Due to this indifference for the Trial-and-Error Picard Algorithm, 
we set $l=3$ for these procedures in accordance with the findings of Section \ref{subsec:effect_lag_value}. 
At last, the three versions of the Sandwich Algorithm (Algorithm \ref{alg:sandwich}) were also taken into account. 
The tolerance level in all algorithms was set to $\varepsilon=10^{-3}$. 

\pagebreak

\begin{table}[!htbp]
\centering
\caption{Mean runtime in seconds for every algorithm over all debt and equity integration values ($\nu^{\bfd}$ 
and $\nu^{\bfd}$) and all debt values ($d$) grouped by system size, algorithm and iteration type.
For each of the three iteration types, the average runtime of the corresponding increasing and decreasing version was calculated, 
except for the Sandwich Algorithms. 
}
\label{tab:res_sim_size}
\begin{tabular}{llrrrrrr}\toprule
\multirow{2}{*}{Algorithm Type}  &  \multirow{2}{*}{Iteration Type} & \multicolumn{6}{c}{system size $n$} \\\cline{3-8}
                                 &          &    5 &   10 &   25 &   50 &   100 &    200 \\\midrule
\multirow{3}{*}{Non-finite}      & Picard   & 1.81 & 1.90 & 2.27 & 3.01 &  6.19 &  21.16 \\
                                 & Elsinger & 1.86 & 2.08 & 2.90 & 5.71 & 24.38 & 175.40 \\
                                 & Hybrid   & 1.65 & 1.84 & 2.58 & 5.24 & 23.06 & 164.03 \\\cline{2-8}
                                 &          & 1.78 & 1.94 & 2.58 & 4.65 & 17.87 & 120.20 \\\midrule
\multirow{3}{*}{Trial-and-Error} & Picard   & 1.29 & 1.30 & 1.64 & 2.81 &  8.77 &  45.21 \\
                                 & Elsinger & 1.32 & 1.39 & 1.91 & 3.95 & 17.14 & 119.38 \\
                                 & Hybrid   & 1.57 & 1.67 & 2.24 & 4.49 & 19.04 & 130.88 \\\cline{2-8}
                                 &          & 1.39 & 1.45 & 1.93 & 3.75 & 14.98 &  98.49 \\\midrule
\multirow{3}{*}{Sandwich}        & Picard   & 1.16 & 1.28 & 1.79 & 3.31 & 10.73 &  55.08 \\
                                 & Elsinger & 1.33 & 1.54 & 2.41 & 5.53 & 26.18 & 192.65 \\
                                 & Hybrid   & 1.49 & 1.70 & 2.55 & 5.57 & 25.44 & 182.91 \\\cline{2-8}
                                 &          & 1.32 & 1.51 & 2.25 & 4.80 & 20.78 & 143.55 \\\midrule
\multirow{3}{*}{Overall}         & Picard   & 1.47 & 1.54 & 1.92 & 2.99 &  8.13 &  37.56 \\
                                 & Elsinger & 1.54 & 1.69 & 2.41 & 4.97 & 21.84 & 156.44 \\
                                 & Hybrid   & 1.59 & 1.74 & 2.44 & 5.01 & 21.93 & 154.55 \\\cline{2-8}
                                 &          & 1.53 & 1.66 & 2.26 & 4.32 & 17.30 & 116.19 \\\bottomrule
\end{tabular}
\end{table}

An important topic is of course the comparison of new developed techniques (Trial-and-Error, Sandwich) with the existing procedure. 
In Table \ref{tab:res_sim_size}, the mean runtimes are listed grouped by the size of the financial system 
as well as the algorithm and the iteration type. 
We summarized the decreasing and the increasing version of every algorithm respectively by calculating the mean runtime of both procedures. 
The runtimes of the decreasing versions were in most situations smaller than their counterparts. 
Ignoring the random structure of $\bfd$ for an instance and calculating the fixed point, it could be seen
that in about 60~\% of all considered scenarios, 
no firm was in default which explains the slight `overperformance' of the decreasing algorithms. 
If we compare the mean runtimes over all iteration types, we find that for $n=5$ the Sandwich Algorithms have the best performance (1.32~s) 
compared to the Trial-and-Error and the non-finite methods (1.39~s and 1.78~s, respectively). 
For $n\ge5$, the fastest runtimes averaged over the iteration types are achieved for the Trial-and-Error procedures. 
Comparing the different iteration techniques with each other, 
we find that using the Picard Iteration technique results in a minimal computational effort for all considered system sizes. 
To be more specific, for small financial systems ($n=5,10$), the Sandwich Picard Algorithm shows the smallest runtime 
(1.16 s and 1.28 s, respectively), whereas for medium-sized systems ($n=25,50$), 
the Trial-and-Error Picard Algorithm performs best compared to the other algorithm types (1.64 s and 2.81 s, respectively). 
In case of large financial systems, i.e. $n=100,200$, the Picard Iteration in its non-finite form yields to lowest runtimes 
(6.19 s and 21.16 s, respectively). 
In general, it is clearly visible that Picard-typed algorithms have the best performance within every Algorithm type. 
The only exception of this trend can be found in the class of non-finite algorithm, where for $n=5$ and $n=10$ 
the Hybrid Algorithm showed a slightly lower runtime than the Picard Algorithm. 
In all other situations, however, the Picard Algorithm is superior to the other algorithms. 

Beside the size $n$, we also investigated the influence of the other parameters that define the form of the financial system on the runtime. 
We observed that increasing debt values $d$ result in an increasing calculation effort, 
see Table \ref{tab:res_sim_debt_int} in the Appendix for a detailed overview. 
An exception of this tendency represents the Hybrid Algorithm, where the runtime for large $d$ begins to decrease again, 
no matter which algorithm type is considered.  
The reason for this behavior is that the runtime for the increasing versions of the Hybrid Algorithm becomes smaller for large $n$ 
and so does the average of the increasing and the decreasing version of the algorithm, that is shown in Table \ref{tab:res_sim_debt_int}. 
A possible explanation is that the Increasing Hybrid Algorithm uses a Picard-type technique to determine the next debt iterate 
(cf. Algorithm \ref{alg:picard_debt}). 
As shown above, the Picard iteration results, in particular for large $n$, in much better runtime performances. 

If the debt integration level increases, we first observe a similar effect on the runtime as for the debt values, 
i.e. the higher the integration level, the higher the runtime. 
However, this monotonicity holds only up to $\nu^{\bfd}=0.5$ or $\nu^{\bfd}=0.65$ in most cases. 
For larger debt integration levels, the computational effort decreases again. 
The reason for this behavior could be that for small values of $\nu^{\bfd}$,  
it is very likely that many firms in the financial system are in default. 
In such situations, we observe that only few iteration steps are needed until $\bfR^*$ is reached. 
If the debt integration level is very high, the same effect establishes with the difference that many firms in the system are solvent. 
For medium debt integration levels this clear distinction for a firm between solvent and default disappears. 
A consequence is a higher number of needed iteration steps which also influences the runtime. 
Moreover, this interpretation is underlined by the fact that for small $\nu^{\bfd}$ (firms more likely in default), 
the increasing versions of the algorithms have a better performances, 
whereas this relationship inverts for large integration levels where the firms are more likely to be solvent.
For increasing equity integration levels this effect is not visible, the runtime increases if $\nu^{\bfs}$ increases (results not shown). 
Since $\nu^{\bfs}\le 0.45$ the equity integration seems to have a less strong effect on the status of the firms in the system 
and therefore, the effect seen in the debt integration levels does probably not appear.

\section{Summary}\label{sec:summary}

In this article, we gave a survey of the existing algorithms (``Picard'' and ``Elsinger'') for the computation
of equilibrium prices in a financial system in which cross-holdings of equity and debt are present. 
Moreover, we showed how the ideas of \citet{elsinger09} and \citet{eisenberg01} can be combined to get an iteration procedure (``Hybrid Algorithm'')
that is in every iteration step closer to the solution $\bfR^*$ than the ``Picard'' and ``Elsinger'' algorithms. 
We developed new iteration methods based on the information of defaulting and solvent firms under a current payment vector. 
A consequence of these default set-based methods is that the exact solution of the system is reached in a finite number of steps, 
which could not be ensured for the existing iteration procedures. 
Using this new approach yields to two different concepts, that we called ``Sandwich'' algorithms and ``Trial-and-Error'' algorithms. 
While for the former type, a clear stopping criteria can be defined (at least almost surely), 
the latter algorithms have the drawback that every potential solution has to be checked for validity. 
In a simulation study, we showed that choosing an appropriate lag value $l$, the computational effort can be kept to a minimum. 

Another simulation showed that essentially less iteration steps have to be performed when using the new default-set based techniques. 
However, the faster convergence concerning the number of iterations has its price: 
In the new algorithms other than the Picard type, potentially several linear equation systems have to be solved in every iteration step. 
This leads to a higher calculation effort for those methods and a result of the empirical investigation of the runtime for all algorithms was 
that in particular for large financial systems the computational costs then become  higher than for algorithms where no linear equation systems have to be solved.
Another result of the runtime analysis is that the most efficient iteration technique is of Picard type. 
In the majority of the considered settings those iteration techniques performed best, 
no matter which algorithm type (Non-finite, Trial-and-Error, Sandwich) was used. 
One of the main results is that the choice of the most efficient algorithm strongly depends on the size $n$ of the financial system. 
We observed that for small systems ($n=5,10$) the Sandwich Picard technique, for medium-sized systems ($n=25,50$) the Trial-and-Error Picard technique 
and for large systems ($n=100,200$) the simple non-finite Picard technique achieves the best results with regard to the minimization of the runtime. 
Regarding the choice of the tolerance level $\eps$, smaller values than the used one of $\eps=10^{-3}$ will strongly affect 
the results of the non-finite iteration techniques, since additional simulations (results not listed here) revealed that an increase of $\eps$ will lead to a disproportionally strong increase of the 
needed iteration steps and therefore the runtime. 
One consequence could be that for large systems the non-finite Picard iteration would not be optimal anymore,
since the finite algorithm techniques do not depend on $\eps$. 
We are aware of this effect, but we think that, for practical purposes, a tolerance level of $\eps=10^{-3}$ is sufficiently small.

The simulation in Section \ref{subsec:algo_efficency} contained only complete ownership matrices ($\lambda = 0$ in \eqref{eq:os_mat_convex}). 
It is of potential interest, whether in case of ring ownership matrices or $\lambda$-convex combinations 
the results lead to the same conclusions. 
No matter which value of $\lambda$ is chosen, the entries of $\bfMd$ and $\bfMs$ still are uniquely determinable. 
A potential extension of this assumption would be to allow random ownership matrices based on a random network matrix 
as used for example in \citet{elliott13}. 
Besides these questions, the main focus for further research should be on generalizing 
the algorithms for systems with more than one seniority level for the liabilities. 
In \citet{fischer14}, this was done for the non-finite Picard Algorithm 
and in \citet{elsinger09}, an extension of the non-finite Elsinger Algorithm is discussed as well. 
It would be of interest whether and -- if yes -- how the Hybrid Algorithm can be generalized for a model that allows for a seniority structure of debt.

\appendix

\setcounter{lemma}{0}
\renewcommand{\thelemma}{\Alph{section}\arabic{lemma}}

\section{Appendix}

\subsection{Proofs and Auxiliary Results}

\begin{lemma}
\label{non-expansive_infinite}
Let $||\cdot||$ be a not necessarily strictly convex norm on $\mathbb{R}^n$, and
let $\Phi$ be a map on a nonempty convex and compact set $C\subset \mathbb{R}^n$
which is non-expansive with respect to the norm-induced metric.
The set of fixed points of $\Phi$ in $C$ is
then nonempty, closed, and either a singleton, or uncountable.
\end{lemma}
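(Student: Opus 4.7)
The plan is to dispatch nonemptiness and closedness with standard arguments and then to prove the dichotomy by showing that the fixed-point set $F:=\{x\in C:\Phi(x)=x\}$ has no isolated point whenever it contains at least two points. Since $\Phi$ is non-expansive it is $1$-Lipschitz, hence continuous, and $C\subset\mathbb R^n$ is nonempty, convex, and compact; so Brouwer's fixed-point theorem immediately yields $F\neq\emptyset$. Closedness of $F$ follows because $F=(\Phi-\mathrm{id})^{-1}(\{\bzero\})$ is the preimage of a closed set under a continuous map.

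For the singleton-or-uncountable alternative, my key observation is that any closed subset of $\mathbb R^n$ admits a Cantor--Bendixson decomposition into a (possibly empty) perfect part --- which, if nonempty, has cardinality of the continuum --- together with an at most countable scattered part. Consequently, any closed subset of $\mathbb R^n$ that contains at least two points and has no isolated point must be uncountable. It therefore suffices to show that if $x_0,y_0\in F$ with $x_0\neq y_0$, then no $w\in F$ is isolated in $F$. To produce fixed points of $\Phi$ near a prescribed $w$, I would use the contractive perturbations
\[
\Phi_{\alpha,p}(z)\;:=\;\alpha\,p+(1-\alpha)\,\Phi(z),\qquad \alpha\in(0,1],\ p\in C.
\]
Each $\Phi_{\alpha,p}$ is a $(1-\alpha)$-contraction on $C$ and hence, by Banach's theorem, has a unique fixed point $z_{\alpha,p}\in C$. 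The identity $z_{\alpha,p}-\Phi(z_{\alpha,p})=\alpha\bigl(p-\Phi(z_{\alpha,p})\bigr)$ gives $\|z_{\alpha,p}-\Phi(z_{\alpha,p})\|\le\alpha\,\mathrm{diam}(C)$, so that every accumulation point of $(z_{\alpha,p})$ as $\alpha\downarrow 0$ lies in $F$. Letting $p$ range over the segment $[x_0,y_0]\subset C$ and exploiting the fact that $x_0,y_0\in F$ together with the equality case in $\|z-x_0\|+\|z-y_0\|\ge\|x_0-y_0\|$ applied along orbits of $\Phi$ started on this segment, one expects to produce fixed points of $\Phi$ arbitrarily close to, but distinct from, any prescribed $w$.

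The main obstacle lies precisely in executing this last step. If $\|\cdot\|$ were strictly convex, the triangle-inequality rigidity would force $\Phi$ to fix the entire segment $[x_0,y_0]$, immediately yielding a continuum of fixed points. Without strict convexity, however, $\Phi$ may map $[x_0,y_0]$ into a genuinely thicker ``metric-midpoint'' region, and $F$ need not be convex, nor even a non-expansive retract of $C$ in the Bruck sense. Making the perturbation argument outlined above yield, in this generality, fixed points that approach $w$ without equalling it is the technical heart of the lemma; I would expect to draw here on the structural results for fixed-point sets of non-expansive maps in finite-dimensional normed spaces gratefully attributed, in the paper's acknowledgement, to R.~Nussbaum.
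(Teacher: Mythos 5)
Your treatment of nonemptiness (Brouwer, via continuity of the non-expansive $\Phi$ on the convex compact $C$) and of closedness of the fixed-point set $F$ is fine and matches the paper. The problem is the dichotomy: the step you yourself flag as ``the technical heart'' --- producing, near an arbitrary $w\in F$, fixed points of $\Phi$ distinct from $w$ once two distinct fixed points $x_0\neq y_0$ exist --- is never actually proved. The contractive perturbations $\Phi_{\alpha,p}$ only give points $z_{\alpha,p}$ whose cluster points as $\alpha\downarrow0$ lie in $F$; nothing in your outline controls \emph{where} those cluster points land, so there is no argument that they approach a prescribed $w$ without equalling it, and the appeal to ``equality in the triangle inequality applied along orbits'' is not carried out (and, as you note, the strict-convexity rigidity you would like to invoke is unavailable). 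So the singleton-or-uncountable alternative, which is the substance of the lemma, rests on an announced plan with an acknowledged hole; the Cantor--Bendixson reduction to ``no isolated points'' is correct but idle until that hole is filled.

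The gap can be closed much more directly, and this is what the paper does: given two distinct fixed points $\mathbf{x},\mathbf{y}\in C$, consider for each $\lambda\in(0,1)$ the ``lens'' $C_\lambda=\{w\in C:\|w-\mathbf{x}\|\le\lambda\|\mathbf{y}-\mathbf{x}\|\}\cap\{w\in C:\|w-\mathbf{y}\|\le(1-\lambda)\|\mathbf{y}-\mathbf{x}\|\}$. It is nonempty (it contains $(1-\lambda)\mathbf{x}+\lambda\mathbf{y}$), convex and compact, and non-expansiveness alone gives $\Phi(C_\lambda)\subset C_\lambda$, since $\|\Phi(w)-\mathbf{x}\|=\|\Phi(w)-\Phi(\mathbf{x})\|\le\|w-\mathbf{x}\|$ and likewise for $\mathbf{y}$. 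Brouwer then yields a fixed point in every $C_\lambda$, while the triangle inequality forces $\|w-\mathbf{x}\|=\lambda\|\mathbf{y}-\mathbf{x}\|$ for every $w\in C_\lambda$, so the sets $C_\lambda$, $\lambda\in(0,1)$, are pairwise disjoint; hence $F$ is uncountable. Note that this uses only the convexity of norm balls, so the lack of strict convexity is harmless: one does not need $\Phi$ to fix the segment $[\mathbf{x},\mathbf{y}]$, only to leave each lens invariant, and no Cantor--Bendixson decomposition, perturbation argument, or external structural results are needed.
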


\begin{proof}
Much-refined versions of this result are known (e.g.~\citet{bruck73}). 
For convenience, a short proof is given.
Non-expansiveness implies that $\Phi$ is (1-Lipschitz) continuous. 
The set of fixed points is hence closed, and the Brouwer--Schauder Fixed Point Theorem (e.g.~\citet{rudin91}) 
provides the existence of at least one fixed point.
Assume now that $\mathbf{x}, \mathbf{y}\in C$ are two distinct fixed points of $\Phi$.
For $\mathbf{v}\in C$ and $\varepsilon > 0$, 
$B_{\varepsilon}(\mathbf{v})
=
\{\mathbf{w}\in C: ||\mathbf{w}-\mathbf{v}|| \leq \varepsilon\}$
is a non-empty, convex and compact subset of $C$. For $\lambda\in(0,1)$, the intersection
\begin{equation}
C_{\lambda} = B_{\lambda||\mathbf{y}-\mathbf{x}||}(\mathbf{x}) 
\cap B_{(1-\lambda)||\mathbf{y}-\mathbf{x}||}(\mathbf{y})
\end{equation}
is non-empty (as it contains $(1-\lambda)\mathbf{x}+\lambda\mathbf{y}$)), convex 
and compact, and it contains neither $\mathbf{x}$, nor $\mathbf{y}$.
By the triangle inequality, $C_{\lambda_1}\cap C_{\lambda_2}=\emptyset$ for $\lambda_1 \neq \lambda_2$.
Non-expansiveness implies that $\Phi(C_{\lambda})\subset C_{\lambda}$.
By Brouwer--Schauder, there
exists a fixed point of $\Phi$ in $C_{\lambda}$. Hence there exist uncountably many
fixed points of $\Phi$ in $C$.
\end{proof}

\begin{lemma}
Let $\bfM\in\R^{n\times n}$ be an ownership matrix that has the Elsinger Property. 
Then $\rho(\bfM) < 1$, where
\begin{equation}
\rho(\bfM) = \max\{|\lambda_i|: \text{$\lambda_i$ eigenvalue of $\bfM$}\}
\end{equation}
is the spectral radius of $\bfM$.
\end{lemma}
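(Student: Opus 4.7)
The plan is to combine two observations: first, that left substochasticity immediately gives $\rho(\bfM)\le 1$, and second, that having $\rho(\bfM)=1$ would force the existence of a ``self-sustaining'' index set that is ruled out by the Elsinger Property. The work is thus to reach a contradiction from $\rho(\bfM)=1$.

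First I would note that since $\bfM$ has nonnegative entries and every column sum is at most $1$, we have $\|\bfM\|_1\le 1$, and therefore $\rho(\bfM)\le\|\bfM\|_1\le 1$. So only the case $\rho(\bfM)=1$ needs ruling out, and I would assume this for contradiction. By the Perron--Frobenius theorem for nonnegative matrices, $\rho(\bfM)$ is itself an eigenvalue, and it has a nonnegative eigenvector: there exists $\bfv\in\R^n$ with $\bfv\ge\bzero_n$, $\bfv\ne\bzero_n$, and $\bfM\bfv=\bfv$. (The eigenvector need not be strictly positive because $\bfM$ need not be irreducible, which is precisely what makes the next step informative.)

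Next I would introduce the support $\mathcal J=\{j\in\mathcal N: v_j>0\}$, which is nonempty. Reading off the $i$-th component of $\bfM\bfv=\bfv$ and summing over $i\in\mathcal J$ gives
\begin{equation}
\sum_{i\in\mathcal J} v_i \;=\; \sum_{i\in\mathcal J}\sum_{j=1}^n M_{ij}v_j \;=\; \sum_{j\in\mathcal J} v_j \sum_{i\in\mathcal J} M_{ij},
\end{equation}
where in the last step only $j\in\mathcal J$ contribute because $v_j=0$ for $j\notin\mathcal J$. Rearranging yields $\sum_{j\in\mathcal J} v_j\bigl(1-\sum_{i\in\mathcal J} M_{ij}\bigr)=0$. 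Since $v_j>0$ for $j\in\mathcal J$ and each inner bracket is nonnegative (left substochasticity), every bracket must vanish: $\sum_{i\in\mathcal J} M_{ij}=1$ for all $j\in\mathcal J$. This directly contradicts the Elsinger Property of $\bfM$ (the nonempty set $\mathcal J$ is exactly the forbidden type), so $\rho(\bfM)<1$.

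The main obstacle is purely conceptual: recognizing that Perron--Frobenius supplies a nonnegative (not necessarily strictly positive) eigenvector for the spectral radius of any nonnegative matrix, and that the correct reformulation of ``$\rho(\bfM)=1$'' in terms of column sums requires summing the eigenvector equation only over the support $\mathcal J$. Once the support restriction is used to collapse the double sum and left substochasticity is applied, the computation is mechanical and the Elsinger Property closes the contradiction immediately.
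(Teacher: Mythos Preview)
Your proof is correct, but it takes a different route from the paper. The paper's argument is shorter: it cites Elsinger's lemma that the Elsinger Property is equivalent to invertibility of $\bfI_n-\bfM$, and then simply observes that if $\rho(\bfM)=1$ then (implicitly by Perron--Frobenius, so that $1$ is a genuine eigenvalue) there is a nonzero $\bfv$ with $(\bfI_n-\bfM)\bfv=\bzero_n$, contradicting invertibility. You instead bypass that external equivalence entirely: invoking Perron--Frobenius to get a \emph{nonnegative} eigenvector and working with its support $\mathcal J$, you directly exhibit the forbidden index set and obtain the contradiction from the definition of the Elsinger Property itself. Your approach is more self-contained and more informative (it shows exactly which subset violates the property), while the paper's is quicker once the invertibility characterization is available; in effect, your computation re-derives one direction of that characterization on the fly.
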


\begin{proof}
A well known result (cf. \cite{rudin91}) is that $\rho(\bfM)\le\|\bfM\|\le 1$. 
In case of $\|\bfM\|<1$ there is nothing to show, so we assume that $\|\bfM\|=1$ 
which is no contradiction to the Elsinger Property of $\bfM$.  
We will show the claim by contradiction. 
To this end, assume that $\rho(\bfM)=1$. 
For the corresponding eigenvalue $\bfv$ is must hold that $\bfv\not=\bzero$ and $\bfM\bfv=\rho(\bfM)\bfv=\bfv$. 
We can formulate this equation alternatively as
\begin{equation}\label{eq:ev_zero}
(\bfI_n-\bfM)\bfv = \bzero_n.
\end{equation}
Since $\bfM$ has the Elsinger Property, it follows by \citet{elsinger09}, Lemma 1, that $(\bfI_n-\bfM)$ is invertible. 
But that means that there exists no vector $\bfv\not=\bzero$ such that \eqref{eq:ev_zero} is true. 
Hence, $\bfv=\bzero$ which is a contradiction and from which follows that $\rho(\bfM)<1$.
\end{proof}

\begin{lemma}\label{lem:inv_xos-mat}
Let $\bfM\in\R^{n\times n}$ be an ownership matrix that has the Elsinger property and for which $\rho(\bfM)<1$.  
Then $(\bfI_n - \bfM)^{-1}$ exists and can be obtained via the \emph{Neumann expansion}:
\begin{equation}
(\bfI_n - \bfM)^{-1} = \sum_{n=0}^{\infty} \bfM^n,
\end{equation}
where $\bfM^0 = \bfI_n$. 
Consequently, the diagonal entries of $(\bfI_n - \bfM)^{-1}$ are greater than or equal to 1 
and the other entries are all non-negative. 
\end{lemma}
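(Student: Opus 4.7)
The plan is to establish the three claims in sequence: invertibility of $\bfI_n - \bfM$, the Neumann series representation, and the sign/magnitude statements on the entries.

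First I would handle invertibility. Since $\rho(\bfM) < 1$, the value $1$ cannot be an eigenvalue of $\bfM$. Therefore $\bfI_n - \bfM$ has no zero eigenvalue, and so it is invertible. Alternatively, one can simply invoke the preceding lemma (which combined with the Elsinger property has already ensured $\rho(\bfM)<1$) together with the result of \citet{elsinger09} that already gives invertibility; either route is immediate.

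Next I would establish the Neumann expansion. The standard trick is to note that by Gelfand's formula, $\rho(\bfM) = \lim_{k\to\infty} \|\bfM^k\|^{1/k}$, so for any $\rho(\bfM) < q < 1$ there is a constant $C$ with $\|\bfM^k\| \le C q^k$ for all $k \ge 0$. Consequently, the partial sums $S_N := \sum_{k=0}^N \bfM^k$ form a Cauchy sequence in the Banach space of $n\times n$ matrices and hence converge to some matrix $S$. Using the telescoping identity
\begin{equation}
(\bfI_n - \bfM)\, S_N = \bfI_n - \bfM^{N+1},
\end{equation}
and noting that $\|\bfM^{N+1}\| \to 0$, passing to the limit $N \to \infty$ gives $(\bfI_n - \bfM)\, S = \bfI_n$. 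Combined with invertibility from step one, this forces $S = (\bfI_n - \bfM)^{-1}$.

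Finally, the sign statement follows directly. Since $\bfM$ is an ownership matrix, its entries satisfy $M_{ij} \ge 0$, so by induction each power $\bfM^k$ has non-negative entries, $k \ge 1$. Adding the zeroth term $\bfM^0 = \bfI_n$, whose diagonal is $1$ and whose off-diagonal is $0$, the series $\sum_{k=0}^\infty \bfM^k$ therefore has all entries non-negative and diagonal entries at least $1$.

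The only even mildly delicate step is the Neumann convergence in the regime where $\|\bfM\|$ may equal $1$ (which is allowed here, since the Elsinger property does not preclude this). The point is that the spectral radius, not the operator norm, controls convergence of $\sum \bfM^k$, and Gelfand's formula gives geometric decay of $\|\bfM^k\|$ as soon as $\rho(\bfM) < 1$. That observation is the main (and only) obstacle to making the proof completely routine.
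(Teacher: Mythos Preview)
Your proof is correct. The paper's own proof of this lemma consists solely of the citation ``See \cite{rudin91}'', so your self-contained argument via Gelfand's formula and the telescoping identity is simply a spelled-out version of the standard Neumann series result the paper invokes by reference; there is no substantive difference in approach.
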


\begin{proof}
See \cite{rudin91}. 
\end{proof}

\noindent\emph{Proof} of \textbf{THEOREM \ref{theo:unique_fp}}:
A proof of Theorem \ref{theo:unique_fp} is necessary because related proofs in \citet{suzuki02}, \citet{gourieroux12} and \citet{fischer14} 
rely on stronger matrix conditions than the Elsinger Property, while \citet{elsinger09} considers an
equation system which slightly differs from \eqref{eq:liq_eq_debt} and \eqref{eq:liq_eq_equity}. 
First, note that \eqref{eq:liq_eq_debt} and \eqref{eq:liq_eq_equity} can only have non-negative solutions. 
This is shown in Lemma 3.5 of \citet{fischer14} under stricter matrix conditions, but because of
Lemma \ref{lem:inv_xos-mat} of this paper, it is straightforward to see that the proof works in the same manner under the Elsinger Property. 
The interval $[\Rsmall,\Rgreat]$ is convex and compact, and $\Phi(\bfR)$ is continuous in $\bfR$. 
By Lemma \ref{lem:Phi_self-mapping} and the Brouwer-Schauder Fixed Point Theorem, it follows that at least one solution exists.
Furthermore, $\Phi$ as in Eq.~\eqref{eq:Phi} is a non-expansive mapping. 
This follows from Lemma 4.1 of \citet{fischer14}, where a strict contraction property is shown under stricter matrix conditions, 
but again it is straightforward to see how the corresponding proof implies non-expansiveness under the Elsinger Property for all ownership matrices. 
Since it follows from Proposition \ref{prop:def_set_sol} that there can be a maximum of $2^n$ possible solutions of 
\eqref{eq:liq_eq_debt} and \eqref{eq:liq_eq_equity}, uniqueness follows from Lemma \ref{lem:Phi_self-mapping} 
and Lemma \ref{non-expansive_infinite} in the Appendix.
\qed

\begin{lemma}\label{lem:inv_aux}
Let $\bfM\in\R^{n\times n}$ be an ownership matrix as in Lemma \ref{lem:inv_xos-mat}, such that $\mathcal N_0\subset \mathcal N$, and 
the matrix $\bfL\in\R^{n\times n}$ be defined as
\begin{equation}
\left(\bfL\right)_{ij} = \begin{cases} 1 ,& \text{if $i=j$ and $i\in \mathcal N_0$,} \\
                                       0 ,& \text{else.} \end{cases}               
\end{equation}
Then it holds that
\begin{equation}
(\bfI_n - \bfL\bfM\bfL)^{-1} \bfL \le \bfL(\bfI_n - \bfM)^{-1}\bfL.
\end{equation}
\end{lemma}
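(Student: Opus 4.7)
The plan is to exploit the Neumann expansion from Lemma \ref{lem:inv_xos-mat} on both sides of the desired inequality, using two elementary facts about $\bfL$: it is a $0/1$ diagonal matrix so $\bfL^2=\bfL$ and $\bfL\le\bfI_n$ entry-wise, and it is non-negative.

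First I would check that $(\bfI_n-\bfL\bfM\bfL)^{-1}$ actually exists, so that the Neumann expansion applies. Since $\bfM\ge 0$ and $\bfL\le\bfI_n$, one gets $\bfL\bfM\bfL\le\bfM$ entry-wise; for nonnegative matrices this yields $\rho(\bfL\bfM\bfL)\le\rho(\bfM)<1$ by the standard Perron--Frobenius monotonicity of the spectral radius. Hence the Neumann series for $(\bfI_n-\bfL\bfM\bfL)^{-1}$ converges. Using $\bfL^2=\bfL$, a short computation shows $(\bfL\bfM\bfL)^k=\bfL(\bfM\bfL)^k$ and $(\bfM\bfL)^k\bfL=(\bfM\bfL)^k$ for every $k\ge 1$, from which
\begin{equation}
(\bfI_n-\bfL\bfM\bfL)^{-1}\bfL \;=\; \bfL + \sum_{k=1}^{\infty}\bfL(\bfM\bfL)^k\bfL.
\end{equation}
On the other hand, Lemma \ref{lem:inv_xos-mat} gives directly
\begin{equation}
\bfL(\bfI_n-\bfM)^{-1}\bfL \;=\; \bfL + \sum_{k=1}^{\infty}\bfL\bfM^k\bfL.
\end{equation}

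The claim therefore reduces to the term-by-term inequality $\bfL(\bfM\bfL)^k\bfL\le\bfL\bfM^k\bfL$ for all $k\ge 1$. For this it suffices to prove $(\bfM\bfL)^k\le\bfM^k$, since multiplying by the non-negative matrix $\bfL$ on either side preserves entry-wise inequalities. The inequality $(\bfM\bfL)^k\le\bfM^k$ is itself a short induction: the base case $k=1$ is $\bfM\bfL\le\bfM$ (because $\bfM\ge 0$ and $\bfL\le\bfI_n$), and the induction step uses
\begin{equation}
(\bfM\bfL)^{k+1} \;=\; (\bfM\bfL)^k\,\bfM\bfL \;\le\; \bfM^k\,\bfM\bfL \;\le\; \bfM^k\,\bfM \;=\; \bfM^{k+1},
\end{equation}
where both inequalities rely on the non-negativity of $\bfM^k$ and $\bfM\bfL$.

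I do not expect any serious obstacle; the only point that requires a bit of care is the bookkeeping with $\bfL^2=\bfL$ to bring the two Neumann expansions into the comparable form $\bfL+\sum_k\bfL(\cdot)^k\bfL$. Once the series are lined up this way, the inequality is immediate from the non-negativity of all the matrices involved.
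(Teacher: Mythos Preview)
Your proposal is correct and follows essentially the same route as the paper: both proofs expand $(\bfI_n-\bfL\bfM\bfL)^{-1}\bfL$ and $\bfL(\bfI_n-\bfM)^{-1}\bfL$ via the Neumann series from Lemma~\ref{lem:inv_xos-mat}, use $\bfL^2=\bfL$ to rewrite the left-hand side as $\bfL+\bfL(\bfM+\bfM\bfL\bfM+\bfM\bfL\bfM\bfL\bfM+\ldots)\bfL$, and then compare term by term using $\bfL\le\bfI_n$ and non-negativity to get $\bfM\bfL\bfM\le\bfM^2$, etc. Your version is slightly more explicit in justifying $\rho(\bfL\bfM\bfL)<1$ and in writing out the induction for $(\bfM\bfL)^k\le\bfM^k$, but the argument is the same.
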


\begin{proof} 
Note that 
\begin{equation}
(\bfI_n-\bfL)^k = (\bfI_n-\bfL) \quad \text{for $k\in\N$}
\end{equation} 
and that $\bfM^0 =(\bfI_n-\bfL)^0 = \bfI_n$. 
Using Lemma \ref{lem:inv_xos-mat} we have that
\begin{equation}
\begin{split}
(\bfI_n - \bfL\bfM\bfL)^{-1} \bfL &= \left(\sum_{n=0}^{\infty}(\bfL\bfM\bfL)^n\right)\bfL \\
 &= (\bfI_n + \bfL\bfM\bfL + \bfL\bfM\bfL\bfM\bfL + \bfL\bfM\bfL\bfM\bfL\bfM\bfL + \ldots)\bfL \\
 &= \bfL + \bfL(\bfM + \underbrace{\bfM\bfL\bfM}_{\le \bfM^2} + \underbrace{\bfM\bfL\bfM\bfL\bfM}_{\le \bfM^3} + \ldots)\bfL \\
 &\le \bfL + \bfL\left(\sum_{n=1}^{\infty}\bfM^n\right)\bfL \\
 &= \bfL\left(\sum_{n=0}^{\infty}\bfM^n\right)\bfL \\
 &= \bfL(\bfI_n - \bfM)^{-1}\bfL.
\end{split}
\end{equation}
\end{proof}

\begin{lemma}\label{lem:boderline_as}
Let the random variable $\bfa$ have a have a density with respect to the Lebesgue measure on $(\R_0^+)^n$. 
The set of all $\mathbf{a}$ for which the system solution contains at least one borderline firm, 
i.e. one $i\in\mathcal N$ such that $r_i^* = d_i$ and $s_i^* = 0$, then has measure zero.
\end{lemma}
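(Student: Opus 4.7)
The plan is to show that the set of $\bfa\in(\R_0^+)^n$ producing a borderline firm is contained in a finite union of affine hyperplanes, and hence has Lebesgue measure zero; the conclusion then follows from absolute continuity of the law of $\bfa$.

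First, I would fix an arbitrary candidate default set $D\subseteq\mathcal N$ with corresponding default matrix $\bfL$, and recall from Definition \ref{def:sys_sol} and Proposition \ref{prop:def_set_sol} that the pseudo solution attached to $D$ is obtained by solving $\bfA\bfx=\bfb$ with $\bfA=\bfI_n-(\bfMd\bfL+\bfMs(\bfI_n-\bfL))$ and $\bfb=\bfa+\bfMd(\bfI_n-\bfL)\bfd-(\bfI_n-\bfL)\bfd$. Crucially, $\bfA$ does not depend on $\bfa$, while $\bfb$ depends on $\bfa$ affinely, in fact as $\bfb=\bfa+\bfc(D)$ for a vector $\bfc(D)$ independent of $\bfa$. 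Therefore, for the unique fixed point $\bfR^*$ (which by Proposition \ref{prop:def_set_sol} coincides with the pseudo solution associated with the realized default set $D^*$), the equity entry of firm $i\notin D^*$ is given by
\begin{equation}
s_i^*=\bigl(\bfA^{-1}(\bfa+\bfc(D^*))\bigr)_i,
\end{equation}
which is an \emph{affine} function of $\bfa$ on the set $\{\bfa:D(\bfR^*(\bfa))=D^*\}$.

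The decisive observation is that under Assumption \ref{assu:os_mat_norm} the matrix $\bfMd\bfL+\bfMs(\bfI_n-\bfL)$ has norm strictly less than $1$, so by Lemma \ref{lem:inv_xos-mat} its Neumann expansion converges, giving $\bfA^{-1}\ge\bfI_n$ entrywise; in particular $(\bfA^{-1})_{ii}\ge 1>0$. Consequently the coefficient of $a_i$ in the affine map $\bfa\mapsto s_i^*(\bfa)$ is strictly positive, so the equation $s_i^*=0$ cuts out a genuine affine hyperplane $H_{D,i}\subset\R^n$ of codimension one.

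Finally, firm $i$ is borderline precisely when $i\notin D^*$ and $s_i^*=0$, so the set of ``bad'' $\bfa$ is contained in
\begin{equation}
\bigcup_{D\subseteq\mathcal N}\ \bigcup_{i\notin D} H_{D,i},
\end{equation}
a finite union of hyperplanes and therefore a Lebesgue null set in $\R^n$. Since $\bfa$ has a density with respect to Lebesgue measure on $(\R_0^+)^n$, this bad set has probability zero, which is the claim. The only subtle point, and the one deserving the most care in the write-up, is the non-vanishing of the coefficient $(\bfA^{-1})_{ii}$; this is where Assumption \ref{assu:os_mat_norm} rather than the weaker Elsinger Property is genuinely needed, so that the Neumann-series estimate $(\bfA^{-1})_{ii}\ge 1$ applies uniformly over all $2^n$ choices of $\bfL$.
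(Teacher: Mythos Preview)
Your argument is correct, and it takes a genuinely different route from the paper's own proof.

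The paper does not exploit the explicit linear structure of the pseudo solution. Instead, it fixes a subset $I\subset\mathcal N$ and considers the set $A(I)$ of all $\bfa$ for which every $i\in I$ is borderline. It first argues that $A(I)$ is Borel (via the measurability of $\bfa\mapsto\bfR^*(\bfa)$ established in \citet{fischer14}), then uses a monotonicity argument: from $\bfr^*+\bfs^*=\bfa+\bfMd\bfr^*+\bfMs\bfs^*$ together with $\bfR^*(\bfa_2)\ge\bfR^*(\bfa_1)$ whenever $\bfa_2\gg\bfa_1$, one gets that $r_i^*+s_i^*$ is strictly increasing along the diagonal direction $(1,\ldots,1)^t$. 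Hence no two points of $A(I)$ can be strictly ordered, so $A(I)$ meets every line parallel to $(1,\ldots,1)^t$ in at most one point; a Fubini-type lemma (after a rotation) then gives $\lambda_n(A(I))=0$.

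Your approach is more algebraic and arguably more elementary: it avoids the external measurability result, the Pareto-set geometry, and the rotation/Fubini step, replacing all of this by the single observation that, conditional on the realized default set, $s_i^*$ is an affine function of $\bfa$ whose $a_i$-coefficient is $(\bfA^{-1})_{ii}\ge 1$. The price you pay is that your argument leans explicitly on Assumption~\ref{assu:os_mat_norm} so that $\bfA^{-1}$ exists and admits the Neumann expansion for \emph{every} candidate $\bfL$; the paper's monotonicity argument, by contrast, does not invoke this invertibility and would in principle go through under the weaker Elsinger Property. Since the lemma is only used in Section~\ref{sec:def-set-algo}, where Assumption~\ref{assu:os_mat_norm} is already in force, this is not a limitation in context. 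A minor point for the write-up: you need not worry about measurability of the bad set, since it is contained in a finite union of closed hyperplanes and the density hypothesis makes that sufficient.
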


\begin{proof}
First, note that it suffices to show the claim for a set $A(I)$ of all $\mathbf{a}$ for which $r_i^*=d_i$ and $s_i^*=0$ 
for each $i \in I \subset \mathcal N$, since the number of subsets of $\{1,...,n\}$ is finite 
and a finite union of sets of Lebesgue measure zero has Lebesgue measure zero.
We first show that $A(I)$ is a Borel set and hence Lebesgue measurable. 
For this, note that it is shown in \citet{fischer14} that the mapping $\Psi : \mathbf{a} \mapsto \mathbf{R}^*(\mathbf{a})$
that maps any price vector of the exogenous assets onto the corresponding solution of 
\eqref{eq:liq_eq_debt} and \eqref{eq:liq_eq_equity} is Borel measurable. 
Let now $H(I)$ denote the $2(n-|I|)$-dimensional hyperplane in $\mathbb{R}^{2n}$ for which
\begin{equation}
H(I) = \{(\bfr^t, \bfs^t)^t\in \R^{2n}: r_i = d \text{ and } s_i=0 \text{ for all } i\in I\} .
\end{equation}
Clearly, $H(I)$ is a Borel set. One obtains
\begin{equation}
A(I) = \Psi^{-1}(H(I) \cap (\mathbb{R}^+_0)^{2n}) ,
\end{equation}
which must be Borel-measurable, too. 
Observe now that if $\mathbf{a}_2 \gg \mathbf{a}_1$ ($\mathbf{a}_2$ strictly larger than $\mathbf{a}_1$ in all components), 
then $\Phi^n_{\mathbf{a}_2}(\mathbf{R}) \ge \Phi^n_{\mathbf{a}_1}(\mathbf{R})$ for any non-negative $\mathbf{R}$. 
Hence, by the Picard Iteration, $\mathbf{R}^*(\mathbf{a }_2) \geq \mathbf{R}^*(\mathbf{a}_1)$. 
From Eq. \eqref{eq:liq_eq_debt} and \eqref{eq:liq_eq_equity}, it follows that $\bfr + \bfs = \bfa + \bfMd\bfr + \bfMs\bfs$ 
(see also \citet{fischer14}). 
Therefore, if $\mathbf{a}_2 \gg \mathbf{a}_1$ , then $r_i^*(\bfa_2) + s_i^*(\bfa_2) > r_i^*(\bfa_1) + s_i^*(\bfa_1)$ for all $i\in\mathcal N$,
which is a contradiction to $\mathbf{a}_1, \mathbf{a}_2 \in A(I)$. 
Thus, since $r_i^*(\bfa) + s_i^*(\bfa) = d_i$ for $\bfa\in A(I)$ and $i\in I$, 
$\mathbf{a}_2 \gg \mathbf{a}_1$ can hold for no pair $\mathbf{a}_1, \mathbf{a}_2 \in A(I)$. 
This means that $A(I)$ bears some resemblance to a Pareto set. 
It follows that the set $A(I)$ intersects any straight line parallel to the vector $(1,\ldots,1)^t\in\R^n$ either once, or not at all. 
As such, and since the Lebesgue measure is rotation invariant, the problem reduces now to the one which is shown 
in Lemma \ref{lem:borel_set_measure_zero}.
\end{proof}

\begin{lemma}\label{lem:borel_set_measure_zero}
Let $Q$ be a Borel set in $\mathbb{R}^n$ such that $|Q_{\omega}|\leq 1$ for any $\omega\in\mathbb{R}^{n-1}$, 
where $Q_{\omega}=\{x\in\mathbb{R}: (x,\omega^t)^t\in Q\}$. 
Then $Q$ has Lebesgue measure zero.
\end{lemma}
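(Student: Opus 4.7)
The plan is to reduce the claim to a single application of Tonelli's theorem. Writing a point of $\mathbb{R}^n$ as $(x,\omega^t)^t$ with $x\in\R$ and $\omega\in\R^{n-1}$, the Lebesgue measure of $Q$ decomposes as
\begin{equation*}
\lambda^n(Q) \;=\; \int_{\R^{n-1}} \lambda^1(Q_\omega) \, d\omega,
\end{equation*}
provided the slices $Q_\omega$ are Lebesgue measurable and $\omega \mapsto \lambda^1(Q_\omega)$ is measurable. Both properties follow automatically from $Q$ being Borel: for fixed $\omega$ the map $x\mapsto (x,\omega^t)^t$ is continuous so its preimage of $Q$ is Borel in $\R$, and the standard Fubini--Tonelli machinery for Borel sets gives the measurability of the slice-measure as a function of $\omega$.

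Once this representation is in place, I would use the hypothesis $|Q_\omega|\le 1$, interpreted as cardinality (consistent with the preceding Lemma, where $A(I)$ was shown to meet each line parallel to $(1,\ldots,1)^t$ in at most one point). Each slice $Q_\omega$ is therefore either empty or a singleton, so $\lambda^1(Q_\omega)=0$ for every $\omega\in\R^{n-1}$. Substituting this into the Tonelli identity yields $\lambda^n(Q)=0$, which is the claim.

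There is essentially no hard step here. The only point that needs a moment of care is the measurability bookkeeping (slice measurability of a Borel set, plus the measurability of $\omega\mapsto\lambda^1(Q_\omega)$), but this is a textbook consequence of Tonelli's theorem applied to $\mathbf{1}_Q$. I would therefore write the proof as a one-line invocation of Tonelli together with the observation that a set of at most one real has Lebesgue measure zero.
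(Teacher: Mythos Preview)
Your proposal is correct and follows essentially the same approach as the paper: both use the product-measure (Fubini--Tonelli) decomposition $\lambda_n(Q)=\int \lambda_1(Q_\omega)\,d\lambda_{n-1}(\omega)$ and conclude from $|Q_\omega|\le 1$ that each slice has one-dimensional measure zero. The only difference is that you spell out the measurability bookkeeping a bit more explicitly, whereas the paper simply cites the definition of product measures.
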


\begin{proof}
Let $\lambda_{m}, m\in\mathbb N$, denote the Lebesgue measure on $\mathbb{R}^{m}$.
For any Borel set $Q$, it follows from the definition of product measures 
(e.g.~\citet{billingsley95}) and from $\lambda_{n}=\lambda_{1}\otimes\lambda_{n-1}$ that
\begin{equation}
\lambda_{n}(Q) = \int \lambda_{1}(Q_\omega) d \lambda_{n-1}(\omega) .
\end{equation}
Since $\lambda_{1}(Q_\omega)=0$, the result follows.
\end{proof}

\begin{table}[htbp]
\centering
\caption{Above: Mean runtime in seconds for every algorithm over all debt and equity integration values ($\nu^{\bfd}$ and $\nu^{\bfd}$)
and system sizes ($n$) grouped by the debt values ($d$). 
Below: Mean runtime in seconds for every algorithm over all equity integration values ($\nu^{\bfs}$), debt values ($d$)
and system sizes ($n$) grouped by the debt integration values ($\nu^{\bfd}$).
}
\label{tab:res_sim_debt_int}
\begin{tabular}{crrrrrrrrrr}\toprule
                              &      & \multicolumn{9}{c}{Algorithm}                                        \\\cline{3-11}
                              &      &   P  &    E  &    H  &   TP  &   TE  &   TH  &   SP  &   SE  &   SH  \\\midrule
\multirow{5}{*}{$d$}          & 1    & 5.43 & 31.21 & 33.87 &  9.51 & 22.71 & 26.78 & 10.94 & 30.90 & 35.17 \\
                              & 1.5  & 5.61 & 35.73 & 36.59 &  9.95 & 25.03 & 28.61 & 11.86 & 38.27 & 39.72 \\
                              & 2    & 5.98 & 36.46 & 34.70 & 10.31 & 25.00 & 27.56 & 12.55 & 40.92 & 39.19 \\
                              & 2.5  & 6.46 & 36.67 & 31.69 & 10.53 & 24.65 & 26.23 & 12.97 & 41.04 & 36.29 \\
                              & 3    & 6.81 & 36.87 & 28.49 & 10.56 & 23.53 & 24.07 & 12.80 & 40.22 & 32.67 \\\midrule
\multirow{7}{*}{$\nu^{\bfd}$} & 0.05 & 4.04 & 17.24 & 20.85 &  8.86 & 13.21 & 18.79 &  9.13 & 15.25 & 20.49 \\
                              & 0.2  & 5.34 & 26.56 & 27.01 &  9.41 & 17.04 & 23.52 & 10.42 & 22.60 & 28.80 \\
                              & 0.35 & 6.38 & 36.29 & 32.89 & 10.14 & 22.30 & 27.12 & 11.95 & 32.59 & 36.41 \\
                              & 0.5  & 7.40 & 47.30 & 40.33 & 11.44 & 30.18 & 31.53 & 13.98 & 47.05 & 44.93 \\
                              & 0.65 & 7.45 & 55.41 & 46.72 & 11.75 & 35.70 & 34.44 & 15.24 & 61.52 & 51.42 \\
                              & 0.8  & 6.21 & 39.27 & 37.51 & 10.23 & 28.01 & 28.20 & 13.31 & 50.73 & 41.69 \\
                              & 0.95 & 5.58 & 25.65 & 26.16 &  9.37 & 22.84 & 22.95 & 11.53 & 38.17 & 32.53 \\\bottomrule
\end{tabular}
\end{table}

\subsection{Additional Tables and Simulation Results}

The notation in the tables in this section is as follows. 
The names of the algorithms in the table are composed out of their iteration type (``P'' for Picard, ``E'' for Elsinger and ``H'' for Hybrid) and 
their direction (``D'' for decreasing, ``I'' for increasing). 
If the prefix ``D'' or ``I'' is omitted, the mean value of the corresponding increasing and decreasing version is shown.
The additional prefix ``T'' denotes the Trial-and-Error version and ``S'' denotes the Sandwich version of the algorithm.

\pagebreak

\begin{table}[htbp]
\centering
\caption{Median of the iteration (calculation) steps for every algorithm over all debt and equity integration values 
($\nu^{\bfd}$ and $\nu^{\bfd}$) and debt values ($d$) grouped by the system size ($n$). 
A calculation step is defined as the solution of a linear equation system, 
which is done for example in Algorithm \ref{alg:equity} in every iteration step. 
An iteration step is defined as the step from the $k$-th iterate $\bfR^k$ to $\bfR^{k+1}$ for $k\ge0$, no matter which algorithm is used.}
\label{tab:res_sim_iter_calc}
\begin{tabular}{lllllll}\toprule
    & \multicolumn{6}{c}{system size $n$}                     \\\cline{2-7}
    & 5     & 10     & 25     & 50      & 100      & 200      \\\midrule
DP  & 6     &  7     &  8     &  8      &   9      &   9      \\
IP  & 8     &  9     & 10     & 10      &  11      &  11      \\
DE  & 4 (5) &  5 (6) &  5 (8) &  5 (9)  &   5 (10) &   6 (12) \\
IE  & 4 (6) &  5 (7) &  6 (9) &  6 (11) &   7 (12) &   7 (13) \\
DH  & 2 (5) &  3 (7) &  3 (8) &  3 (9)  &   3 (11) &   3 (11) \\
IH  & 2 (4) &  3 (4) &  3 (4) &  3 (5)  &   3 (6)  &   3 (8)  \\\midrule
DTP & 2 (1) &  2 (1) &  2 (1) &  3 (1)  &   3 (1)  &   3 (1)  \\
ITP & 2 (1) &  3 (1) &  3 (1) &  4 (1)  &   4 (1)  &   5 (1)  \\
DTE & 1 (3) &  1 (3) &  1 (5) &  1 (5)  &   2 (5)  &   2 (7)  \\
ITE & 1 (3) &  1 (4) &  2 (5) &  2 (7)  &   2 (7)  &   3 (8)  \\
DTH & 1 (4) &  1 (4) &  1 (6) &  1 (6)  &   2 (7)  &   2 (8)  \\
ITH & 1 (3) &  1 (4) &  2 (4) &  2 (5)  &   2 (6)  &   2 (7)  \\\midrule
SP  & 1     &  1     &  2     &  2      &   3      &   3      \\
SE  & 0 (4) &  1 (4) &  1 (6) &  1 (8)  &   2 (8)  &   2 (12) \\
SH  & 0 (4) &  1 (4) &  1 (5) &  1 (6)  &   1 (8)  &   1 (7)  \\\bottomrule
\end{tabular}
\end{table}

\bibliographystyle{plainnat}
\bibliography{literature}

\end{document}